\documentclass[12pt]{article}
\setlength{\voffset}{-.75truein}
\setlength{\textheight}{9truein}
\setlength{\textwidth}{6.9truein}
\setlength{\hoffset}{-.9truein}

\usepackage{amsthm,amsfonts,amsmath, amscd}
\usepackage{color}

\swapnumbers
                              

\pagestyle{myheadings}

\theoremstyle{plain}
\newtheorem{theorem}{THEOREM}[section]
\newtheorem{lemma}[theorem]{LEMMA}
\newtheorem{corollary}[theorem]{COROLLARY}
\newtheorem{proposition}[theorem]{PROPOSITION}

\theoremstyle{definition}
\newtheorem{definition}[theorem]{DEFINITION}
\newtheorem{remark}[theorem]{Remark}
\theoremstyle{remark}

\usepackage{type1cm}         

\usepackage{makeidx}         
\usepackage{graphicx}        
\usepackage{multicol}        
\usepackage[bottom]{footmisc}

\usepackage{newtxtext}       %
\usepackage[varvw]{newtxmath}       




                       \usepackage{amsfonts,amsmath, amscd,dsfont}
\usepackage{mathrsfs}

                              

\newcommand{\R}{{\mathord{\mathbb R}}}
\newcommand{\one}{{\mathds1}}
\newcommand{\C}{{\mathord{\mathbb C}}}

\newcommand{\N}{{\mathord{\mathbb N}}}

\newcommand{\cA}{{\mathord{\cal A}}}

\newcommand{\tr}{{\rm Tr}}
\numberwithin{equation}{section}
\pagestyle{myheadings} \sloppy
\newcommand{\K}{{\mathcal K}}
\newcommand{\un}{{\rm 1\kern -2.5pt l}}

 \newcommand{\Dens}{{\mathfrak S}}
 \newcommand{\cG}{\mathscr{G}}

\newcommand{\cJ}{\mathcal{J}}

\newcommand{\cM}{\mathscr{M}}
\newcommand{\cH}{\mathcal{H}}
\newcommand{\cK}{\mathcal{K}}
\newcommand{\cD}{\mathcal{D}}

\newcommand{\cL}{\mathscr{L}}

\newcommand{\cB}{\mathcal{B}}

\newcommand{\sH}{\mathscr{H}}
\newcommand{\cP}{\mathscr{P}}
\newcommand{\cQ}{\mathscr{Q}}

\newcommand{\ip}[1]{\langle {#1}\rangle}

\newcommand{\cF}{\mathcal{F}}

\DeclareMathOperator{\Ent}{Ent}
\DeclareMathOperator{\dive}{div}
\newcommand{\rmD}{\mathrm{d}}

\newcommand{\eps}{\varepsilon}
\renewcommand{\phi}{\varphi}

\newcommand{\dd}{\; \mathrm{d}}

\newcommand{\fH}{{\mathfrak{H}}}


\begin{document}

\author{Eric Carlen\\ Department of Mathematics, Hill Center,\\ Rutgers University,
110 Frelinghuysen Road
Piscataway NJ 08854-8019 USA}
\title{Dynamics and Quantum Optimal Transport:
Three lectures on quantum entropy and quantum Markov semigroups}

\date{June 19, 2023}

\maketitle

\begin{abstract}
\footnotesize{ 

This document presents the contents of three lectures delivered by the author at the Erd\H{o}s Center School ``Optimal Transport on Quantum Structures'', Septemer 19-23, 2022 in Budapest, Hungary. It presents a fairly self contained account of an active topic of current research, and this account should be accessible to most graduate students, as befits lectures for a school. The main results are known, but there a number of new proofs and some  new results. 
}
\end{abstract}



\tableofcontents


%
%
%
\section{Lecture One: Quantum Entropy Inequalities}

\abstract{\footnotesize Inequalities for quantum entropy and Fisher information play an important role in the theory of quantum Markov semigroups, and the first lecture is devoted to a comprehensive treatment of results that we shall use, including complete proofs. One of the key results is the Data Processing Inequality, which is one of the many equivalent forms of the Strong Subadditivity of Quantum Entropy, first proved by Lieb and Ruskai in 1973.
This inequality says that the quantum relative entropy between any two quantum states  decreases under any {\em quantum operation}; that is, completely positive and trace preserving map.}

\subsection{Introduction}

   The following notation will be used throughout these lectures. $M_n(\C)$ denotes the  $n\times n$ complex matrices, 
$M_n^+(\C)$ the positive semidefinite  $n\times n$ complex matrices, and 
$M_n^{++}(\C)$ the  positive definite  $n\times n$ complex matrices.   Sometimes it is more convenient to refer instead to bounded operators on a 
Hilbert space $\cH$, in which case we use the corresponding notations $\cB(\cH)$, $\cB^+(\cH)$ and $\cB^{++}(\cH)$.  To keep the proofs simple, we shall always assume that $\cH$ is finite dimensional. 
This is the heart of the matter and is the case most relevant to applications in quantum information theory. However, many of the results do extend easily to infinite dimensions. 
In particular, for the tools necessary to generalize  the content of the first lecture to a fairly general von Neumann algebra setting, see Araki's paper \cite{A75}. 
Araki's paper also makes an explicit  connection with the matrix algebra case treated here,

A {\em density matrix} $\rho$ is an element of $M_n^+(\C)$ with unit trace; i.e., $\tr[\rho] =1$. 
The set of all {\em invertible density matrices} is denoted $\mathfrak{S}$, and its closure, the set of all density matrices is $\overline{{\mathfrak S}}$. For a quantum observable represented by a self-adjoint operator $A$, the expected value of the measurement of $A$ with the system in the state $\rho$ is $\tr[\rho A]$.  For a density matrix $\rho$, the {\em von Neumann entropy} $S(\rho)$
is the quantity
$$
S(\rho) := -\tr[\rho \log \rho]\ .
$$
Given two density matrices $\rho$ and $\sigma$, the 
{\em Umegaki relative entropy of $\rho$ with respect to $\sigma$} \cite{U62} is the quantity
\begin{equation}\label{umegaki}
D(\rho ||\sigma)  := \tr[\rho(\log \rho - \log \sigma)]\ .
\end{equation}
This is one of several generalizations of the classical formula for the relative entropy of two  strictly positive probability distribution $p$ and $q$ on $\{1,\dots,n\}$, namely
$$
D(p||q) = \sum_{j=1}^n p_j \log\left(\frac{p_j}{q_j}\right)\ .
$$
There are various natural interpretations of  ``logarithm of $\rho$ divided by $\sigma$'' for two density matrices $\rho,\sigma\in \Dens$.  One, the choice made by Umegaki, is $\log\rho - \log \sigma$. 
Another natural choice is  
${\displaystyle  \log\left(\rho^{1/2}\sigma^{-1}\rho^{1/2}\right)}$.
Making this other choice yields the {\em Belavkin-Stasewski relative entropy} \cite{BS82}
$$
D_{BS}(\rho ||\sigma) := \tr\left[ \rho  \log\left(\rho^{1/2}\sigma^{-1}\rho^{1/2}\right)\right]\ .
$$
There are many other generalizations of the classical formula, some of which lead to meaningful relative entropy functionals that share many key 
properties of the classical relative entropy functional, and others that do not. 
For more on this, see \cite{Do86,CL19}. 

Having pointed out the there are other interesting  functionals that merit the name ``quantum relative entropy'', the rest of these lectures  focus on the Umegaki relative entropy defined in \eqref{umegaki}, and when we refer to quantum relative entropy in what follows, we always mean the Umegaki relative entropy, which has a direct physical meaning, as we now explain.

Loosely speaking, the number of experiments required to distinguish the state $\rho$ from $\sigma$ with an error probability of size 
$\epsilon$  is $-\log \epsilon/D(\rho ||\sigma)$, in a complete analogy with ``hypothesis testing'' in classical mathematical statistics.   See \cite{HP91,JP06,ON00,OP93} for more information and  the introduction to \cite{CFL18} for a brief resume. 
In this sense, the relative entropy is a physically relevant measure of the ``divergence'' of $\rho$ from $\sigma$.  While the relative entropy is not a metric -- $D(\rho ||\sigma)$ is not even symmetric in $\rho$ and $\sigma$ -- it does dominate the square of a natural metric, as expressed by the quantum Pinsker inequality
\begin{equation}\label{Pinsker}
D(\rho||\sigma) \geq \frac12 \tr[|\rho - \sigma|]^2\ .
\end{equation}

The evolution of quantum states, i.e., density matrices, in non-equilibrium quantum statistical mechanics is  given by {\em quantum Markov 
Semigroups}, which are semigroups of {\em quantum operations}, as we now explain.

Let $\Phi:M_m(\C) \to M_n(\C)$ be a linear transformation. $\Phi$ is {\em positive} when $\Phi(A) \geq 0$ for all $A \geq 0$. $\Phi$ is $2$-positive if the block matrix
 $\left[\begin{array}{cc} \Phi(A) & \Phi(B) \\\Phi(C )& \Phi(D)\end{array}\right] \geq 0$ whenever  $\left[\begin{array}{cc} A & B \\C & D\end{array}\right] \geq 0$.  For each integer $k > 2$, the condition of $k$-positivity is defined in the analogous manner, and $\Phi$ is {\em completely} positive if it is $k$ positive for all $k$ \cite{S55}.   For example, it is easy to see that for any $m\times n$ matrix $V$,
 the map $\Phi: X \mapsto V^* XV$ is completely positive, and it follows that for any set $\{V_1,\dots,V_\ell\}$ of $m\times n$ matrices,
\begin{equation}\label{lieb6}
 \Phi: X \mapsto \sum_{j=1}^\ell V_j^* X V_j\ .
\end{equation}
 is completely positive. By a theorem of Kraus \cite{K71} and Choi \cite{Choi72,Choi75}, every completely positive map $\Phi$ from $M_m(\C)$ to $M_n(\C)$ has this form, and in fact, one can always find such a representation with $\ell \leq mn$.  The map $\Phi$ is {\em unital} if it takes the identity to the identity; i.e., $\Phi(I) = I$.  The map $\Phi$ is {\em trace preserving} if $\tr[\Phi(X)] = \tr[X]$ for all $X$.  For each $n$, we equip $M_n(\C)$ with the Hilbert-Schmidt inner product $\langle X,Y\rangle := \tr[X^*Y]$, making it a Hilbert space. For any linear transformation $\Phi:M_m(\C) \to M_n(\C)$, we write $\Phi^\dagger$ to denote the adjoint with respect to the Hilbert-Schmidt inner product. It is easy to see that if $\Phi$ is given by \eqref{lieb6}, then 
 \begin{equation*}
 \Phi^\dagger : X \mapsto \sum_{j=1}^\ell V_j X V_j^*\ .
\end{equation*}
In any case, whether $\Phi$ is completely positive or not, $\Phi$ is unital if and only if $\Phi^\dagger$ is trace preserving. 

Unital completely positive maps $\Phi$ and their completely positive trace preserving duals $\Phi^\dagger$ play a fundamental role in quantum information theory and the quantum theory of open systems \cite{D76}.   Since the state of a finite dimensional quantum system may be identified with a  density matrix 
$\rho\in \overline{{\mathfrak S}}$, a linear map $\Phi^\dagger$ that updates the state of a quantum system must at the very least preserve positivity and the trace. But this is not the only requirement. The condition of complete positivity arises through the consideration of a system relative to its environment, and hence it is completely positive unital maps  that update the observables in the time evolution, and completely positive trace preserving maps that update the states under a partial measurement of a system \cite{K71}.   Completely positive trace preserving maps maps $\Phi^\dagger$  are known  as {\em quantum operations}, and they give the {\em Sch\"odinger picture} of the evolution of a system, with the states being updated. In the dual {\em Heisenberg picture}, the state is unchanged, but the observables are updated by $\Phi$. 
In the literature on quantum Markov semigroups, it is fairly 
standard to use $\Phi$  to refer to the Heisenberg picture and $\Phi^\dagger$ to refer to the Schr\"odinger picture.  We shall generally adhere to this convention here.

 The following example is  fundamental:  Let $m,n\in \N$ and  identify $\C^{mn}$ with the $m$-fold direct sum of $m$ copies of $\C^n$. We may then think of matrices in $M_{mn}(\C)$  as $m\times m$ block matrices each of whose entries is in $M_n(\C)$.  Then, suppressing the $n$ dependence, for each $m$ we define
\begin{equation*}
\Psi_m(X) = \left[\begin{array}{ccc} X & \phantom{X} &  \phantom{X}\\   \phantom{X} & \ddots &  \phantom{X} \\  \phantom{X} &  \phantom{X} & X\end{array}\right]\ ,
\end{equation*}
where the matrix on the right is the $m\times m$ block diagonal  matrix each of whose diagonal entries is $X$.  This is evidently completely positive and  unital.   Its adjoint, $\Psi_m^\dagger$, is therefore trace preserving and completely positive. It is easy to see that 
\begin{equation*}
\Psi^\dagger_m\left(\left[\begin{array}{ccc} 
X_{1,1} & \cdots & X_{1,m}\\ \vdots & \ddots &\vdots\\
X_{m,1} & \cdots & X_{m,m}\end{array}\right]\right) = \sum_{j=1}^m X_{j,j}\ ,
\end{equation*} and this operation is known as the {\em partial trace}.   
Some of the theorems that follow involve negative powers. When $\Phi:M_n(\C) \to M_m(\C)$ is unital and completely positive, $\Phi(A)\in M_m^{++}(\C)$ whenever $A\in M_n^{++}(\C)$.  However, the image of $M_m^{++}(\C)$ under the completely positive trace preserving map $\Phi^\dagger$  may lie entirely in $M_n^{+}(\C) \backslash M_n^{++}(\C)$.  For example, take $n=2m$ and think of elements of $M_n(\C)$ as $2\times 2$ block matrices with entries in $M_m(\C)$. Then  define
$$
\Phi\left(\left[\begin{array}{cc} A & B\\ C & D\end{array}\right]\right) = A  \qquad{\rm so \ that}\qquad \Phi^\dagger(A) =  \left[\begin{array}{cc} A & 0\\ 0 & 0\end{array}\right]\ .
$$

A fundamental property of the quantum relative entropy is that it decreases under any quantum operation. This is the {\em Data Processing Inequality} (DPI) proved by Lindblad \cite{Lind75} building on earlier work \cite{Lind73,Lind74}, and it says that for any two density matrices $\rho$ and $\sigma$, and any quantum operation $\Phi^\dagger$,
\begin{equation}\label{DPI}  
D(\Phi^\dagger\rho || \Phi^\dagger \sigma) \leq D(\rho ||\sigma)\ .
\end{equation}
Roughly speaking, performing any quantum operation on $\rho$ and $\sigma$ can only make then harder to distinguish. This inequality is one of the cornerstones of quantum information theory. 

We now turn to semigroups of quantum operations, and make our first connection relating them and quantum entropy. 

\begin{definition} A {\em Quantum Markov Semigroup} (QMS) is a semigroup $\{\cP_t\}_{t\geq 0}$ of completely positive unital maps on $M_n(\C)$. Its dual 
$\{\cP_t^\dagger\}_{t\geq 0}$ is a semigroup of completely positive trace preserving maps. Such semigroups are called {\em Quantum Dynamical Semigroups}. 
\end{definition} 

For any QMS $\{\cP_t\}_{t\geq 0}$, and any $t>0$ since $\cP_t\one = \one$,  $1$ is an eigenvalue of each $\cP_t$, and hence each $\cP_t^\dagger$. 
Therefore, there will be at least one invariant density matrix $\sigma$ for the dynamical semigroup $\{\cP_t\}_{t\geq 0}$; i.e., a density matrix
$\sigma$ such that $\cP_t\sigma = \sigma$ for all $t\geq 0$. We will generally be interested in the case in which there is exactly one such state $\sigma$; i.e., the case in which the multiplicity of the eigenvalue $1$ of $\cP_t^\dagger$ is $1$.  In this case, the QMS is said to be {\em ergodic}.

Consider such a quantum dynamical semigroup $\{\cP_t^\dagger\}_{t\geq 0}$ with invariant state $\sigma$. Then by the invariance of $\sigma$ and the DPI, for any density matrix $\rho$.
$$
D(\cP_t^\dagger \rho || \sigma) = D(\cP_t^\dagger \rho || \cP_t^\dagger \sigma)  \leq D(\rho ||\sigma)\ .
$$
That is, in complete generality, the function
$$
t \mapsto  D(\cP_t^\dagger \rho || \sigma)
$$
is monotone decreasing. The following list of questions motivate the discussion that follows:

\bigskip
\noindent{\it (1)} Under what circumstances do we have $\lim_{t\to\infty}D(\cP_t^\dagger \rho || \sigma) =0$, and when this is the case, when is the convergence exponential and at what  exponential rate? (Note that by the Pinsker Inequality \eqref{Pinsker}, when $D(\cP_t^\dagger \rho || \sigma) $ decreases at an exponential rate, then $\tr[|\cP_t^\dagger \rho - \sigma|]$ decreases at one-half this exponential rate.) 

\bigskip
\noindent{\it (2)}  Under what circumstances can we write the evolution $\rho \mapsto \cP_t^\dagger \rho$ as gradient for for the relative entropy with respect to some metric on the space $\mathfrak{S}$ of invertible density matrices?

\bigskip
\noindent{\it (3)}  When it is possible to write the evolution $\rho \mapsto \cP_t^\dagger \rho$ as gradient for the relative entropy with respect to some metric on the space $\mathfrak{S}$ of invertible density matrices,
under what circumstances can we explicitly find such a metric  on $\mathfrak{S}$ and relate the rate at which $\lim_{t\to\infty}D(\cP_t^\dagger \rho || \sigma) =0$ to geometric properties of the metric? 

\bigskip

Throughout this investigation, a number of inequalities pertaining to quantum entropy play a fundamental role.  We begin the discussion proper with a self-contained presentation of these, with complete proofs.

\subsection{Convexity theorems and monotonicity theorems}

The DPI \eqref{DPI} is a prototypical monotonicty theorem: It says that the relative entropy is monotone decreasing under the application of any completely positive trace preserving map. This is closely connected with the fact that 
$$
(\rho,\sigma) \mapsto D(\rho ||\sigma)
$$
is convex; this is the {\em joint convexity of the relative entropy}.  To see that this is a simple consequence of the DPI, consider the map  $\Phi: M_n(\C) \to M_{2n}(\C)$ be defined by 
\begin{equation*}
\Phi(H) = \left[\begin{array}{cc} H & 0\\ 0 & H\end{array}\right] \quad{\rm and\ hence}\quad 
\Phi^\dagger\left(\left[\begin{array}{cc} Y_{1} & 0\\0 & Y_{2}\end{array}\right]\right) = Y_{1}+Y_{2}\ . 
\end{equation*}
 These are evidently completely positive and $\Phi^\dagger$ is evidently trace preserving. Now consider two pairs of density matrices $\{\rho_1,\rho_2\}$ and $\{\sigma_1,\sigma_2\}$ and define
 $$
 \rho := \frac12\left[\begin{array}{cc} \rho_1 & 0 \\0 & \rho_2\end{array}\right] \qquad{\rm and}\qquad 
 \sigma := \frac12\left[\begin{array}{cc} \sigma_1 & 0 \\0 & \sigma_2\end{array}\right] \ .
 $$
 Then the DPI says
 $$D(\rho||\sigma) \geq D\left(\frac{\rho_1+\rho_2}{2}\big|\big|\frac{\sigma_1+\sigma_2}{2}\right)\ ,$$
 but clearly
 $$D(\rho||\sigma) = \frac12 D(\rho_1||\sigma_1) + \frac12 D(\rho_2||\sigma_2)\ .$$
 Therefore
 $$\frac12 D(\rho_1||\sigma_1) + \frac12 D(\rho_2||\sigma_2) \geq  D\left(\frac{\rho_1+\rho_2}{2}\big|\big|\frac{\sigma_1+\sigma_2}{2}\right)\ .$$
 By continuity, midpoint convexity implies convexity in general, and this proves that the joint convexity is a consequence of the DPI. 
 
 It turns out that the DPI is also a consequence of the joint convexity of the relative entropy. Historically, this is the way the DPI was first proved \cite{Lind75}: 
 The joint convexity of the relative entropy is a simple consequence of the Lieb Concavity Theorem \cite{L73}, and from here, Lindblad used the 
 structure theory of completely positive maps given by the Stinespring Dilation Theorem to prove the DPI. See \cite{C22} for more information.

Many well-known concavity or convexity theorems for matrices have an equivalent expression as a monotonicity theorem, just as we 
have explained in regard to the joint convexity of the relative entropy and the DPI.  It turns out that the monotonicity formulations have a number of advantages:

\medskip
\noindent{\it (1)} It is often simpler to give a direct proof of the monotonicity theorem than a direct proof of the convexity/concavity theorem but the former always implies the latter via consideration of the partial trace.

\medskip
\noindent{\it (2)} The monotonicity theorem often has direct physical significance, as in the case of the DPI.

\medskip
\noindent{\it (3)} The monotonicity theorem often holds for a wider class of positive maps than quantum operations. A case in point is the DPI which 
was recently shown by M\"uller-Hermes and Reeb \cite{MHR17}  to be valid for all positive trace preserving maps.

The  fundamental monotonicity  inequalities that we will prove here are the monotonicity versions of the first three convexity theorems in a fundamental 1973 paper of Lieb \cite{L73}:

\begin{theorem}\label{L1M}   For all completely positive unital maps $\Phi: M_n(\C)\to M_m(\C)$,  all
$0 \leq t \leq 1$, all $m,n\in \N$, all $X,Y \in M_m^+(\C)$, all $K\in M_n(\C)$,  
 \begin{equation}\label{lieb21}
 \tr[\Phi(K^*)Y^{1-t} \Phi(K) X^t]  \leq  \tr[K^*\Phi^\dagger(Y)^{1-t} K \Phi^\dagger(X)^t] \ .
\end{equation}
\end{theorem}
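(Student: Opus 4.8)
My plan is to recast \eqref{lieb21} as a single Loewner inequality between positive operators on the Hilbert space $M_n(\C)$ (with the Hilbert--Schmidt inner product), and then to prove that inequality by combining the transformer inequality for the weighted geometric mean with the Kadison--Schwarz inequality. By continuity it suffices to treat $0<t<1$ and $X,Y\in M_m^{++}(\C)$. Recall that a completely positive map is $*$-preserving, so $\Phi(K^*)=\Phi(K)^*$. For $A\in M_m^+(\C)$ write $L_A$ and $R_A$ for left and right multiplication by $A$ on $M_m(\C)$; these are commuting positive operators for the Hilbert--Schmidt inner product, with $L_A^s=L_{A^s}$ and $R_A^s=R_{A^s}$. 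Setting $\kappa=\Phi(K)$, the left side of \eqref{lieb21} equals $\langle\kappa,\,L_Y^{1-t}R_X^{t}\,\kappa\rangle=\langle K,\,\Phi^\dagger\big(L_Y^{1-t}R_X^{t}\big)\Phi\,K\rangle$ and the right side equals $\langle K,\,L_{\Phi^\dagger(Y)}^{1-t}R_{\Phi^\dagger(X)}^{t}\,K\rangle$, where $\Phi$ and $\Phi^\dagger$ are viewed as mutually adjoint Hilbert--Schmidt operators. Since $L_Y$ and $R_X$ commute, $L_Y^{1-t}R_X^{t}=L_Y\#_t R_X$ is precisely their $t$-weighted geometric mean, and likewise on the right. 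Thus \eqref{lieb21}, holding for every $K$, is equivalent to the operator inequality
$$\Phi^\dagger\big(L_Y\#_t R_X\big)\Phi\ \preceq\ L_{\Phi^\dagger(Y)}\#_t R_{\Phi^\dagger(X)}\qquad\text{on }M_n(\C).$$

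The first ingredient is the transformer inequality $T^*(P\#_t Q)T\preceq (T^*PT)\#_t(T^*QT)$, valid for any Hilbert--Schmidt operator $T$ and positive $P,Q$; applied with $T=\Phi$, $P=L_Y$, $Q=R_X$ it gives $\Phi^\dagger(L_Y\#_t R_X)\Phi\preceq(\Phi^\dagger L_Y\Phi)\#_t(\Phi^\dagger R_X\Phi)$. I would prove this transformer inequality first for $t=\tfrac12$, using the characterization of $P\#_{1/2}Q$ as the largest self-adjoint $Z$ with $\left[\begin{smallmatrix}P&Z\\ Z&Q\end{smallmatrix}\right]\succeq0$: conjugating that block matrix by $\diag(T,T)$ preserves positivity, so maximality forces $T^*(P\#_{1/2}Q)T\preceq(T^*PT)\#_{1/2}(T^*QT)$. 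The case of general $t$ then follows by propagating the inequality with the identity $P\#_{1/2}(P\#_s Q)=P\#_{s/2}Q$ and its mirror image, together with the monotonicity of $\#_{1/2}$; these reach all dyadic $t$ by induction, and arbitrary $t$ by continuity.

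The second ingredient compares the conjugated multiplications to multiplication by the images. Both $L_{\Phi^\dagger(Y)}-\Phi^\dagger L_Y\Phi$ and $R_{\Phi^\dagger(X)}-\Phi^\dagger R_X\Phi$ are linear in $Y$, respectively $X$, so by positivity it suffices to test on rank-one $Y=vv^*$ and $X=vv^*$; a short computation then reduces $\Phi^\dagger L_Y\Phi\preceq L_{\Phi^\dagger(Y)}$ to $\Phi(K)\Phi(K)^*\preceq\Phi(KK^*)$ and $\Phi^\dagger R_X\Phi\preceq R_{\Phi^\dagger(X)}$ to $\Phi(K)^*\Phi(K)\preceq\Phi(K^*K)$. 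These are the two forms of the Kadison--Schwarz inequality, which follow from the $2$-positivity and unitality of $\Phi$: applying $\Phi$ entrywise to the positive block matrix $\left[\begin{smallmatrix}I&K\\ K^*&K^*K\end{smallmatrix}\right]$ and taking the Schur complement of the $(1,1)$ entry $\Phi(I)=I$ gives $\Phi(K^*K)\succeq\Phi(K)^*\Phi(K)$, and the other form follows by replacing $K$ with $K^*$. Finally, the joint monotonicity of $\#_t$ turns the two comparisons into $(\Phi^\dagger L_Y\Phi)\#_t(\Phi^\dagger R_X\Phi)\preceq L_{\Phi^\dagger(Y)}\#_t R_{\Phi^\dagger(X)}$, and chaining this with the transformer inequality yields the displayed operator inequality, hence \eqref{lieb21}.

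I expect the transformer inequality for $\#_t$ to be the one genuinely substantial step: it is where the operator-convexity content resides (it is essentially equivalent to the concavity theorem one is ultimately after), so the delicate point is to obtain it cleanly from the elementary maximality/Schur-complement description of $\#_{1/2}$ and the bootstrap in $t$, rather than by quoting the concavity theorem. By contrast, the reformulation in terms of $L_A,R_A$ and $\#_t$ is purely formal, and the Kadison--Schwarz step uses only the $2$-positivity already guaranteed by complete positivity.
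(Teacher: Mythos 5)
Your proposal is correct, but it takes a genuinely different route from the paper. The paper obtains Theorem~\ref{L1M} as the special case $f(x)=x^t$ of the Hiai--Petz inequality (Theorem~\ref{HPext}): there the operator ${\mathbb J}_f(X,Y)=f(R_XL_Y^+)L_Y$ is decomposed via L\"owner's integral representation of operator monotone functions into a linear part plus resolvent pieces $x\mapsto x/(\lambda+x)$; the linear pieces are handled by exactly your second ingredient (the comparisons $\Phi^\dagger L_Y\Phi\preceq L_{\Phi^\dagger(Y)}$ and $\Phi^\dagger R_X\Phi\preceq R_{\Phi^\dagger(X)}$ from the Schwarz inequality, the paper's \eqref{ph7}), while the resolvent pieces are handled by a Legendre-transform duality (Theorem~\ref{NDual}, Lemma~\ref{CMHlem2}) that converts them into the tracial Lieb--Ruskai/Choi inequality \eqref{trform2}, itself proved by $2$-positivity and Schur complements (Theorem~\ref{LRC}). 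You instead keep the power function whole, recognize $L_Y^{1-t}R_X^{t}$ as the weighted geometric mean $L_Y\,\#_t\,R_X$ of commuting positive operators, and run Ando's argument: the transformer inequality for $\#_t$ (via the block-matrix maximality characterization of $\#_{1/2}$ and dyadic bootstrapping), chained with Kadison--Schwarz and joint monotonicity of means. Both proofs ultimately rest on block positivity and Schur complements, and the Schwarz-inequality comparison appears verbatim in both; but your route avoids both L\"owner's theorem and the duality theorem, so it is more self-contained for Theorem~\ref{L1M} alone, whereas the paper's route, at the price of those two tools, proves the inequality for every operator monotone $f$ at once and simultaneously delivers the dual forms, Theorems~\ref{L2M} and \ref{L3M}, which your argument does not reach without adding a duality step.

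One point needs more care in your write-up, though it is not a gap. Reducing to $X,Y\in M_m^{++}(\C)$ does not make the operators $\Phi^\dagger L_Y\Phi$, $\Phi^\dagger R_X\Phi$, $L_{\Phi^\dagger(Y)}$, $R_{\Phi^\dagger(X)}$ invertible: as the paper notes, a trace-preserving completely positive map can send all of $M_m^{++}(\C)$ into singular matrices. Hence the maximality characterization of $\#_{1/2}$, the identities in your dyadic induction, and joint monotonicity must all be invoked for possibly singular positive semidefinite operators, i.e.\ with $\#_t$ defined as the decreasing limit $\lim_{\eps\downarrow 0}(P+\eps\one)\,\#_t\,(Q+\eps\one)$; alternatively, prove the $\eps$-perturbed transformer inequality
\begin{equation*}
T^*(P\,\#_t\,Q)T \ \preceq\ (T^*PT+\eps\one)\,\#_t\,(T^*QT+\eps\one)
\end{equation*}
using only invertible operators and pass to the limit at the very end, checking that for the commuting pair $L_{\Phi^\dagger(Y)},R_{\Phi^\dagger(X)}$ this limit is $L_{\Phi^\dagger(Y)}^{1-t}R_{\Phi^\dagger(X)}^{t}$, which is what the right side of \eqref{lieb21} pairs against. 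This is routine Kubo--Ando bookkeeping, but it should be stated explicitly.
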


\begin{theorem}\label{L2M}   For all completely positive unital maps $\Phi: M_n(\C)\to M_m(\C)$,  
 all $0 \leq t \leq 1$, all $m,n\in \N$, all 
$X,Y \in M_m^{++}(\C)$ and all $K\in M_m(\C)$ such that $\Phi^\dagger(X),\Phi^\dagger(Y) \in M_n^{++}(\C)$,  
 \begin{equation}\label{lieb22} 
 \tr[ \Phi^\dagger(K^*)\Phi^\dagger(Y)^{t-1}  \Phi^\dagger(K)\Phi^\dagger(X)^{-t}]  \leq   \tr[ K^*Y^{t-1}  KX^{-t}]\ .
\end{equation} 
\end{theorem}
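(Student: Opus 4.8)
The plan is to deduce \eqref{lieb22} from Theorem \ref{L1M} by recasting both inequalities as operator inequalities on the Hilbert--Schmidt spaces and then passing to inverses. On $M_k(\C)$ with the inner product $\langle X,Y\rangle = \tr[X^*Y]$, for $A\in M_k^{+}(\C)$ let $L_A$ and $R_A$ denote left and right multiplication by $A$. These are commuting, self-adjoint, positive operators on $M_k(\C)$, with $L_A^a = L_{A^a}$ and $R_A^b = R_{A^b}$ by functional calculus, so that $\langle K, L_B^a R_A^b K\rangle = \tr[K^* B^a K A^b]$ for all real $a,b$ whenever the powers make sense.

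First I would rewrite Theorem \ref{L1M} in this language. Since $\langle \Phi(K), C\,\Phi(K)\rangle = \langle K, \Phi^\dagger C\,\Phi(K)\rangle$ for every operator $C$ on $M_m(\C)$, and since \eqref{lieb21} holds for every $K\in M_n(\C)$, Theorem \ref{L1M} is precisely the statement that
\[
\Phi^\dagger\,\mathcal G_m\,\Phi \ \le\ \mathcal G_n \qquad\text{as operators on } M_n(\C),
\]
where $\mathcal G_m := L_Y^{\,1-t} R_X^{\,t}$ on $M_m(\C)$ and $\mathcal G_n := L_{\Phi^\dagger(Y)}^{\,1-t} R_{\Phi^\dagger(X)}^{\,t}$ on $M_n(\C)$. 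Under the hypotheses of Theorem \ref{L2M}, namely $X,Y\in M_m^{++}(\C)$ and $\Phi^\dagger(X),\Phi^\dagger(Y)\in M_n^{++}(\C)$, both $\mathcal G_m$ and $\mathcal G_n$ are positive \emph{and invertible}, with $\mathcal G_m^{-1} = L_Y^{\,t-1} R_X^{\,-t}$ and $\mathcal G_n^{-1} = L_{\Phi^\dagger(Y)}^{\,t-1} R_{\Phi^\dagger(X)}^{\,-t}$.

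The key step is the elementary inversion lemma: for positive invertible operators $P,Q$ and a bounded map $T$ with adjoint $T^*$, one has $T^*PT\le Q$ if and only if $TQ^{-1}T^*\le P^{-1}$. I would prove this by setting $S:=P^{1/2}TQ^{-1/2}$ and noting that $T^*PT\le Q \iff S^*S\le I \iff \|S\|\le 1 \iff SS^*\le I \iff TQ^{-1}T^*\le P^{-1}$, where the middle equivalences use that $S^*S$ and $SS^*$ have the same operator norm. Applying this with $T=\Phi$, $T^*=\Phi^\dagger$, $P=\mathcal G_m$ and $Q=\mathcal G_n$ turns the reformulated Theorem \ref{L1M} into
\[
\Phi\,\mathcal G_n^{-1}\,\Phi^\dagger \ \le\ \mathcal G_m^{-1} \qquad\text{as operators on } M_m(\C).
\]

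Finally I would translate this back into a trace inequality. Pairing against $K\in M_m(\C)$ and using $\langle K, \Phi\,\mathcal G_n^{-1}\Phi^\dagger K\rangle = \langle \Phi^\dagger K, \mathcal G_n^{-1}\Phi^\dagger K\rangle$ together with the identity $(\Phi^\dagger K)^* = \Phi^\dagger(K^*)$ (valid because $\Phi^\dagger$ is completely positive, hence Hermiticity preserving), the left-hand side becomes $\tr[\Phi^\dagger(K^*)\Phi^\dagger(Y)^{t-1}\Phi^\dagger(K)\Phi^\dagger(X)^{-t}]$ and the right-hand side becomes $\tr[K^*Y^{t-1}KX^{-t}]$, which is exactly \eqref{lieb22}. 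I expect the only real subtlety to be bookkeeping: one must check that the strict positivity hypotheses $X,Y\in M_m^{++}(\C)$ and $\Phi^\dagger(X),\Phi^\dagger(Y)\in M_n^{++}(\C)$ are precisely what make $\mathcal G_m$ and $\mathcal G_n$ invertible, so that the inversion lemma applies. The lemma itself is the crux of the argument, as it is what trades the exponent pair $(1-t,t)$ for $(t-1,-t)$; everything else is a faithful dictionary between trace expressions and the operators $L_A,R_A$.
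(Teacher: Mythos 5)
Your proposal is correct, and it rests on the same mechanism the paper uses: Theorem~\ref{L1M} and Theorem~\ref{L2M} are dual statements about a pair of quadratic forms, exchanged by inverting the positive operators $L_Y^{1-t}R_X^{t}$ and $L_{\Phi^\dagger(Y)}^{1-t}R_{\Phi^\dagger(X)}^{t}$ across the map $\Phi$. Where you differ is in how the duality itself is established. The paper proves a general duality theorem (Theorem~\ref{NDual}) by Legendre transforms of possibly \emph{degenerate} quadratic forms, with Moore--Penrose inverses and the range conditions \eqref{duality5}, verified through Lemma~\ref{CMHlem}; this machinery is then packaged as Lemma~\ref{CMHlem2} and used inside Theorem~\ref{HPext}, from whose parts \emph{(a)} and \emph{(b)} both Theorems~\ref{L1M} and \ref{L2M} are read off with $f(x)=x^r$. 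Your inversion lemma --- $T^*PT\le Q \iff TQ^{-1}T^*\le P^{-1}$, proved via $S=P^{1/2}TQ^{-1/2}$ and $\|S^*S\|=\|SS^*\|$ --- is the elementary invertible-case instance of that duality, and it suffices here precisely because the hypotheses of Theorem~\ref{L2M} demand $X,Y\in M_m^{++}(\C)$ and $\Phi^\dagger(X),\Phi^\dagger(Y)\in M_n^{++}(\C)$, so all four forms are nondegenerate. What your shortcut buys is a self-contained two-step derivation of \eqref{lieb22} from \eqref{lieb21} with no generalized inverses; what the paper's heavier version buys is exactly the degenerate case, which is needed for Theorem~\ref{HPext}\,\emph{(b)} as stated (where $\Phi^\dagger(X)$, $\Phi^\dagger(Y)$ may be singular and only range inclusions hold), and hence for the monotone-metric applications beyond the present theorem. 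One small point worth making explicit in your write-up: the identification of the left side of \eqref{lieb21} as $\langle \Phi(K),\mathcal G_m\Phi(K)\rangle$ already uses $\Phi(K)^*=\Phi(K^*)$, i.e.\ hermiticity preservation of $\Phi$, just as you later use it for $\Phi^\dagger$.
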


\begin{theorem}\label{L3M}   For all completely positive unital maps $\Phi: M_n(\C)\to M_m(\C)$, all $0 \leq t \leq 1$, all $m,n\in \N$, all $X,Y \in M_m^{++}(\C)$, all $K\in M_m(\C)$, 
\begin{equation}\label{lieb23} 
 \tr \left[   \int_0^\infty  \Phi^\dagger(K^*)\frac{1}{sI + \Phi^\dagger(Y)} \Phi^\dagger(K) \frac{1}{sI + \Phi^\dagger(X)}{\rm d}s \right]   
  \leq   \tr \left[   \int_0^\infty K^*\frac{1}{sI + Y} K \frac{1}{sI + X}{\rm d}s \right]\ .
\end{equation}
\end{theorem}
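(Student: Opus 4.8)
The plan is to recast the inequality as an operator inequality on the Hilbert--Schmidt spaces $M_m(\C)$, $M_n(\C)$ and to deduce Theorem \ref{L3M} from Theorem \ref{L1M} by integrating in $t$ and then performing an operator inversion. For $A\in M_k^{++}(\C)$ let $\mathsf{L}_A$ and $\mathsf{R}_A$ be the commuting positive operators of left and right multiplication by $A$ on $M_k(\C)$, so that $\langle B,\mathsf{L}_A\mathsf{R}_{A'}B\rangle=\tr[B^*ABA']$ and functional calculus gives $(sI+\mathsf{L}_A)^{-1}=\mathsf{L}_{(sI+A)^{-1}}$, etc. Setting $\Theta_{X,Y}:=\int_0^\infty(sI+\mathsf{L}_Y)^{-1}(sI+\mathsf{R}_X)^{-1}\,\rmD s$, the right- and left-hand sides of \eqref{lieb23} are exactly $\langle K,\Theta_{X,Y}K\rangle$ and $\langle\Phi^\dagger K,\Theta_{\Phi^\dagger X,\Phi^\dagger Y}\Phi^\dagger K\rangle=\langle K,\Phi\,\Theta_{\Phi^\dagger X,\Phi^\dagger Y}\,\Phi^\dagger K\rangle$, where $\Phi$ is the Hilbert--Schmidt adjoint of $\Phi^\dagger$. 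Thus \eqref{lieb23} is equivalent to
\[
\Phi\,\Theta_{\Phi^\dagger X,\Phi^\dagger Y}\,\Phi^\dagger\ \le\ \Theta_{X,Y}\qquad\text{on }M_m(\C).
\]

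Next I would record the operator form of Theorem \ref{L1M}. Putting $M=\Phi(K)$ and using $\Phi(K^*)=M^*$, the inequality \eqref{lieb21} says $\langle K,\Phi^\dagger\mathsf{L}_{Y^{1-t}}\mathsf{R}_{X^t}\Phi K\rangle\le\langle K,\mathsf{L}_{\Phi^\dagger(Y)^{1-t}}\mathsf{R}_{\Phi^\dagger(X)^t}K\rangle$ for all $K\in M_n(\C)$, i.e.
\[
\Phi^\dagger\,\mathsf{L}_{Y^{1-t}}\mathsf{R}_{X^t}\,\Phi\ \le\ \mathsf{L}_{\Phi^\dagger(Y)^{1-t}}\mathsf{R}_{\Phi^\dagger(X)^t}\qquad(0\le t\le1).
\]
I would then integrate this in $t$ over $[0,1]$, which preserves the order relation. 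The key computation is the logarithmic--mean identity: since $\int_0^\infty\frac{\rmD s}{(s+a)(s+b)}=\frac{\log a-\log b}{a-b}$ has reciprocal $\frac{a-b}{\log a-\log b}=\int_0^1 a^{1-t}b^{t}\,\rmD t$, spectral calculus in the commuting pair $(\mathsf{L}_Y,\mathsf{R}_X)$ gives $\int_0^1\mathsf{L}_{Y^{1-t}}\mathsf{R}_{X^t}\,\rmD t=\Theta_{X,Y}^{-1}$, and likewise with $X,Y$ replaced by $\Phi^\dagger X,\Phi^\dagger Y$. Integrating the displayed form of Theorem \ref{L1M} thus yields $\Phi^\dagger\,\Theta_{X,Y}^{-1}\,\Phi\le\Theta_{\Phi^\dagger X,\Phi^\dagger Y}^{-1}$.

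Finally I would invert, using the elementary equivalence valid for $T\colon\mathcal H_1\to\mathcal H_2$ and positive $A$ on $\mathcal H_2$, $B$ on $\mathcal H_1$:
\[
T^*AT\le B\quad\Longleftrightarrow\quad \|A^{1/2}TB^{-1/2}\|\le1\quad\Longleftrightarrow\quad TB^{-1}T^*\le A^{-1},
\]
the two outer inequalities expressing that the single operator $A^{1/2}TB^{-1/2}$ and its adjoint have norm at most one. Applying this with $T=\Phi$, $A=\Theta_{X,Y}^{-1}$ and $B=\Theta_{\Phi^\dagger X,\Phi^\dagger Y}^{-1}$ converts the previous inequality into $\Phi\,\Theta_{\Phi^\dagger X,\Phi^\dagger Y}\,\Phi^\dagger\le\Theta_{X,Y}$, which is exactly \eqref{lieb23}. (Applying the same inversion to the $t$-wise form of Theorem \ref{L1M}, without integrating, reproduces Theorem \ref{L2M}, so the three theorems are all governed by the one operator inequality above.)

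Two points need care, and I expect the second to be the only genuine obstacle. First, the order of operations is essential: one must integrate the positive-power inequality of Theorem \ref{L1M} and only then invert, because inverting first gives the kernel $Y^{t-1}\cdots X^{-t}$ of \eqref{lieb22}, which does not integrate to the logarithmic-mean kernel of \eqref{lieb23}; this is precisely why \eqref{lieb23} is not a formal consequence of \eqref{lieb22}. Second, although $X,Y\in M_m^{++}(\C)$, the images $\Phi^\dagger(X),\Phi^\dagger(Y)$ may be singular, so $B=\Theta_{\Phi^\dagger X,\Phi^\dagger Y}^{-1}$ need not be invertible and $\Theta_{\Phi^\dagger X,\Phi^\dagger Y}$ may have infinite eigenvalues. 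I would dispatch this by replacing the inversion with its variational form $\langle w,B^{-1}w\rangle=\sup_z\big(2\,\mathrm{Re}\,\langle z,w\rangle-\langle z,Bz\rangle\big)$, valid (and equal to $+\infty$ off the range of $B$) for merely positive semidefinite $B$: bounding $\langle z,Bz\rangle\ge\langle\Phi z,\Theta_{X,Y}^{-1}\Phi z\rangle$ and enlarging the supremum to all $w\in M_m(\C)$ gives the conclusion directly, or, alternatively, one proves the inequality for $\Phi^\dagger(X)+\eps I$ and $\Phi^\dagger(Y)+\eps I$ and lets $\eps\downarrow0$.
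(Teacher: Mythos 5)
Your proof is correct and follows essentially the same route as the paper: there, Theorem~\ref{L3M} is likewise obtained by integrating Theorem~\ref{L1M} over $t\in[0,1]$ to produce the logarithmic-mean operator ${\mathbb J}_g(X,Y)$ (your $\Theta_{X,Y}^{-1}$) and then dualizing the resulting quadratic-form inequality by a Legendre-transform argument (Theorem~\ref{NDual} and Lemma~\ref{CMHlem2}). The possible degeneracy of $\Phi^\dagger(X)$ and $\Phi^\dagger(Y)$, which you dispatch with the variational form of the inverse, is handled in the paper in exactly that way, with the needed range conditions supplied by Lemma~\ref{CMHlem}.
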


Theorem~\ref{L1M} was first proved by Uhlmann \cite{Uh77} in a  more general -- and  definitive --  form.  He did not need to assume that $\Phi$ is completely positive or unital, but only that it satisfies the inequality
$$\Phi(K^*K) \geq \Phi(K)^*\Phi(K)\ .$$
This is  known the {\em Schwarz inequality} in operator algebras, and as we will explain below, it holds for all unital completely positive maps \cite{Choi74}, but also for a wider class of positive maps.
 In the more restricted form stated here, all that we will need, Theorem~\ref{L1M} is equivalent to the  celebrated Lieb Concavity Theorem, Theorem~\ref{L1} below, as we shall explain. 

First, let us see how Theorem~\ref{L1M} yields the DPI. Take $K =1$. Since $\Phi$ is unital, \eqref{lieb23} reduces to 
\begin{equation*}
 \tr[Y^{1-t} X^t]  \leq  \tr[\Phi^\dagger(Y)^{1-t}  \Phi^\dagger(X)^t] \ ,
\end{equation*}
Since $\Phi^\dagger$ is trace preserving, $ \tr[Y] =\tr[\Phi^\dagger(Y)] $, and hence for all $t>0$,
$$
\frac{ \tr[Y^{1-t} X^t]  - \tr[Y]}{t}  \leq  \frac{\tr[\Phi^\dagger(Y)^{1-t}  \Phi^\dagger(X)^t]  - \tr[\Phi(Y)]}{t}\ .
$$
Taking the limit $t\downarrow 0$, we obtain
$$
\tr[ Y(\log X - \log Y)] \leq \tr[\Phi^\dagger(Y)(\log \Phi^\dagger(X) - \log \Phi^\dagger(Y)] \ ,
$$
and this is the same as $D(\Phi^\dagger(X)|| \Phi^\dagger(Y)) \leq D(X||Y)$.   

We next explain how Theorem~\ref{L1M} yields the Lieb Concavity Theorem:

\begin{theorem}[Lieb, 1973]\label{L1}
For $0 \leq t \leq 1$, and any fixed $K\in M_n(\C)$, the function
\begin{equation}\label{lieb1}
(X,Y) \mapsto \tr[K^*Y^{1-t} K X^t]
\end{equation}
is jointly concave on $M_n^+(\C)\times M_n^+(\C)$. 
\end{theorem}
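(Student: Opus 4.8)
The plan is to derive the Lieb Concavity Theorem (Theorem~\ref{L1}) directly from the monotonicity statement of Theorem~\ref{L1M}, exactly in the manner the excerpt announces (``We next explain how Theorem~\ref{L1M} yields the Lieb Concavity Theorem''). The bridge between a monotonicity theorem and a convexity/concavity theorem is always the partial trace, and the passage above has already recorded the key mechanism: apply the abstract inequality to the specific completely positive unital map $\Psi_m$ that embeds $X$ as a block-diagonal matrix $\mathrm{diag}(X,\dots,X)$, whose adjoint $\Psi_m^\dagger$ is the partial trace $\sum_j X_{j,j}$. So the first thing I would do is fix the functional $F(X,Y) := \tr[K^*Y^{1-t}KX^t]$ on $M_n^+(\C)\times M_n^+(\C)$ and recall that to prove joint concavity it suffices, by continuity, to prove midpoint concavity, i.e.
\begin{equation*}
F\!\left(\tfrac{X_1+X_2}{2},\tfrac{Y_1+Y_2}{2}\right) \geq \tfrac12 F(X_1,Y_1) + \tfrac12 F(X_2,Y_2)\ .
\end{equation*}

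Next I would set up the block encoding. Take $m=2$ and let $\Phi = \Psi_2 : M_n(\C)\to M_{2n}(\C)$ be the completely positive unital map sending $K$ to $\mathrm{diag}(K,K)$, so that $\Phi^\dagger$ is the partial trace $\Phi^\dagger\!\left(\left[\begin{smallmatrix} A & B\\ C & D\end{smallmatrix}\right]\right) = A + D$. I would then feed into Theorem~\ref{L1M} the block-diagonal arguments
\begin{equation*}
\widetilde X := \left[\begin{array}{cc} X_1 & 0\\ 0 & X_2\end{array}\right]\ ,\qquad \widetilde Y := \left[\begin{array}{cc} Y_1 & 0\\ 0 & Y_2\end{array}\right]\ ,
\end{equation*}
both in $M_{2n}^+(\C)$, together with $K\in M_n(\C)$ viewed in the role of the free matrix. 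The left-hand side of \eqref{lieb21}, namely $\tr[\Phi(K^*)\widetilde Y^{1-t}\Phi(K)\widetilde X^t]$, computes blockwise: since $\widetilde X$ and $\widetilde Y$ are block diagonal, $\widetilde X^t = \mathrm{diag}(X_1^t,X_2^t)$ and $\widetilde Y^{1-t} = \mathrm{diag}(Y_1^{1-t},Y_2^{1-t})$, and with $\Phi(K)=\mathrm{diag}(K,K)$ the trace splits as $\tr[K^*Y_1^{1-t}KX_1^t] + \tr[K^*Y_2^{1-t}KX_2^t] = F(X_1,Y_1)+F(X_2,Y_2)$. Meanwhile the right-hand side involves $\Phi^\dagger(\widetilde Y)^{1-t} = (Y_1+Y_2)^{1-t}$ and $\Phi^\dagger(\widetilde X)^t = (X_1+X_2)^t$, giving $\tr[K^*(Y_1+Y_2)^{1-t}K(X_1+X_2)^t] = F(X_1+X_2,\,Y_1+Y_2)$.

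Putting these together, Theorem~\ref{L1M} reads $F(X_1,Y_1)+F(X_2,Y_2)\leq F(X_1+X_2,\,Y_1+Y_2)$, which is superadditivity of $F$. I would then invoke the homogeneity of $F$: since $X\mapsto X^t$ and $Y\mapsto Y^{1-t}$ are positively homogeneous of degrees $t$ and $1-t$ respectively, $F$ is jointly homogeneous of total degree $t+(1-t)=1$, so $F(\lambda X,\lambda Y)=\lambda F(X,Y)$ for $\lambda>0$. Applying superadditivity to $(\tfrac12 X_1,\tfrac12 Y_1)$ and $(\tfrac12 X_2,\tfrac12 Y_2)$ and using homogeneity to pull out the factor $\tfrac12$ yields precisely the midpoint concavity displayed above, and concavity follows by continuity (extending from $M_n^{++}$ to $M_n^+$ by a limiting argument). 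I expect no serious obstacle here: the only points needing care are the bookkeeping that the off-diagonal blocks of $\Phi(K)\widetilde X^t$ contribute nothing to the trace (they do not, because $\widetilde X,\widetilde Y$ are diagonal), and the standard reduction from midpoint to full concavity via continuity, together with the passage from the strictly positive to the merely positive semidefinite cone. The genuinely hard content — the inequality \eqref{lieb21} itself — is assumed, so this direction is essentially a clean functorial argument through the partial trace.
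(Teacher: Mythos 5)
Your proposal is correct and is essentially the paper's own proof: the same block-diagonal embedding $\Phi(K)=\mathrm{diag}(K,K)$ with partial-trace adjoint, the same blockwise evaluation of both sides of \eqref{lieb21} yielding superadditivity of $F$, and the same use of degree-one homogeneity to convert superadditivity into concavity. The only cosmetic difference is that you route through midpoint concavity plus continuity, while the paper notes directly that homogeneity makes superadditivity equivalent to concavity.
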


\begin{proof} Let $X_1$, $X_2$, $Y_1$ and $Y_2$ be positive definite in $M_m(\C)$, and define
$$
X = \left[\begin{array}{cc} X_1 & 0 \\ 0 & X_2\end{array}\right]\qquad{\rm and}\qquad Y = \left[\begin{array}{cc} Y_1 & 0 \\ 0 & Y_2\end{array}\right]\ .
$$
As before, define $\Phi: M_m(\C)\to M_{2m}(\C)$ by $\Phi(K) = \left[\begin{array}{cc} K & 0\\ 0 & K\end{array}\right]$ for all  $K\in M_n(\C)$, 
and note that $\Psi$  is completely positive and unital. Then
$$
 Y^{1-t}\Phi(K)X^t  =   \left[\begin{array}{cc} Y_1^{1-t}K X_1^t & 0 \\ 0 & Y_2^{1-t}KX_2^t\end{array}\right]
$$
so that $\tr[\Phi(K^*) Y^{1-t}\Phi(K)X^t ] = \tr[K^*Y_1^{1-t}K X_1^t] + \tr[K^*Y_2^{1-t}KX_2^t]$.  On the other hand,
$$\tr[ K^*\Phi^\dagger(Y)^{1-t}K \Phi^\dagger(X)^t ] = \tr[K^*(Y_1+Y_2)^{1-t} K (X_1+X_2)^t]\ .$$
Then by \eqref{lieb21}
\begin{equation*}
 \tr[K^*Y_1^{1-t}K X_1^t] + \tr[K^*Y_2^{1-t}KX_2^t]  \leq   \tr[K^*(Y_1+Y_2)^{1-t} K (X_1+X_2)^t]\ .
\end{equation*} 
Since $(X,Y)\mapsto \tr[K^*Y^{1-t}K X^t]$ is homogeneous of degree one, this is equivalent to \eqref{lieb1}.
\end{proof}

In the exact same way, one deduces from Theorem~\ref{L2M} and Theorem~\ref{L3M} that the following are true:

\begin{theorem}[Lieb, 1973] \label{L2} For all $0 \leq t \leq 1$, 
\begin{equation*}
(X,Y,K) \mapsto    \tr[ K^*Y^{-1+t}  KX^{-t}]
\end{equation*} 
is jointly convex on $M_n^{++}(\C)\times M_n^{++}(\C)\times M_n(\C)$.  
\end{theorem}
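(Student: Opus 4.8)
The plan is to mirror exactly the deduction of Theorem~\ref{L1} from Theorem~\ref{L1M}, but now exploiting the feature that distinguishes Theorem~\ref{L2M} from Theorem~\ref{L1M}: here the matrix $K$ lives in the larger space $M_m(\C)$ on which $\Phi^\dagger$ acts, and it is this that will let $K$ appear as a genuine third variable in the convexity statement. Write $F(X,Y,K) := \tr[K^* Y^{t-1} K X^{-t}]$. First I would record that $F$ is jointly homogeneous of degree one: under $(X,Y,K) \mapsto (\lambda X, \lambda Y, \lambda K)$ with $\lambda > 0$ one picks up factors $\lambda^{-t}$, $\lambda^{t-1}$ and $\lambda^{2}$ from $X^{-t}$, $Y^{t-1}$ and the pair $K^*,K$ respectively, and $\lambda^{-t+(t-1)+2} = \lambda$. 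Consequently, exactly as in the passage from \eqref{lieb22} to the concavity statement \eqref{lieb1}, it suffices to prove the subadditivity bound $F(X_1 + X_2, Y_1 + Y_2, K_1 + K_2) \le F(X_1, Y_1, K_1) + F(X_2, Y_2, K_2)$; together with degree-one homogeneity this gives midpoint convexity (take $\lambda = 1/2$), and then full joint convexity follows by continuity on the open set $M_n^{++}(\C) \times M_n^{++}(\C) \times M_n(\C)$, just as is invoked after \eqref{Pinsker}.

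To produce this subadditivity I would use the doubling map $\Phi: M_n(\C) \to M_{2n}(\C)$, which is completely positive and unital, with adjoint the partial trace:
\[
\Phi(H) = \left[\begin{array}{cc} H & 0\\ 0 & H\end{array}\right],
\qquad
\Phi^\dagger\!\left(\left[\begin{array}{cc} A & B\\ C & D\end{array}\right]\right) = A+D.
\]
Given $X_1, X_2, Y_1, Y_2 \in M_n^{++}(\C)$ and $K_1, K_2 \in M_n(\C)$, set $X = \diag(X_1, X_2)$, $Y = \diag(Y_1, Y_2)$ and, crucially, $K = \diag(K_1, K_2)$ in $M_{2n}(\C)$. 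Since $X,Y$ are block diagonal and positive definite, $\Phi^\dagger(X) = X_1 + X_2$ and $\Phi^\dagger(Y) = Y_1 + Y_2$ lie in $M_n^{++}(\C)$, so the hypotheses of Theorem~\ref{L2M} are met.

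Now I would apply Theorem~\ref{L2M} to this data and read off both sides of \eqref{lieb22}. Because $Y^{t-1}$ and $X^{-t}$ are block diagonal and $K$ has been chosen block diagonal, a direct block computation of the diagonal blocks collapses $\tr[K^* Y^{t-1} K X^{-t}]$ to $F(X_1,Y_1,K_1) + F(X_2,Y_2,K_2)$ — the off-diagonal blocks of $K$ were set to zero precisely to annihilate the cross terms. On the other side, $\Phi^\dagger(K) = K_1 + K_2$ and $\Phi^\dagger(K^*) = (K_1+K_2)^*$, so the left-hand side of \eqref{lieb22} equals $F(X_1 + X_2, Y_1 + Y_2, K_1 + K_2)$. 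Thus \eqref{lieb22} is literally the desired subadditivity inequality, and the argument closes as described in the first paragraph.

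I expect the only point requiring genuine care — and the sole respect in which this departs from the proof of Theorem~\ref{L1}, where $K$ was frozen — to be the simultaneous bookkeeping for the third variable: one must check that the block-diagonal choice of $K$ both separates the right-hand side into the two summands and merges the left-hand side into the single argument $K_1 + K_2$, and that the degree-one homogeneity holds \emph{jointly} in all three variables (which is what promotes subadditivity to convexity rather than merely to convexity in $(X,Y)$). Everything else is the verbatim analogue of the computation already carried out for \eqref{lieb1}.
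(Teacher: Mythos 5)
Your proof is correct and is exactly the argument the paper intends: the paper deduces Theorem~\ref{L2} from Theorem~\ref{L2M} ``in the exact same way'' as Theorem~\ref{L1} from Theorem~\ref{L1M}, i.e., via the doubling map and partial trace applied to block-diagonal $X$, $Y$, and now also $K$, giving subadditivity, which combines with joint degree-one homogeneity and continuity to yield convexity. Your careful handling of the third variable (block-diagonal $K$ splitting the right side and merging under $\Phi^\dagger$ on the left) is precisely the bookkeeping the paper leaves to the reader.
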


\begin{theorem}[Lieb, 1973]\label{L3}
\begin{equation*}
(X,Y,K) \mapsto    \tr \left[   \int_0^\infty K^*\frac{1}{sI + Y} K \frac{1}{sI + X}{\rm d}s \right]\ 
\end{equation*}
is jointly convex on $M_n^{++}(\C)\times M_n^{++}(\C)\times M_n(\C)$.
\end{theorem}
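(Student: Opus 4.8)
The plan is to mirror exactly the deduction of the Lieb Concavity Theorem (Theorem~\ref{L1}) from its monotonicity form, only now using Theorem~\ref{L3M} and letting $K$ vary as well. Write
\[
F(X,Y,K) := \tr\left[\int_0^\infty K^*\frac{1}{sI+Y}K\frac{1}{sI+X}{\rm d}s\right],
\]
which is finite for $X,Y\in M_m^{++}(\C)$ since the integrand decays like $s^{-2}$ at infinity. Fix two triples $(X_1,Y_1,K_1)$ and $(X_2,Y_2,K_2)$ with $X_i,Y_i\in M_m^{++}(\C)$ and $K_i\in M_m(\C)$, and assemble the block-diagonal matrices
\[
X=\left[\begin{array}{cc}X_1&0\\0&X_2\end{array}\right],\quad Y=\left[\begin{array}{cc}Y_1&0\\0&Y_2\end{array}\right],\quad K=\left[\begin{array}{cc}K_1&0\\0&K_2\end{array}\right]
\]
in $M_{2m}(\C)$. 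As in the proof of Theorem~\ref{L1}, take $\Phi:M_m(\C)\to M_{2m}(\C)$, $\Phi(A)=\diag(A,A)$, which is completely positive and unital, with $\Phi^\dagger$ the partial trace $\Phi^\dagger(\diag(B_1,B_2))=B_1+B_2$.

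First I would evaluate both sides of \eqref{lieb23} for this choice. Because resolvents of block-diagonal matrices are block-diagonal, the integrand on the right-hand side splits along the diagonal blocks, so the right-hand side of \eqref{lieb23} equals $F(X_1,Y_1,K_1)+F(X_2,Y_2,K_2)$. On the left-hand side, $\Phi^\dagger$ sums the diagonal blocks, giving $\Phi^\dagger(K)=K_1+K_2$, $\Phi^\dagger(X)=X_1+X_2$ and $\Phi^\dagger(Y)=Y_1+Y_2$, so the left-hand side equals $F(X_1+X_2,Y_1+Y_2,K_1+K_2)$. Thus Theorem~\ref{L3M} yields the subadditivity
\[
F(X_1+X_2,Y_1+Y_2,K_1+K_2)\leq F(X_1,Y_1,K_1)+F(X_2,Y_2,K_2).
\]

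The remaining step is the homogeneity that converts subadditivity into midpoint convexity. Here the only point differing from Theorem~\ref{L1} is that $K$ varies too, so I would check that $F$ is homogeneous of degree one under the \emph{simultaneous} scaling $(X,Y,K)\mapsto(\lambda X,\lambda Y,\lambda K)$. The factor $K^*(\cdot)K$ contributes $\lambda^2$, while the substitution $s\mapsto\lambda s$ in the integral, which sends $\frac{1}{sI+\lambda Y}\frac{1}{sI+\lambda X}{\rm d}s$ to $\lambda^{-1}\frac{1}{sI+Y}\frac{1}{sI+X}{\rm d}s$, contributes $\lambda^{-1}$; together $F(\lambda X,\lambda Y,\lambda K)=\lambda F(X,Y,K)$. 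Applying this with $\lambda=\tfrac12$ to the subadditivity inequality gives
\[
F\!\left(\tfrac{X_1+X_2}{2},\tfrac{Y_1+Y_2}{2},\tfrac{K_1+K_2}{2}\right)\leq \tfrac12 F(X_1,Y_1,K_1)+\tfrac12 F(X_2,Y_2,K_2),
\]
i.e.\ joint midpoint convexity, and continuity of $F$ upgrades this to full joint convexity on $M_n^{++}(\C)\times M_n^{++}(\C)\times M_n(\C)$.

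I expect the main (and very mild) obstacle to be bookkeeping the homogeneity correctly rather than the block-matrix computation: one must combine the quadratic dependence on $K$ with the degree $-1$ scaling of the resolvent integral in $(X,Y)$ to land on overall degree one, since getting the exponent wrong would produce a useless inequality. Everything else is identical to the passage from Theorem~\ref{L1M} to Theorem~\ref{L1}.
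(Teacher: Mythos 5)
Your proof is correct and is precisely the argument the paper intends: the paper disposes of Theorem~\ref{L3} by remarking that it follows from Theorem~\ref{L3M} ``in the exact same way'' as Theorem~\ref{L1} follows from Theorem~\ref{L1M}, namely via the block-diagonal doubling map $\Phi$, its adjoint partial trace, the resulting subadditivity, and degree-one homogeneity plus continuity. Your explicit check that the simultaneous scaling $(X,Y,K)\mapsto(\lambda X,\lambda Y,\lambda K)$ is homogeneous of degree one (the $\lambda^2$ from $K^*(\cdot)K$ against the $\lambda^{-1}$ from the substitution $s\mapsto\lambda s$) is exactly the bookkeeping the paper leaves implicit when $K$ is allowed to vary.
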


An argument of the type Lindblad made to deduce the DPI from the joint convexity of the relative entropy can be used to derive Theorems~\ref{L1M}, \ref{L2M} and \ref{L3M} from Theorems~\ref{L1}, \ref{L2} and \ref{L3} respectively; see \cite{C22}. However, we do not need this here: It is simpler to directly prove the monotonicity version.   

In \cite{L73}, Lieb proved that the Theorems Theorems~\ref{L1}, \ref{L2} and \ref{L3}  are all equivalent to one another in the sense that once one has proved any of them, the others follows by simple arguments. The analog of this equivalence for the  monotonicity versions  is much simpler.

Indeed,  Theorem~\ref{L1} and Theorem~\ref{L2} are structurally different in that one is a concavity statement in the two variables $X$ and $Y$, the other is a convexity statement in the three variables $X$, $Y$ and $K$. However, for fixed $X$ and $Y$, the inequalities in the monotonicity version are both comparisons of two quadratic forms in the variable $K$. The Legendre transform provides a duality between quadratic forms that then relates Theorem~\ref{L1M} and Theorem~\ref{L2M}, as will be shown at the end of this lecture. In fact, it will be shown that all three of the fundamental monotonicity theorems are equivalent. 
This equivalence by duality will be significant for the third lecture.

However, before proving Theorems~\ref{L2M} and \ref{L3M}, we explain their relevance  to the questions posed here.  Recall that we are interested in metrics on the space $\mathfrak{S}$ of invertible density matrices. Metrics that have a certain monotonicity property paly a key role in what follows.

In the Introduction, it was noted that  Theorems 2 and 3 of \cite{L73} provided the answers to questions that were not asked until many years later, and that when they were finally asked, 
it was not recognized
that the answers could be found in \cite{L73}.  The questions concerned {\em monotone metrics} on the space of non-degenerate density matrices.  

\subsection{Monotone metrics}

 Let $\rho(t)$ be a differentiable path in the space $\mathfrak{S}$ of invertible $n\times n$ density matrices; i.e., elements of 
 $M_n^{++}(\C)$ with unit trace. Then the derivative $\rho'(t) = K(t)$ is a self-adjoint operator with $\tr[K(t)] =0$.  We may think of 
 $\mathfrak{S}$ as a differentiable manifold, and it is then of interest to equip it with various Riemannian metrics  that have the property that the 
 distance between $\rho_1,\rho_2\in \mathfrak{S}$, $d(\rho_1,\rho_2)$,   decreases under the application of
any quantum operation; i.e., 
\begin{equation}\label{che1}
d(\Phi^\dagger(\rho_1),\Phi^\dagger(\rho_1)) \leq d(\rho_1,\rho_2)\ .
\end{equation}
In such a metric, any quantum operation performed on the states can only make it harder to distinguish between them. See the Introduction to \cite{CFL18} for further discussion of the problem of distinguishing between states, which is basic to quantum communication. 

We may identify the tangent space to $\mathfrak{S}$ at each point $\rho$ with the space of traceless self-adjoint $n\times n$ matrices. If we denote the  quadratic form that specifies a Riemannian metric at $\rho$ by  $\gamma_\rho(K,K)$, then the contractive property \eqref{che1} for such a metric  is equivalent to
\begin{equation*}
\gamma_{\Phi^\dagger(\rho)}(\Phi^\dagger(K),\Phi^\dagger(K)) \leq \gamma_\rho(K,K)\ .
\end{equation*}

It is then natural to ask whether such metrics exist, and if so, what they might be. The same question had been raised  and answered in the classical case  in which the analog of  $\mathfrak{S}$ is the set  $\mathfrak{S}_c$ of $n$ dimensional strictly positive probability vectors 
$p := (p_1,\dots ,p_n)$ with each $p_j >0$ and $\sum_{j=1}^n p_j = 1$.  At each $p\in \mathfrak{S}_c$, 
we may identify the tangent space with the subspace of $\R^n$ consisting of vectors $k = (k_1,\dots,k_n)$ 
with 
$\sum_{j=1}^n k_j =0$.  The analogs of quantum operations are  the  {\em row stochastic  matrices} $P$; i.e., elements of 
$M_n(\R)$ with non-negative and the property that for each $j$, $\sum_{i=1}^n P_{i,j} =1$.   In this setting, 
it is natural to ask for metrics $\gamma$ on  $\mathfrak{S}_c$  with the property that 
\begin{equation*}
\gamma_{Pp}(Pk,Pk) \leq \gamma_p(k,k)
\end{equation*}
for all stochastic $P$ at each $p\in \mathfrak{S}_c$ and for each tangent vector $k$. In 1982, Cencov \cite{Cen82}, building on earlier work of Fisher \cite{F25}, proved that the unique such metric, up to a constant multiple, is the {\em Fisher Information }
\begin{equation*}
\gamma_p(k,k) = \sum_{j=1}^n \frac{k_j^2}{p_j}\ .
\end{equation*}
A decade later, Cencov together with Morozova \cite{MC90}, took up the quantum problem.  In the non-commutative setting, there are many possible 
ways to ``divide by'' a non-degenerate density matrix $\rho$.   Morozova and Cencov came up with several conjectures, all eventually shown to be correct,  
that certain explicit metrics were in fact monotone metrics, although they did not resolve any of these conjectures.  An account of their work can be found in \cite{P96}. 

However, had they known of Theorems 2 and 3 of Lieb's paper, and then recognized them as monotonicity theorems,  they would have had positive solutions to the most important of their conjectures:   These theorems, written in the monotonicity forms \eqref{lieb22} and \eqref{lieb23} show that  for each $0 < t < 1$
\begin{equation}\label{che4}
\gamma_\rho^{(t)}(K,K)  :=  \tr[K \rho^{t-1} K \rho^{-t}]
\end{equation}
and
\begin{equation}\label{che5}
\widehat{\gamma}_\rho(K,K)  :=  \tr\left [\int_0^\infty K \frac{1}{s+\rho} K \frac{1}{s + \rho}{\rm d}s\right] 
\end{equation}
are monotone metrics. As we have already observed in a Legendre transform argument,  the right hand sides are always positive for non-zero $K$, so that these do define  Riemannian metrics, and then the monotonicity is provided by \eqref{lieb22} and \eqref{lieb23}. Note that when $\rho$ and $K$ commute, we have
$$
\gamma_\rho^{(t)}(K,K)    =  \widehat{\gamma}_\rho(K,K)  = \tr\left[\frac{K^2}{\rho}\right]\ ,
$$
as one might expect. {\em Thus these functionals studied by Lieb in 1973 are both natural quantum generalizations of the Fisher Information metric}. 
However, none of the early writers on the subject made the connection with Lieb's theorems, and his paper \cite{L73} is not cited in \cite{P96} which gives the first explicit proof of the fact that \eqref{che4} and \eqref{che5} do in fact define monotone metrics as had been conjectured by Morozova and Cencov. 

The inequality \eqref{che5} can be viewed as an entropy inequality. To see this recall the integral representation for the logarithm of $X\in M_n^{++}(\C)$: 
\begin{equation}\label{logrep}
\log (X) = \int_0^\infty \left(\frac{1}{\lambda+1} - \frac{1}{\lambda +X}\right){\rm d}\lambda\ .
\end{equation}
From here one easily deduces that the right hand side of \eqref{che5} is  the negative of the Hessian of the entropy $S(\rho)$. That is,
$$
\frac{\partial^2}{\partial s \partial t} \tr[ (\rho + s K + t K) \log (\rho  + s K + t K) ]\bigg|_{s=0,t=0}  =  \tr\left [\int_0^\infty K \frac{1}{s+\rho} K \frac{1}{s + \rho}{\rm d}s\right] \ ,
$$
as one can verify using the integral representation for the logarithm \eqref{logrep}. Lesniewski and Ruskai \cite{LR99} showed that all monotone metrics arise in this manner from one of the {\em quasi entropies} that had been introduced by Petz \cite{P85,P86}.

\subsection{The Lieb-Ruskai Monotonicity Theorem}

The Lieb-Ruskai Monotonicity Theorem is a fundamental result proved in \cite{LR74}.  It asserts the monotonicity of an operator function in two variables under completely positive maps.

\begin{theorem}[Lieb and Ruskai 1974]\label{LiRu}  Let $\Phi: M_n(\C) \to M_n(\C)$ be  completely positive. Then for all $K \in M_n(\C)$, and all $X\in M_n^{++}(\C)$
\begin{equation*}
\Phi\left(K^*\frac{1}{X}K\right) \geq \Phi(K)^*\frac{1}{\Phi(X)}\Phi(K)\ .
\end{equation*}
\end{theorem}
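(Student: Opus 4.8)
The plan is to reduce the asserted operator inequality to the elementary \emph{Schur complement criterion} for positivity of $2\times 2$ block matrices, exploiting that complete positivity entails $2$-positivity in the sense defined above. First I would record the two properties of $\Phi$ that I need: since $\Phi$ is positive it preserves adjoints, so $\Phi(K^*)=\Phi(K)^*$; and since $\Phi$ is completely positive it is in particular $2$-positive.

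The key construction is the block matrix
$$M := \begin{pmatrix} X & K \\ K^* & K^*X^{-1}K\end{pmatrix}\in M_{2n}(\C).$$
I would show $M\ge 0$ by exhibiting the factorization $M=R^*R$ with $R=\begin{pmatrix} X^{1/2} & X^{-1/2}K\\ 0 & 0\end{pmatrix}$, which is legitimate because $X\in M_n^{++}(\C)$ guarantees that $X^{1/2}$ and $X^{-1/2}$ exist. Applying $\Phi$ entrywise and invoking $2$-positivity then yields
$$\begin{pmatrix} \Phi(X) & \Phi(K) \\ \Phi(K)^* & \Phi(K^*X^{-1}K)\end{pmatrix}\ge 0,$$
where I have used $\Phi(K^*)=\Phi(K)^*$ in the lower-left corner.

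To finish, I would apply the Schur complement criterion to this last block matrix: whenever $\begin{pmatrix} A & B\\ B^* & C\end{pmatrix}\ge 0$ with $A\ge 0$, one has $C\ge B^*A^{+}B$, with $A^{+}=A^{-1}$ in the invertible case. This is proved by completing the square, i.e.\ minimizing $\langle u,Au\rangle+2\,\mathrm{Re}\,\langle u,Bw\rangle+\langle w,Cw\rangle$ over $u$ for fixed $w$. Taking $A=\Phi(X)$, $B=\Phi(K)$, $C=\Phi(K^*X^{-1}K)$ gives exactly $\Phi(K^*X^{-1}K)\ge \Phi(K)^*\Phi(X)^{-1}\Phi(K)$, which is the claim.

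The only genuine subtlety—the step I would watch most carefully—is that $\Phi$ is not assumed unital, so $\Phi(X)$ need only be positive \emph{semi}definite, and when it is singular the symbol $1/\Phi(X)$ must be read as the Moore--Penrose pseudoinverse. Fortunately the block-matrix positivity already forces $\mathrm{ran}\,\Phi(K)\subseteq\mathrm{ran}\,\Phi(X)$, so the generalized Schur complement $\Phi(K)^*\Phi(X)^{+}\Phi(K)$ is well defined and the argument goes through verbatim. I would also note that the special case $X=I$ with $\Phi$ unital recovers the Kadison--Schwarz inequality $\Phi(K^*K)\ge\Phi(K)^*\Phi(K)$ mentioned earlier, so this theorem is its natural two-variable refinement, and that only $2$-positivity, rather than full complete positivity, is actually used.
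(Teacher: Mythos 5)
Your proposal is correct and takes essentially the same approach as the paper, which obtains Theorem~\ref{LiRu} as a special case of Choi's generalization (Theorem~\ref{LRC}): the same block matrix $\left[\begin{array}{cc} X & K\\ K^* & K^*X^{-1}K\end{array}\right]$, the same factorization proving its positivity, $2$-positivity of $\Phi$ applied entrywise, and the same Schur complement criterion (Lemma~\ref{smlm}) to conclude. Your closing care with the Moore--Penrose inverse and the range inclusion also mirrors the paper's treatment, which handles general $X\in M_n^{+}(\C)$ under the kernel condition ${\rm ker}(X)\subseteq{\rm ker}(K^*)$.
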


Some years later, Choi \cite{Choi80}  proved a generalization. To state this in a  convenient form,  recall that for a matrix $X\in M_n^+(\C)$, the {\em Moore-Penrose} generalized inverse of $X$, denote $X^+$, is the operator obtained from $X$  by replacing all {\em non-zero} eigenvalues of $X$ by their inverses and leaving the kernel as it is. 

\begin{theorem}[Choi 1980]\label{LRC}  A map $\Phi: M_n(\C) \to M_n(\C)$ is $2$-positive if and only if for all $K\in M_n(\C)$ and all $X\in M_n^+(\C)$ such that 
${\rm ker}(X) \subset {\rm ker}(K^*)$,   
\begin{equation}\label{LR7B}
\Phi\left(K^* X^+ K\right) \geq \Phi(K)^*\Phi(X)^+\Phi(K)\ .
\end{equation}
\end{theorem}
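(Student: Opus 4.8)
The plan is to deduce both directions from the Schur-complement criterion for positivity of a two-by-two operator matrix, using the Moore--Penrose inverse so as to allow singular diagonal blocks. The criterion I would record first is: for $A,D\in M_n^+(\C)$ and $B\in M_n(\C)$, one has $\left[\begin{smallmatrix}A&B\\B^*&D\end{smallmatrix}\right]\geq 0$ if and only if $\operatorname{ran}(B)\subseteq\operatorname{ran}(A)$ and $D\geq B^*A^+B$ (and symmetrically, with the roles of $A$ and $D$ exchanged). I would also note the elementary dictionary that, since $X=X^*$, the hypothesis $\ker(X)\subseteq\ker(K^*)$ is exactly the range inclusion $\operatorname{ran}(K)\subseteq\operatorname{ran}(X)$; this is what makes the generalized inverse $X^+$ interact correctly with $K$.

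For the forward implication (which is the refinement of Theorem~\ref{LiRu}) I would, given such $X$ and $K$, form the block matrix $\left[\begin{smallmatrix}K^*X^+K & K^*\\ K & X\end{smallmatrix}\right]$. It is positive by the criterion, using $X$ as the pivot block: the range inclusion $\operatorname{ran}(K)\subseteq\operatorname{ran}(X)$ is the hypothesis, and the Schur complement $K^*X^+K-K^*X^+K$ vanishes. A $2$-positive map is in particular positive, hence Hermiticity preserving, so applying $\Phi$ entrywise and invoking $2$-positivity gives $\left[\begin{smallmatrix}\Phi(K^*X^+K) & \Phi(K)^*\\ \Phi(K) & \Phi(X)\end{smallmatrix}\right]\geq 0$. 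Now the lower-right block $\Phi(X)$ is genuinely positive, so reading the criterion backwards produces both $\operatorname{ran}(\Phi(K))\subseteq\operatorname{ran}(\Phi(X))$ and the asserted inequality $\Phi(K^*X^+K)\geq\Phi(K)^*\Phi(X)^+\Phi(K)$.

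For the converse I would test $2$-positivity by checking that $\operatorname{id}_2\otimes\Phi$ is positive, and since every positive element of $M_{2n}(\C)$ is a sum of rank-one positive matrices $\xi\xi^*$ it suffices to treat inputs $\left[\begin{smallmatrix}aa^* & ab^*\\ ba^* & bb^*\end{smallmatrix}\right]$ with $a,b\in\C^n$. Taking $X=aa^*$ and $K=ab^*$ in the hypothesis (the range condition $\operatorname{ran}(ab^*)\subseteq\operatorname{ran}(aa^*)$ being automatic), a short computation gives $K^*X^+K=bb^*$ exactly, so the hypothesis yields precisely the Schur complement inequality $\Phi(bb^*)\geq\Phi(ab^*)^*\Phi(aa^*)^+\Phi(ab^*)$. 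By the criterion this is exactly the positivity of the image block, provided one also knows that $\Phi(aa^*)\geq 0$ and that the companion range inclusion $\operatorname{ran}(\Phi(ab^*))\subseteq\operatorname{ran}(\Phi(aa^*))$ holds.

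The hard part will be supplying those last two ingredients, i.e. that the inequality already forces $\Phi$ to be positive. This I would reduce to the single fact $\Phi(I)\geq 0$: applying the hypothesis with $X=I$ and $K=Y^{1/2}$ for $Y\geq 0$ gives the Schwarz-type bound $\Phi(Y)\geq\Phi(Y^{1/2})^*\Phi(I)^+\Phi(Y^{1/2})$, whose right-hand side is positive as soon as $\Phi(I)\geq 0$, and then $\Phi(Y)$ dominates a positive matrix and is itself positive (which also delivers Hermiticity preservation). Establishing $\Phi(I)\geq 0$ is the genuinely delicate point, since the obvious substitutions $K=X$ collapse the inequality to a vacuous equality; here one must exploit the freedom to take non-normal $K$. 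Finally, the range inclusions for the singular blocks I would obtain by an $\varepsilon$-regularization, running the estimates with $X$ replaced by $X+\varepsilon I$ and letting $\varepsilon\downarrow 0$, boundedness of the right-hand side forcing the required range containment. The principal obstacle, then, is the positivity input $\Phi(I)\geq 0$, together with the careful bookkeeping of Moore--Penrose inverses and ranges on nontrivial kernels.
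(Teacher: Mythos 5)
Your forward direction is correct and is essentially the paper's own argument: positivity of the block matrix $\left[\begin{smallmatrix} X & K\\ K^* & K^*X^+K\end{smallmatrix}\right]$, an application of $2$-positivity, and the Schur-complement criterion read backwards.

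The converse, however, has a genuine gap, and it is precisely the one you flag: the positivity input. The problem is not that you have failed to find the right non-normal $K$; it is that no choice exists. If $\Phi(X)^+$ is read as the Moore--Penrose inverse of an arbitrary matrix, the implication you are trying to prove is false: the map $\Phi = -{\rm id}$ satisfies \eqref{LR7B} with \emph{equality} for every admissible pair $(X,K)$, since $(-X)^+ = -X^+$ and the three sign changes cancel, yet $-{\rm id}$ is not positive, let alone $2$-positive. So $\Phi(\one)\geq 0$ cannot be extracted from the inequality; it must come from the reading of the statement. The paper's convention is that $M^+$ is defined only for $M\in M_n^+(\C)$ (invert the nonzero eigenvalues, keep the kernel), so the hypothesis that \eqref{LR7B} is a well-formed operator inequality already entails $\Phi(X)\in M_n^+(\C)$ for every admissible $X$ (take $K=X$); positivity of $\Phi$ on the positive cone is thus built into the hypothesis. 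Granting that, the paper takes $X=\one$, $K=L^*$ to obtain the Schwarz inequality $\Phi(LL^*)\geq \Phi(L^*)^*\Phi(\one)^+\Phi(L^*)\geq 0$, which delivers positivity on all of $M_n^+(\C)$ and Hermiticity preservation --- exactly the two ingredients your reduction is missing.

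There is a second, smaller problem: your $\varepsilon$-regularization for the companion range inclusion does not work as stated when $\Phi(\one)$ is singular, because $\bigl(\Phi(aa^*)+\varepsilon\Phi(\one)\bigr)^+$ does not blow up on directions lying in $\ker\Phi(aa^*)\cap\ker\Phi(\one)$ (the generalized inverse annihilates them), so boundedness of the right-hand side does not force ${\rm ran}(\Phi(ab^*))\subseteq{\rm ran}(\Phi(aa^*))$. The paper derives the inclusion from positivity instead: $0\leq A\leq\lambda\one$ gives ${\rm ran}(\Phi(A))\subseteq{\rm ran}(\Phi(\one))$, hence ${\rm ran}(\Phi(M))\subseteq{\rm ran}(\Phi(\one))$ for every $M$; combined with the Schwarz inequality this yields $\ker(\Phi(LL^*))\subseteq\ker(\Phi(L^*))$; and finally $\ker(X)\subseteq\ker(K^*)$ implies $X\geq\lambda KK^*$ in finite dimension, whence $\ker(\Phi(X))\subseteq\ker(\Phi(KK^*))\subseteq\ker(\Phi(K)^*)$. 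In your rank-one setting, taking $L=ab^*$ (so $LL^*=|b|^2aa^*$) produces exactly the inclusion you need. With these two repairs your assembly step --- reducing positivity of ${\rm id}_2\otimes\Phi$ to rank-one inputs $\xi\xi^*$ and computing $K^*X^+K=bb^*$ for $X=aa^*$, $K=ab^*$ --- is correct, and it is a pleasant alternative to the paper's assembly, which instead applies the Schur lemma to a general positive $2\times 2$ block matrix.
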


Apart from the relaxation of the condition that $X\in M_n^{++}(\C)$, which Choi did not relax, this is an extension of Theorem~\ref{LiRu} since every completely positive map is $2$ positive, but the converse is not true; see, e.g., \cite{Choi72}.  Choi's proof uses a special case of the following well-known lemma on Schur complements.  

\begin{lemma}\label{smlm} For $X,Y\in M_n^{+}(\C)$ and $K\in M_n(\C)$ the following are equivalent:
\begin{equation*}
\left[ \begin{array}{cc} X & K\\K^* & Y\end{array}\right] \geq 0 \ .
\end{equation*}
\begin{equation*}
 {\rm ker}(X) \subseteq {\rm ker}(K^*) \quad{\rm and}\quad  Y \geq K^*X^{+}K\ .
\end{equation*}
\begin{equation*}
 {\rm ker}(Y) \subseteq {\rm ker}(K) \quad{\rm and}\quad  X \geq KY^{+}K^*\ .
\end{equation*}
\end{lemma}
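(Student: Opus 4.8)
The plan is to reduce everything to the classical Schur-complement block-diagonalization, but carried out with the Moore--Penrose inverse so that no invertibility of $X$ or $Y$ is needed. I would first prove the equivalence of the first two conditions; the equivalence with the third then comes for free, since conjugating the block matrix by the unitary swap $\begin{bmatrix}0 & I\\ I & 0\end{bmatrix}$ sends $\begin{bmatrix}X & K\\ K^* & Y\end{bmatrix}$ to $\begin{bmatrix}Y & K^*\\ K & X\end{bmatrix}$ and preserves positivity. Thus applying the first equivalence with $(X,K,Y)$ replaced by $(Y,K^*,X)$ yields $\ker(Y)\subseteq\ker(K)$ and $X\geq KY^+K^*$, which is precisely the third condition.

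For the first equivalence I would begin by showing that positivity of the block matrix already forces the kernel condition $\ker(X)\subseteq\ker(K^*)$. Testing the quadratic form on vectors $(\lambda v, w)$ with $v\in\ker(X)$ annihilates the $X$-term and leaves $2\,{\rm Re}(\bar\lambda\langle v,Kw\rangle) + \langle w,Yw\rangle$; for this to remain nonnegative as $\lambda$ ranges over $\C$ one must have $\langle v, Kw\rangle = 0$ for all $w$, i.e.\ $K^*v = 0$. This is the only place where I extract information purely from positivity, and it is elementary.

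The heart of the argument is the congruence. Writing $P := XX^+ = X^+X$ for the orthogonal projection onto ${\rm ran}(X)$, the kernel condition $\ker(X)\subseteq\ker(K^*)$ is equivalent to ${\rm ran}(K)\subseteq{\rm ran}(X)$, hence to $PK = K$ and $(I-P)K = 0$. Under this hypothesis I would conjugate by the invertible matrix $L := \begin{bmatrix} I & -X^+K\\ 0 & I\end{bmatrix}$ and compute $L^*\begin{bmatrix}X & K\\ K^* & Y\end{bmatrix}L = \begin{bmatrix} X & 0\\ 0 & Y - K^*X^+K\end{bmatrix}$, where the two off-diagonal blocks vanish precisely because $XX^+K = PK = K$ and $K^*(I-P) = ((I-P)K)^* = 0$. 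Since $L$ is invertible, congruence preserves positivity, so the block matrix is positive iff $X\geq 0$ (which is assumed) and the Schur complement $Y - K^*X^+K \geq 0$. This proves (1)$\Rightarrow$(2) (use the kernel step, then read off the Schur complement) and (2)$\Rightarrow$(1) (the diagonal form is manifestly positive, so its congruence pullback is too).

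The only delicate point is the bookkeeping for the generalized inverse: one must use $XX^+ = X^+X = P$ and translate $\ker(X)\subseteq\ker(K^*)$ into $PK=K$, so that the Schur factorization closes exactly. A naive regularization replacing $X$ by $X+\epsilon I$ would avoid generalized inverses altogether, but it is treacherous because $X\mapsto X^+$ is not continuous at singular $X$; the direct congruence above sidesteps this entirely, which is why I would work with $X^+$ from the start rather than through a limiting argument.
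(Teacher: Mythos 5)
Your proof is correct, but it handles the degenerate case by a genuinely different mechanism than the paper. The paper first proves the equivalence for invertible $X$ (testing the quadratic form on vectors $\bigl(-X^{-1}Kv,\,v\bigr)$ for one direction, and exhibiting an explicit factorization of $\left[\begin{smallmatrix} X & K \\ K^* & K^*X^{-1}K\end{smallmatrix}\right]$ as a product of a matrix with its adjoint for the other), and then treats general $X \in M_n^{+}(\C)$ by regularization: it replaces $X$ by $X + \epsilon\one$, shows that positivity of the block matrix is equivalent to $Y \geq K^*(X+\epsilon\one)^{-1}K$ for all $\epsilon>0$, and observes that this monotone family stays bounded, with limit $K^*X^+K$, precisely when ${\rm ker}(X) \subseteq {\rm ker}(K^*)$ --- so in the paper the kernel condition emerges from the blow-up analysis as $\epsilon \downarrow 0$. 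You instead extract ${\rm ker}(X) \subseteq {\rm ker}(K^*)$ directly from positivity by testing on $(\lambda v, w)$ with $v \in {\rm ker}(X)$, and then perform a one-shot exact congruence by $L = \left[\begin{smallmatrix} I & -X^+K \\ 0 & I\end{smallmatrix}\right]$, which block-diagonalizes the matrix to $\diag\bigl(X,\, Y - K^*X^+K\bigr)$ because the kernel condition gives $XX^+K = K$ and $K^*X^+X = K^*$. This avoids limits entirely and yields both implications simultaneously from a single identity, whereas the paper's route needs only the invertible-case computation plus an elementary monotone-limit argument, at the cost of the (careful but slightly delicate) $\epsilon \to 0$ bookkeeping. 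Both proofs dispose of the third condition by the same symmetry of the two diagonal corners. One small caveat: your closing remark that regularization is treacherous because $X \mapsto X^+$ is discontinuous overstates the danger --- the paper never invokes continuity of the generalized inverse, and its limit $K^*(X+\epsilon\one)^{-1}K \to K^*X^+K$ is legitimate exactly because the kernel hypothesis confines everything to ${\rm ran}(X)$, where $X$ is boundedly invertible; still, your direct congruence is arguably the cleaner bookkeeping.
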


\begin{proof} 
First suppose that $X\in M_n^{++}(\C)$. For any $v\in \C^n$,
$$\left\langle \left(\begin{array}{c} -X^{-1}Kv\\ v\end{array}\right)   \left[ \begin{array}{cc} X & K\\K^*& Y\end{array}\right]  \left(\begin{array}{c} -X^{-1}Kv\\ v\end{array}\right)\right \rangle = \langle v, Yv\rangle - \langle v,K^*X^{-1}Kv\rangle\ .$$
On the other hand,
$$
\left[ \begin{array}{cc} X & K\\K^* & K^*X^{-1}K\end{array}\right] = \left[ \begin{array}{cc} X^{1/2}   & 0\\ K^*X^{-1/2} & 0\end{array}\right] 
\left[ \begin{array}{cc} X^{1/2}   & X^{-1/2}K\\ 0 & 0\end{array}\right] \geq 0\ .
$$
Now suppose only that $X\in M_n^+(\C)$. Then for all $\epsilon>0$, $\left[ \begin{array}{cc} X+\epsilon\one  & K\\K^* & Y\end{array}\right] \geq 0$ and 
$X+\epsilon \one\in M_n^{++}(\C)$. Hence what we have just proved shows that 
\begin{equation}\label{smul2}
\left[ \begin{array}{cc} X & K\\K^* & Y\end{array}\right] \geq 0 \quad \iff \quad Y \geq K^*(X+\epsilon \one) ^{-1}K \qquad{\rm for\ all}\ \epsilon>0\ .
\end{equation}
Evidently, \eqref{smul2} holds for all $\epsilon>0$ if and only if  ${\rm ker}(X) \subseteq {\rm ker}(K^*)$, and in this case
$$
\lim_{\epsilon\to 0} K^*(X+\epsilon \one) ^{-1}K   = K^* X^+K\ .
$$
Therefore, relaxing the condition that $X\in M_n^{++}(\C)$ to $X\in M_n^+(\C)$ yields
${\displaystyle
\left[ \begin{array}{cc} X & K\\K^* & Y\end{array}\right] \geq 0}$ if and only if   ${\rm ker}(X) \subseteq {\rm ker}(K^*)$ and   $Y \geq K^*X^{+}K$.

The upper left and lower right corners of $\left[ \begin{array}{cc} X & K\\K^* & Y\end{array}\right]$ are on an equal footing and therefore it is also true that
${\displaystyle 
\left[ \begin{array}{cc} X & K\\K^* & Y\end{array}\right] \geq 0}$  if and only if  ${\rm ker}(Y) \subseteq {\rm ker}(K)$ and $ X \geq KY^{+}K^*$.
\end{proof}

We next prove Theorem~\ref{LRC} in full generality, without any assumptions of invertibility; see \cite{CMH}.

\begin{proof}[Proof of Theorem~\ref{LRC}]  Suppose that $\Phi$ is $2$-positive, and ${\rm ker}(X) \subset {\rm ker}(K^*)$. By Lemma~\ref{smlm}
$\left[ \begin{array}{cc} X & K\\K^* & K^*X^{+}K\end{array}\right] \geq 0$
and then since $\Phi$ is $2$-positive, $\left[ \begin{array}{cc} \Phi(X) & \Phi(K)\\ \Phi(K)^* & \Phi(K^*X^{+}K)\end{array}\right] \geq 0$. By Lemma~\ref{smlm}
once again, \eqref{LR7B} is valid. 

Next, suppose that  \eqref{LR7B} is valid whenever ${\rm ker}(X) \subset {\rm ker}(K^*)$. We first claim that in this case, 
${\rm ker}(\Phi(X)) \subseteq {\rm ker}(\Phi(K)^*)$. To see this, 
taking $X= \one$, and $K = L^*$ for any $L\in M_n(\C)$, we see that $\Phi$ satisfies
\begin{equation}\label{Schwarz}
\Phi(LL^*) \geq \Phi(L^*)^*\Phi(\one)^+\Phi(L^*) \geq 0
\end{equation}
for all $L\in M_n(\C)$.  In particular, $\Phi$ is positive, and $\Phi(L^*) = \Phi(L)^*$.   For any positive matrix $A$, we have for some $\lambda > 0$, $0 \leq A \leq \lambda \one$, and hence
$0 \leq \Phi(A) \leq \lambda \Phi(\one)$. Hence ${\rm ran}(\Phi(A)) \subseteq {\rm ran}(\Phi(\one))$. Since any $L^*\in M_n(\C)$ can be written as a linear combination of $4$ non-negative matrices, it follows that ${\rm ran}(\Phi(L^*)) \subseteq {\rm ran}(\Phi(\one))$ for all $L\in M_n(\C)$, and consequently it follows from \eqref{Schwarz} that
\begin{equation}\label{KK}
 {\rm ker}(\Phi(LL^*)) \subseteq  {\rm ker}(\Phi(L^*)) \qquad{\rm for\ all}\quad  L \in M_n(\C)\ .
 \end{equation} 
Finally, since  ${\rm ker}(X) \subset {\rm ker}(K^*)$, $X \geq \lambda KK^*$ for some $\lambda > 0$, and then since $\Phi$ is positive and moreover 
satisfies \eqref{Schwarz},  $\Phi(X) \geq \lambda \Phi(KK^*)$, and then by \eqref{KK},
\begin{equation*}
{\rm ker}(\Phi(X)) \subseteq {\rm ker}(\Phi(KK^*)) \subseteq {\rm ker}(\Phi(K)^*)\ .
 \end{equation*} 

 Now suppose that 
 $\left[ \begin{array}{cc} X & K\\K^* & Y\end{array}\right] \geq 0$. By Lemma~\ref{smlm},  ${\rm ker}(X) \subset {\rm ker}(K^*)$ and  $Y \geq K^*X^{+}K$.  By  \eqref{LR7B} and the positivity of $\Phi$,
$$
\Phi(Y) \geq \Phi\left(K^* X^+ K\right) \geq \Phi(K)^*\Phi(X)^+\Phi(K)\ ,
$$
and we have just seen that ${\rm ker}(\Phi(X)) \subseteq{\rm ker}(\Phi(K)^*)$.   
By Lemma~\ref{smlm} again,  
$$\left[ \begin{array}{cc} \Phi(X) & \Phi(K)\\\Phi(K)^* & \Phi(Y)\end{array}\right]  \geq \left[ \begin{array}{cc} \Phi(X) & \Phi(K)\\\Phi(K)^* & \Phi(K^*X^{-1}K)\end{array}\right] \geq 0\ ,$$ and hence $\Phi$ is $2$-positive. 
\end{proof}

Theorem~\ref{LRC} has many consequences. The first is the  following immediate corollary:

\begin{corollary}\label{cl2}  Let $\Phi: M_n(\C) \to M_m(\C)$  be $2$-positive and unital. 
Then for all $K\in M_n(\C)$, 
\begin{equation}\label{LR7K}
\Phi(K^*K) \geq \Phi(K)^*\Phi(K)\ .
\end{equation}
\end{corollary}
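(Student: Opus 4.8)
The plan is to derive Corollary~\ref{cl2} as the special case $X = \one$ of Theorem~\ref{LRC}. Since $\Phi$ is $2$-positive by hypothesis, and since $X = \one \in M_n^+(\C)$ has trivial kernel, the kernel condition $\ker(X) = \{0\} \subseteq \ker(K^*)$ is automatically satisfied for every $K\in M_n(\C)$. Thus Theorem~\ref{LRC} applies directly with this choice of $X$.

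First I would compute the two sides of \eqref{LR7B} with $X = \one$. The Moore--Penrose inverse of the identity is the identity, so $X^+ = \one$, and hence the left-hand side becomes $\Phi(K^*\one^+ K) = \Phi(K^* K)$. For the right-hand side I need $\Phi(X)^+ = \Phi(\one)^+$. Here the unitality hypothesis enters: since $\Phi$ is unital, $\Phi(\one) = \one$ (the identity in $M_m(\C)$), and therefore $\Phi(\one)^+ = \one$ as well. The right-hand side of \eqref{LR7B} then collapses to $\Phi(K)^* \one \,\Phi(K) = \Phi(K)^*\Phi(K)$. Substituting both computations into \eqref{LR7B} yields precisely \eqref{LR7K}.

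I do not expect any genuine obstacle here, as this is stated in the excerpt as an \emph{immediate} corollary; the content is entirely in Theorem~\ref{LRC}, and the corollary is just the evaluation at $X=\one$. The only two points that require a word of care are verifying that the kernel hypothesis of Theorem~\ref{LRC} is vacuous when $X = \one$ (which it is, since $\ker(\one) = \{0\}$), and tracking that the two appearances of the identity ($\one_n$ on the domain side and $\one_m$ on the codomain side) are correctly matched by unitality so that both generalized inverses reduce to identities. Neither point needs more than a sentence.

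\begin{proof} Apply Theorem~\ref{LRC} with $X = \one$. Since $\ker(\one) = \{0\}$, the condition $\ker(X) \subseteq \ker(K^*)$ holds for every $K\in M_n(\C)$, and since $\Phi$ is $2$-positive, \eqref{LR7B} is valid. As $\one^+ = \one$, the left-hand side of \eqref{LR7B} is $\Phi(K^*\one^+ K) = \Phi(K^*K)$. Because $\Phi$ is unital, $\Phi(\one) = \one$, so $\Phi(\one)^+ = \one$ and the right-hand side of \eqref{LR7B} is $\Phi(K)^*\Phi(\one)^+\Phi(K) = \Phi(K)^*\Phi(K)$. This establishes \eqref{LR7K}.
\end{proof}
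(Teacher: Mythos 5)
Your proof is correct and is exactly the derivation the paper intends: the paper states Corollary~\ref{cl2} as an immediate consequence of Theorem~\ref{LRC}, and the specialization $X=\one$ together with unitality (so that $\Phi(\one)^+ = \one$) is precisely that argument. The only cosmetic remark is that Theorem~\ref{LRC} is stated in the paper for maps $M_n(\C)\to M_n(\C)$ while the corollary allows $M_n(\C)\to M_m(\C)$; the proof of the theorem carries over verbatim, so this does not affect your argument.
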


Kadison \cite{K52} had proved the inequality \eqref{LR7K} for all {\em self-adjoint} $K$ and all positive $\Phi$, and referred to \eqref{LR7K} as a Schwarz inequality. 
Unital maps that satisfy \eqref{LR7K} are known as {\em Schwarz maps}. 
(The terminology is nearly, but not completely, standard.  Petz \cite[p. 62]{P86} calls any map satisfying \eqref{LR7K} a Schwarz map.)  
 Choi proved  in \cite[Appendix A]{Choi80} that there exist Schwarz maps that are not $2$-positive; e.g., the 
map $\Phi$ on $M_2(\C)$ given by
\begin{equation*}
\Phi(X) = \frac12 X^T  + \frac14 \tr[X] I\ ,
\end{equation*}
where $X^T$ is the transpose of $X$. His construction was further developed in \cite{T85} and \cite[Example 3.6]{HMPB}. 

Corollary~\ref{cl2} is the source of many monotonicity inequalities.
However, in all proofs to come,  we only need  \eqref{LR7B}  (with $\Phi^\dagger$ in place of $\Phi$ for later convenience)  in the {\em tracial} form:
\begin{equation}\label{trform}
\tr\left[\Phi^\dagger \left(K^*X^+K\right)\right] \geq \tr\left[\Phi^\dagger(K)^*\Phi^\dagger(X)^+\Phi^\dagger(K)\right]
\quad{\rm whenever} \quad {\rm ker}(X) \subseteq {\rm ker}(K^*),
\end{equation}
which, when $\Phi^\dagger$ is  quantum operation, and hence trace preserving, is the same as
\begin{equation}\label{trform2}
\tr\left[K^*X^+K\right] \geq \tr\left[\Phi^\dagger(K)^*\Phi^\dagger(X)^+\Phi^\dagger(K)\right]\quad{\rm whenever} \quad {\rm ker}(X) \subseteq {\rm ker}(K^*)\ ,
\end{equation}

One may expect  that \eqref{trform} is valid for a wider class of maps $\Phi^\dagger$ than $2$-positive maps, and this is the case. It was recently proved \cite{CMH} by myself and Alexander M\"uller-Hermes that \eqref{trform2} is satisfied if and only if $\Phi$ satisfies the Schwarz inequality, which while 
\eqref{trform} is satisfied if and only if $\Phi$ is a generalized Schwarz map, as defined in \cite{CMH}.  However, it is the fact that \eqref{trform2} is valid whenever $\Phi$ is $2$-positive and trace preserving that we shall need here. In fact, we will only need to know that this inequality is valid for all quantum operations $\Phi^\dagger$. In short, we we shall see, \eqref{trform2} is in some sense ``the mother of all monotonicity and convexity theorems for trace functions".

\subsection{Operator monotonicity}

A function $f:(0,\infty)\to \R$ is said to be {\em operator monotone increasing} in case for all   $A,B\in M_n^{++}(\C)$, any $n$, 
$A\geq B$ implies $f(A) \geq f(B)$, and $f$ is said to be  {\em operator monotone decreasing} if $-f$ is operator monotone increasing. 

This inverse function $f(x) = x^{-1}$ is operator monotone decreasing. To see this, let $A,B> 0$, and define $X := (A+B)^{-1/2}A(A+B)^{-1/2}$  and $Y :=  (A+B)^{-1/2}B(A+B)^{-1/2}$. 
Evidently, $A\geq B \iff X \geq Y$, and $Y^{-1} \geq X^{-1} \iff B^{-1} \geq A^{-1}$. Since $X+Y =1$, $X$ and $Y$ commute, and then by the spectral theorem $X \geq Y$ implies $Y^{-1}\ \geq X^{-1}$ simply because $f(x) = x^{-1}$ is monotone decreasing in $x$. 

This  example is the source of many other examples: For instance, let $0 < t < 1$.  Then there is the integral representation
$$x^t  =  \frac {\sin(\pi t)} {\pi} \int_0^\infty \lambda^t \left(\frac{1}{\lambda} - \frac{1}{\lambda+x}\right){\rm d}{\lambda}   =
\frac {\sin(\pi t)} {\pi} \int_0^\infty \lambda^t   \frac{x}{\lambda+x}\frac{{\rm d}{\lambda} }{\lambda}
\ ,$$
and now it follows from what we have just proved that $f(x) = x^t$ is operator monotone increasing for all $0 < t < 1$. By a theorem of L\"owner \cite{Lo34}, every operator monotone increasing functions has an integral representation of this general form (see below), and this is the deep part of the theory. Simon's book \cite{S19} contains a beautiful account, with many proofs, some new, of L\"owner's Theorem. However, for the specific examples that arise in this paper, elementary arguments suffice, such as the ones provided just above. 

Nonetheless, we quote L\"owner's Theorem \cite{Lo34} as stated  in Ando and Hiai \cite{AH11}; see also \cite{A78} and \cite{S19}:

\begin{theorem}[L\"owner's theorem]\label{LOT} For $x,\lambda \in (0,\infty)$, define the function $\phi(x,\lambda)$ by
\begin{equation*}
\phi(x,\lambda) := (1 + \lambda) \frac{x}{\lambda + x}\ ,
\end{equation*}
and notice that for each $x$, $\phi(x,\lambda)$ is a bounded function of $\lambda$, so that for any finite positive Borel measure $\mu$  on $(0,\infty)$, all $\beta \in \R$ and all $\gamma \geq 0$,
\begin{equation*}
h(x) :=  \beta + \gamma x + \int_{(0,\infty)} \phi(x,\lambda){\rm d}\mu\ ,
\end{equation*}
is a well defined function on $\R_+$.  The mapping $(\beta,\gamma,\mu) \mapsto h$ is an affine isomorphism  onto the class of operator monotone increasing functions.
\end{theorem}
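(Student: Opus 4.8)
The statement is the substantial half of L\"owner's theorem, and I would prove it in three stages, reducing operator monotonicity first to a positivity condition on divided-difference matrices, then to the existence of a Pick (Herglotz--Nevanlinna) analytic continuation, and only then reading off the integral representation.

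\emph{Stage 1: from monotonicity to positive L\"owner matrices.} For distinct points $x_1,\dots,x_N\in(0,\infty)$ I would form the L\"owner matrix $L=\big[f^{[1]}(x_i,x_j)\big]_{i,j=1}^N$, where $f^{[1]}(x,y)=\frac{f(x)-f(y)}{x-y}$ off the diagonal and $f^{[1]}(x,x)=f'(x)$ on it. The preliminary point is that an operator monotone $f$ is $C^1$: already monotonicity of order two forces the difference quotients to be monotone, which yields differentiability. The mechanism linking $L$ to monotonicity is the Daletskii--Krein formula: at $A=\diag(x_1,\dots,x_N)$ the Fr\'echet derivative of $B\mapsto f(B)$ is the Hadamard product $Df(A)[H]=L\circ H$ in the eigenbasis of $A$. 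Differentiating the relation $f(A)\ge f(A-tH)$ for $H\ge 0$ at $t=0$ gives $L\circ H\ge 0$ for every $H\ge 0$, and by Schur's theorem on Hadamard products this is equivalent to $L\ge 0$. Thus operator monotonicity is equivalent to the positivity of every finite L\"owner matrix.

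\emph{Stage 2: from positive L\"owner matrices to a Pick function.} This is the analytic heart and the step I expect to be the main obstacle. The positivity of all L\"owner matrices is precisely the consistency condition of a Nevanlinna--Pick interpolation problem, and the goal is to promote it to an analytic statement: that $f$ extends to a Pick function, i.e.\ a map of the upper half-plane $\{\operatorname{Im} z>0\}$ into its closure that is real on $(0,\infty)$. One standard route is to interpolate $f$ at finitely many points by explicit Pick functions, extract uniform bounds from the positive-semidefinite L\"owner kernel by a reproducing-kernel / normal-family compactness argument, pass to the limit to obtain a continuation on $\{\operatorname{Im} z>0\}$, and then invoke the Schwarz reflection principle across $(0,\infty)$ to see that the continuation is real-analytic there and maps into the closed upper half-plane.

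\emph{Stage 3: the representation and the isomorphism.} Once $f$ is a Pick function I would invoke the Nevanlinna representation: every such function has a unique form $f(z)=a+bz+\int_{\R}\frac{1+\lambda z}{\lambda-z}\,d\nu(\lambda)$ with $a\in\R$, $b\ge 0$ and $\nu$ a finite positive measure. Because $f$ is real-analytic on all of $(0,\infty)$, Schwarz reflection shows $\nu$ places no mass there, so $\operatorname{supp}\nu\subseteq(-\infty,0]$; the substitution $\lambda\mapsto-\lambda$ moves the support to $[0,\infty)$ and turns the kernel into a term affine in $x$ plus a multiple of $(\lambda+x)^{-1}$. Collecting the $x$-independent contributions into $\beta$ and the coefficient of $x$ into $\gamma\ge 0$, and transporting $\nu$ to the measure $\mu$ for which the surviving kernel is exactly $\phi(x,\lambda)=(1+\lambda)\frac{x}{\lambda+x}$, produces $h(x)=\beta+\gamma x+\int_{(0,\infty)}\phi(x,\lambda)\,d\mu$; the normalization of $\phi$, which is bounded in $\lambda$ for each fixed $x$ as noted in the statement, is precisely what keeps $\mu$ finite. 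The map $(\beta,\gamma,\mu)\mapsto h$ is affine by inspection and injective by the uniqueness clause of Nevanlinna's theorem, while the reverse inclusion — that every such $h$ is operator monotone — needs no new work: writing $\phi(x,\lambda)=(1+\lambda)-(1+\lambda)\lambda\,(\lambda+x)^{-1}$ exhibits it as a positive constant minus a nonnegative multiple of $x\mapsto(\lambda+x)^{-1}$, which is operator monotone decreasing as shown earlier in this lecture, so each $\phi(\cdot,\lambda)$, and hence each $h$, is operator monotone increasing.
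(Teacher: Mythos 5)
The first thing to note is that the paper does not prove this statement at all: Theorem~\ref{LOT} is explicitly \emph{quoted} from L\"owner as stated by Ando and Hiai, with Simon's book cited for proofs, and the only direction verified in the text is the easy inclusion --- that each $\phi(\cdot,\lambda)$ is operator monotone (indeed operator concave) via the identity $\phi(x,\lambda)=(1+\lambda)\bigl(1-\lambda(\lambda+x)^{-1}\bigr)$ and the elementary facts about $x\mapsto x^{-1}$ established just before the statement --- hence every $h$ of the stated form is operator monotone. Your proposal therefore attempts strictly more than the paper does; on the one direction the paper does address, your closing argument in Stage 3 is identical to the paper's remark.

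For the hard direction (surjectivity), your outline follows the standard route --- positivity of L\"owner matrices, then a Pick-function extension, then the Nevanlinna representation --- but it contains a genuine gap: Stage 2 is not a proof. Passing from positivity of all finite L\"owner matrices to an analytic continuation of $f$ into the upper half-plane \emph{is} L\"owner's theorem; everything before it is a routine reduction and everything after it is classical function theory. You yourself call it ``the analytic heart'' and then only list the ingredients of ``one standard route'' (finite interpolation by Pick functions, reproducing-kernel compactness, normal families, Schwarz reflection) without carrying out any of them, so the central assertion is assumed rather than established. Two smaller points would also need repair in a complete write-up: (i) the $C^1$ regularity invoked in Stage 1, needed for the Daletskii--Krein formula, is itself a nontrivial lemma (monotonicity of order $2$ yields continuous differentiability only after a genuine argument, e.g.\ by mollification); and (ii) in Stage 3, when the Nevanlinna measure supported in $(-\infty,0]$ is transported to $(0,\infty)$, a possible atom at $\lambda=0$ contributes a $-1/x$ term that cannot be absorbed into $\beta+\gamma x+\int\phi(x,\lambda)\,{\rm d}\mu$, and the transported density behaves like $(1+\lambda^2)/\bigl(\lambda(1+\lambda)\bigr)$, which blows up at $0$, so finiteness of $\mu$ is not automatic; this is precisely where one must use (or build into the class being parametrized) that $h(0^+)$ is finite, a normalization the theorem as quoted leaves implicit.
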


Note that 
\begin{equation*}
 \frac{x}{\lambda + x}   =  1 -\frac{\lambda}{\lambda+x} \ .
\end{equation*}
By the opening remarks on the inverse function,  it is clear that for each $\lambda$, $\phi(\lambda, x)$ is concave and monotone increasing in $x$, not only as a function of a real variable, but also in the operator sense.   Thus all operator monotone functions $h$  that are real valued on $(0,\infty)$ are also operator concave, and thus $-h$ is operator convex, and monotone decreasing.

\subsection{Proof of the fundamental monotonicity theorems}

As we have indicated in the introduction,  Theorem~\ref{L2M} is equivalent to Theorem~\ref{L1M} by a duality argument, and 
Theorem~\ref{L3M} is equivalent to an integrated form of Theorem\ref{L1M}. This equivalence was pointed out by Hiai and Petz \cite{HP12}  who made 
use of it to give a beautiful proof of these theorems and more.  We give a development of their ideas starting from a duality theorem that not only clarifies the 
origins of the equivalence that they discovered, but also allows the relaxation of some of the  invertibility and non-degeneracy assumptions in their work.

Theorem~\ref{L1M}, Theorem~\ref{L2M} and Theorem~\ref{L3M} can all be viewed as inequalities between a pair of quadratic forms on the Hilbert space 
$\cH_n$ consisting of  $M_n(\C)$ equipped with the Hilbert-Schmidt inner product.  We now explain a basic duality between possibly degenerate quadratic forms on a finite dimensional Hilbert space that we shall apply 

Let $A$ and $B$ be positive definite operators on some finite dimensional Hilbert space $\cH$; that is, $A,B\in \cB^{++}(\cH)$.
Then 
\begin{equation}\label{duality1}
A \leq B \quad \iff\quad B^{-1} \leq A^{-1}\ . 
\end{equation}
One way to see this is to use the readily verified  fact that for all $X\in \cB^{++}(\cH)$,  the Legendre transform of the quadratic form $v\mapsto \frac12\langle v,Xv\rangle$ is $w \mapsto \frac12\langle w,X^{-1}w\rangle$. 
That is,
\begin{equation}\label{duality2}
\sup_{v\in \cH}\{ \Re \langle v,w\rangle - \frac12 \langle v,Xv\rangle\ \} = \frac12\langle w,X^{-1}w\rangle \ .
\end{equation}
If the inequality on the left in \eqref{duality1} is valid, then for all $v,w\in \cH$,
$$ \Re \langle v,w\rangle - \frac12 \langle v,Bv\rangle  \leq \Re \langle v,w\rangle - \frac12 \langle v,Av\rangle \ ,$$
and then taking the supremum over $v$ and using \eqref{duality2}, one obtain the inequality on the right  in \eqref{duality1}.   The same sort of argument, 
or an appeal to the fact that the Legendre transform is involutive, shows that the inequality on the right implies the inequality on the left, thus proving  \eqref{duality1}.

There is a variant of \eqref{duality2} that is valid for $X\in \cB^{+}(\cH)$ and not only for $X\in \cB^{++}(\cH)$. Let $P$ denote the orthogonal projection onto ${\rm ran}(X)$,  the range of of $X$. 
Since $X\in \cB^{+}(\cH)$, $X|_{{\rm ran}(X)}$ is invertible and 
\begin{equation}\label{duality3}
P\left( X|_{{\rm ran}(X)}\right)^{-1}P = X^+\ ,
\end{equation}
where $X^+$ is the generalized inverse of $X$. Then for all $w\in {\rm ran}(X)$, and all $v\in \cH$, 
$$
\Re\langle v,w\rangle - \frac12\langle v,Xv\rangle  = \Re\langle Pv,w\rangle - \frac12\langle Pv,X Pv\rangle\ ,
$$
and therefore, using \eqref{duality2} and \eqref{duality3},
\begin{eqnarray}\label{duality4}
\sup_{v\in \cH} \{ \ \Re\langle v,w\rangle - \frac12\langle v,Xv\rangle \ \} &=& \sup_{x\in {\rm ran}(X)} \{ \Re\langle x,w\rangle - \frac12\langle x,X|_{{\rm ran}(X)} x\rangle \nonumber\\
&=&  \frac12 \langle w,X^+w\rangle\ .
\end{eqnarray}
In the next theorem, we apply these ideas to compare positive operators on two different Hilbert spaces given a linear map form one to the other. 

\begin{theorem}\label{NDual} Let $\cH$ and $\cK$ be two finite dimensional Hilbert spaces, and let $T$ be a linear transformation from $\cH$ into $\cK$. Let $A\in \cB^+(\cH)$ and $B \in \cB^{+}(\cK)$. Suppose that
\begin{equation}\label{duality5}
{\rm ran}(T^*) \subseteq {\rm ran}(A) \quad{\rm and}\quad  {\rm ran}(T) \subseteq {\rm ran}(B)\ .
\end{equation}
Then
\begin{equation}\label{duality6}
\langle v, T^*BTv\rangle \leq \langle v,Av\rangle \qquad{\rm for\ all}\quad v\in \cH
\end{equation}
if and only if 
\begin{equation}\label{duality7}
\langle w, TA^+T^* w\rangle \leq \langle w,B^+w\rangle \qquad{\rm for\ all}\quad w\in \cK\ .
\end{equation}
\end{theorem}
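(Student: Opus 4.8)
The plan is to read both \eqref{duality6} and \eqref{duality7} as comparisons of convex quadratics and to pass between them using the degenerate Legendre identity \eqref{duality4}. The two range hypotheses in \eqref{duality5} are precisely what make the relevant variational representations valid, so they will be used at exactly the two places where \eqref{duality4} is invoked. Since the statement is symmetric under the involution $X\mapsto X^+$, I only need to prove the forward implication \eqref{duality6}$\Rightarrow$\eqref{duality7} and then deduce the converse formally.

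For the forward direction I would first reduce to $w\in{\rm ran}(B)$. Because ${\rm ran}(B)^\perp={\rm ker}(B)\subseteq{\rm ran}(T)^\perp={\rm ker}(T^*)$ by the second inclusion in \eqref{duality5}, both sides of \eqref{duality7} are unchanged if $w$ is replaced by its orthogonal projection onto ${\rm ran}(B)$, so it suffices to treat such $w$. For these $w$, the first inclusion in \eqref{duality5} guarantees $T^*w\in{\rm ran}(A)$, so \eqref{duality4} applied to $X=A$ at the point $T^*w$ gives
$$\tfrac12\langle w, TA^+T^*w\rangle = \tfrac12\langle T^*w, A^+T^*w\rangle = \sup_{v\in\cH}\{\Re\langle Tv,w\rangle - \tfrac12\langle v,Av\rangle\}.$$
Now \eqref{duality6} lets me replace $\tfrac12\langle v,Av\rangle$ inside each bracket by the no-larger quantity $\tfrac12\langle Tv,BTv\rangle$, which only enlarges the supremum; writing $y=Tv$ and relaxing the feasible set from ${\rm ran}(T)$ to all of $\cK$ enlarges it further, so the right-hand side is bounded by $\sup_{y\in\cK}\{\Re\langle y,w\rangle - \tfrac12\langle y,By\rangle\}$. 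Since $w\in{\rm ran}(B)$, this last supremum equals $\tfrac12\langle w,B^+w\rangle$ by \eqref{duality4} applied to $X=B$, and the resulting chain is exactly \eqref{duality7}.

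For the converse I would avoid repeating the computation and instead apply the forward implication to the dual data $\widetilde T := T^*:\cK\to\cH$, $\widetilde A := B^+\in\cB^+(\cK)$, $\widetilde B := A^+\in\cB^+(\cH)$. The range conditions \eqref{duality5} for this triple read ${\rm ran}(T)\subseteq{\rm ran}(B)$ and ${\rm ran}(T^*)\subseteq{\rm ran}(A)$, i.e.\ exactly the original hypotheses. The hypothesis \eqref{duality6} for the dual triple is $\widetilde T^*\widetilde B\widetilde T = TA^+T^*\le B^+ = \widetilde A$, which is our \eqref{duality7}, and its conclusion \eqref{duality7} is $\widetilde T\widetilde A^+\widetilde T^* = T^*(B^+)^+T \le (A^+)^+ = \widetilde B^+$. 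Using the involution $(X^+)^+=X$ valid for all $X\in\cB^+$, this reads $T^*BT\le A$, which is \eqref{duality6}. Hence the two inequalities are equivalent.

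The only real subtlety, and the single place requiring care, is the degeneracy: \eqref{duality4} represents $\tfrac12\langle w,X^+w\rangle$ as a supremum only for $w\in{\rm ran}(X)$ (the supremum is $+\infty$ otherwise), so I must (i) restrict to $w\in{\rm ran}(B)$ after checking that both sides of \eqref{duality7} ignore the ${\rm ker}(B)$-component of $w$, and (ii) verify $T^*w\in{\rm ran}(A)$ before applying \eqref{duality4} to $A$. Both checks are supplied precisely by the inclusions \eqref{duality5}, which is why those hypotheses appear; everything else is just the order-reversing property of the Legendre transform transported through the linear map $T$.
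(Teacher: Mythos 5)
Your proof is correct and follows essentially the same route as the paper's: both directions rest on the degenerate Legendre identity \eqref{duality4}, with the two inclusions in \eqref{duality5} invoked at exactly the two places you indicate, and the converse is obtained from the same symmetry $A\leftrightarrow A^+$, $B\leftrightarrow B^+$, $T\leftrightarrow T^*$, $\cH\leftrightarrow\cK$ that the paper uses. The only cosmetic difference is that you project $w$ onto ${\rm ran}(B)$ at the outset, whereas the paper keeps a general $w$ and inserts the projection $Q$ inside the supremum; the content is identical.
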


\begin{proof} Suppose that \eqref{duality6} is valid. Then for all $v\in \cH$ and all $w\in \cK$,
\begin{equation}\label{duality8}
\Re\langle v,T^*w\rangle - \frac12 \langle v,Av\rangle\  \leq  \ \Re\langle v,T^*w\rangle - \frac12 \langle v, T^*BTv\rangle\ .
\end{equation} 
Let $P$ denote the orthogonal projection onto ${\rm ran}(A)$. By \eqref{duality5}, $T^*w = PT^*w$ and therefore by \eqref{duality4},
\begin{equation}\label{duality9} 
\sup_{v\in \cH}\{ \Re\langle v,T^*w\rangle - \frac12 \langle v,Av\rangle\ \}  =  \frac12\langle T^*w,A^+T^*w\rangle\ .
\end{equation}

Next,  let $Q$ denote the orthogonal projection onto the range of $B$, and consider the right side of \eqref{duality8}, 
$\Re\langle v,T^*w\rangle = \Re\langle Tv,w\rangle  =  \Re\langle Tv,Qw\rangle$, since by \eqref{duality5}, $Tv\in {\rm ran}(B)$.  Therefore,
\begin{eqnarray}\label{duality10}
\sup_{v\in \cH}\{  \Re\langle v,T^*w\rangle - \frac12 \langle v, T^*BTv\rangle\ \} &=& \sup_{v\in \cH}\{  \Re\langle Tv,Qw\rangle - \frac12 \langle v, T^*BTv\rangle\ \}\nonumber \\
&\leq&  \sup_{x\in \cH}\{  \Re\langle x,Qw\rangle - \frac12 \langle x, B x\rangle\ \}\nonumber \\
&=& \frac12\langle w,B^+w\rangle
\end{eqnarray}
where the last equality comes from \eqref{duality4}. Combining \eqref{duality8}, \eqref{duality9} and \eqref{duality10} shows that \eqref{duality6} implies \eqref{duality7}.

The statement that \eqref{duality7} implies \eqref{duality6}  becomes the statement we have just proves if one swaps $A$ and $A+$, $B$ and $B^+$ and $T$ and $T^*$ and of 
course $\cH$ and $\cK$.  By this symmetry (or the same argument repeated), the proof is complete. 
\end{proof}

 The following construction permits us to write the quadratic forms appearing in the fundamental monotonicity inequalities in manner that permits the application of  Theorem~\ref{NDual}.
Let $\mathcal{H}_m$ denote $M_m(\C)$ equipped with the Hilbert-Schmidt inner product. 
For any $X \in M_m(\C)$, define the operators $L_X$  and $R_X$ on $\mathcal{H}_m$ by $L_X A = XA$ and $R_XA =AX$ respectively.   Note that for any $X,Y\in M_m(\C)$,  $R_X$ and $L_Y$ commute, and that if $X,Y\in M_m^+(\C)$, then $L_Y,R_X\in \cB^+(\cH_m)$. It follows that for any function $f:(0,\infty)\to (0,\infty)$ extended by $f(0)= 0$, one may define the positive semidefinite operator
\begin{equation}\label{Jdef}
{\mathbb J}_f(X,Y) := f(R_X L_Y^{+} )L_Y ,
\end{equation}
for any $Y,X\geq 0$.

Let $\{u_1,\dots,u_m\}$ be an orthonormal basis of $\C^m$ consisting of eigenvectors of $X$; $Xu_j = \lambda_j u_j$, $i=1,\dots,m$. Likewise, let
$\{v_1,\dots,v_m\}$ be an orthonormal basis of $\C^m$ consisting of eigenvectors of $Y$; $Yv_i = \mu_iv_i$. Then define
$$
E_{i,j} = |v_i\rangle\langle u_j| \qquad 1 \leq i,j \leq m\ ,
$$
so that $\{\ E_{i,j}\ :\ 1 \leq i,j \leq m\}$ is an orthonormal basis of  $\mathcal{H}_m$ that simultaneously diagonalizes $R_X$ and $L_Y$. 
Then 
$$
{\mathbb J}_f(X,Y) E_{i,j} = \left(f(\lambda_j\mu_i^+)\mu_i\right) E_{i,j}
$$
where for $x\geq 0$, $x^+$ denotes $x^{-1}$ for $x>0$ and $0$ for $x=0$.  Consequently,
\begin{equation*}
{\mathbb J}_f(X,Y)^+ E_{i,j} = \left(f(\lambda_j\mu_i^+)\mu_i\right)^+ E_{i,j}\ .
\end{equation*}

To make the connection with the fundamental monotonicity inequalities, consider first  the case $f_t(x) = x^t$, $0 < t < 1$. Then for $X,Y\in M_m^{++}(\C)$, and all $K\in M_m(\C)$,
\begin{equation}\label{Jexamp}
 {\mathbb J}_{f_t}(X,Y)(K) = Y^{1-t}K X^{t}\quad{\rm and}\quad    {\mathbb J}_{f_t}(X,Y)^{-1}(K) = Y^{t-1} K X^{-t}\ .
\end{equation}

Consider 
a positive map $\Phi:M_n(\C)\to M_m(\C)$, so that $\Phi^\dagger: M_m(\C) \to M_n(\C)$.   Then by \eqref{Jexamp},  for this choice of $f$  the inequality \eqref{lieb21} of Theorem~\ref{L1M} can be written as
\begin{equation*}
\langle \Phi(K),  {\mathbb J}_{f_t}(X,Y)\Phi(K)\rangle \   \leq \ \langle K, {\mathbb J}_{f_t}(\Phi^\dagger(X),\Phi^\dagger(Y)) K\rangle
\end{equation*}
while the inequality \eqref{lieb22} of Theorem~\ref{L2M} can be written as
\begin{equation*}
 \langle \Phi^\dagger(K), {\mathbb J}_{f_t}^+(\Phi^\dagger(X),\Phi^\dagger(Y))\Phi^\dagger(K)\rangle \  \leq\    \langle K, {\mathbb J}^{-1}_{f_t}(X,Y)  K\rangle
\end{equation*}

We wish to apply Theorem~\ref{NDual} with
$\cH_m$ in place of $\cH$, $\cH_n$ in place of $\cK$, ${\mathbb J}_f(X,Y)$ in place of $A$, ${\mathbb J}_f^+(\Phi^\dagger(X),\Phi^\dagger(Y))$, and $\Phi^\dagger$ in place of $T$.  Once we have verified the condition \eqref{duality5} of  Theorem~\ref{NDual}, we will then know that Theorem~\ref{L1M} and Theorem~\ref{L2M} are equivalent, so that as soon as we have proved one, we have the other. Moreover, for each choice of $f$, not only $f(x) = x^t$, we get an equivalent pair of inequalities. The next choice of immediate interest is 
\begin{equation}\label{logmeanf}
g(x) := \int_0^1 x^t {\rm d} t\ .
\end{equation}
Then for all $X.Y\in M_m^{++}(\C)$ and all $K\in M_m(\C)$,
\begin{equation}\label{Jexamp3}
{\mathbb J}_g(X,Y)(K) = \int_0^1Y^{1-t}K X^{t}{\rm d}t. = \int_0^1 {\mathbb J}_{f_t}(X,Y)(K){\rm d}t  \ .
\end{equation}
Then by since for all $\lambda,\mu>0$, 
$$\int_0^1\lambda^t \mu^{1-t}{\rm d}t =  \frac{\lambda - \mu}{\log \lambda - \log \mu}$$
and
$$
\int_0^\infty \frac{1}{s+\lambda}\frac{1}{s+\mu}{\rm d}s = \frac{\log \lambda - \log \mu}{\lambda - \mu}\ ,
$$
\begin{equation}\label{Jexamp4}
{\mathbb J}_g(X,Y)^{-1}(K) = \int_0^\infty \frac{1}{s+Y}K\frac{1}{s+X}{\rm d}s\ .
\end{equation}

Comparing \eqref{Jexamp4} and inequality \eqref{lieb23} of Theorem~\ref{L3M}, it is clear that the latter can be written as 
\begin{equation*}
 \langle \Phi^\dagger(K), {\mathbb J}_{g}^+(\Phi^\dagger(X),\Phi^\dagger(Y))\Phi^\dagger(K)\rangle \  \leq\    \langle K, {\mathbb J}^{-1}_{g}(X,Y)  K\rangle
\end{equation*}
Then, one we have verified  the condition \eqref{duality5} of  Theorem~\ref{NDual}, we will then know that Theorem~\ref{L3M} is equivalent to 
\begin{equation}\label{L1MJversB}
\langle \Phi(K),  {\mathbb J}_{g}(X,Y)\Phi(K)\rangle \   \leq \ \langle K, {\mathbb J}_{g}(\Phi^\dagger(X),\Phi^\dagger(Y)) K\rangle
\end{equation}
but by the definition \eqref{logmeanf} and then \eqref{Jexamp3}, this is simply in integrated form of Theorem~\ref{L1M}. In fact, in the very last argument in Lieb's paper \cite{L73} in which he 
discusses the equivalence of the original convexity forms of his theorems, it is pointed out that using the integral representation for powers $x^r$, $0 < r < 1$, in terms of the functions $(s+x)^{-1}$,
one can view the inequality \eqref{lieb22} of Theorem~\ref{L2M} as an integrated form of, \eqref{L1MJversB}, the inequality \eqref{lieb23} of Theorem~\ref{L3M}. 

Thus once we know that Theorem~\ref{NDual} applies in our setting, we will know that all three of the fundamental monotonicity theorems  are equivalent. However, more is true: As Hiai and Petz discovered, one can use the equivalence to give a very simple proof of, say, Theorem~\ref{L1M}.

This brings the focus to verifying the condition \eqref{duality5} of Theorem~\ref{NDual} which, with the current choices for $A$, $B$ and $T$  becomes
\begin{equation}\label{duality5B}
{\rm ran}(\Phi) \subseteq {\rm ran}({\mathbb J}_f(X,Y)) \quad{\rm and}\quad  {\rm ran}(\Phi^\dagger) \subseteq {\rm ran}({\mathbb J}_f^+(\Phi^\dagger(X),\Phi^\dagger(Y))\ .
\end{equation}

By what we have noted above, if $X,Y\in M_m^{++}(\C)$, then  ${\mathbb J}_f(X,Y)\in \cB^{++}(\cH_m)$, and hence ${\rm ran}({\mathbb J}_f(X,Y)) = \cH_m$. Therefore, the condition on the left in 
\eqref{duality5B} is satisfied whenever  $X,Y\in M_m^{++}(\C)$.  The condition on the right is more subtle since 
$X,Y\in M_m^{++}(\C)$ does not imply that $\Phi^\dagger(X),\Phi^\dagger(Y)\in M_n^{++}(\C)$.  However, we have the following lemma which is a special case of a lemma proved in \cite{CMH}:

\begin{lemma}\label{CMHlem} Let  $\Phi:M_n(\C)\to M_m(\C)$ be a positive linear map, and let $X,Y\in M_m^{++}(\C)$. Then ${\rm ran}(\Phi^\dagger) \subseteq {\rm ran}(R_{\Phi^\dagger(X)})$ and
${\rm ran}(\Phi^\dagger) \subseteq {\rm ran}(L_{\Phi^\dagger(Y)})$
\end{lemma}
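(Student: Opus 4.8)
The plan is to reduce both range inclusions to kernel inclusions for $\Phi^\dagger$, and then to exploit the fact that the kernel of $\Phi^\dagger$ evaluated on a positive definite matrix does not depend on which positive definite matrix one feeds in.

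First I would record that $\Phi^\dagger$ is itself a positive map: for $A,B\in M_m^+(\C)$ one has $\langle A,\Phi^\dagger(B)\rangle = \langle \Phi(A),B\rangle \geq 0$, and since $A\geq 0$ is arbitrary this forces $\Phi^\dagger(B)\geq 0$. In particular $\Phi^\dagger$ is monotone ($A\leq B \Rightarrow \Phi^\dagger(A)\leq \Phi^\dagger(B)$) and Hermiticity preserving ($\Phi^\dagger(W^*)=\Phi^\dagger(W)^*$). Next I would translate the two desired statements into conditions on kernels. A short linear-algebra computation shows that for $Z\in M_n^+(\C)$ one has ${\rm ran}(R_Z)=\{B\in M_n(\C):{\rm ker}(Z)\subseteq{\rm ker}(B)\}$ and ${\rm ran}(L_Z)=\{B\in M_n(\C):{\rm ran}(B)\subseteq{\rm ran}(Z)\}$. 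Consequently ``$\Phi^\dagger(W)\in{\rm ran}(R_{\Phi^\dagger(X)})$ for all $W$'' is equivalent to ``${\rm ker}(\Phi^\dagger(X))\subseteq{\rm ker}(\Phi^\dagger(W))$ for all $W$'', and ``$\Phi^\dagger(W)\in{\rm ran}(L_{\Phi^\dagger(Y)})$ for all $W$'' is equivalent to ``${\rm ker}(\Phi^\dagger(Y))\subseteq{\rm ker}(\Phi^\dagger(W^*))$ for all $W$'' (using that the range of $\Phi^\dagger(W)$ sits inside that of $\Phi^\dagger(Y)$ iff the orthogonal complements reverse, together with $\Phi^\dagger(W)^*=\Phi^\dagger(W^*)$). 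Since $W\mapsto W^*$ is a bijection of $M_m(\C)$, both statements reduce to the single claim that ${\rm ker}(\Phi^\dagger(X))$ and ${\rm ker}(\Phi^\dagger(Y))$ are contained in ${\rm ker}(\Phi^\dagger(W))$ for every $W\in M_m(\C)$.

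The crux is the observation that ${\rm ker}(\Phi^\dagger(X))={\rm ker}(\Phi^\dagger(I_m))$ for every $X\in M_m^{++}(\C)$. Because $X$ is positive definite there are constants $0<\epsilon\leq \|X\|$ with $\epsilon I_m\leq X\leq \|X\|I_m$; applying the monotone map $\Phi^\dagger$ gives $\epsilon\,\Phi^\dagger(I_m)\leq \Phi^\dagger(X)\leq \|X\|\,\Phi^\dagger(I_m)$. I would then invoke the elementary fact that $0\leq A\leq B$ forces ${\rm ker}(B)\subseteq{\rm ker}(A)$ (if $Bv=0$ then $0=\langle v,Bv\rangle\geq\langle v,Av\rangle\geq 0$, and $A\geq 0$ yields $Av=0$), applied to both inequalities, to conclude the equality of kernels. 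The same argument with $Y$ in place of $X$ gives ${\rm ker}(\Phi^\dagger(Y))={\rm ker}(\Phi^\dagger(I_m))$. It then remains to show ${\rm ker}(\Phi^\dagger(I_m))\subseteq{\rm ker}(\Phi^\dagger(W))$ for arbitrary $W$; writing $W$ as a complex linear combination of four positive semidefinite matrices reduces this to $W=P\geq 0$, where $P\leq\|P\|I_m$ gives $\Phi^\dagger(P)\leq\|P\|\,\Phi^\dagger(I_m)$ and hence the desired kernel inclusion by the same comparison fact.

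The only genuinely non-routine ingredient is the range identifications for $R_Z$ and $L_Z$; everything else is monotonicity of $\Phi^\dagger$ together with the positive-semidefinite kernel comparison. The conceptual obstacle to watch is precisely the point stressed earlier in the exposition, namely that $\Phi^\dagger(X)$ need \emph{not} be invertible even when $X\in M_m^{++}(\C)$, so one cannot simply assert the ranges are everything; the lemma succeeds exactly because a positive definite $X$ cannot ``see'' any more kernel than the identity does, which is what the sandwiching $\epsilon I_m\leq X\leq\|X\|I_m$ makes precise.
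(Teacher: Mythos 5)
Your proof is correct; it differs from the paper's in the direction of the reduction, though both rest on the same underlying mechanism. The paper argues by duality at the superoperator level: since $R_{\Phi^\dagger(X)}$ is self-adjoint on the Hilbert--Schmidt space and ${\rm ran}(\Phi^\dagger)^\perp = {\rm ker}(\Phi)$, the inclusion ${\rm ran}(\Phi^\dagger)\subseteq{\rm ran}(R_{\Phi^\dagger(X)})$ is equivalent to ${\rm ker}(R_{\Phi^\dagger(X)})\subseteq{\rm ker}(\Phi)$; then, for $K$ with $K\Phi^\dagger(X)=0$, it shows $\Phi(K)=0$ by testing against arbitrary $Y\in M_m^+(\C)$: choosing $\lambda>0$ with $X\geq \lambda Y$, positivity of $\Phi^\dagger$ gives $0=K\Phi^\dagger(X)K^*\geq\lambda K\Phi^\dagger(Y)K^*\geq 0$, hence $K\Phi^\dagger(Y)=0$ and $\langle Y,\Phi(K)\rangle=\tr[K\Phi^\dagger(Y)]=0$. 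You instead work directly at the matrix level: your explicit characterizations of ${\rm ran}(R_Z)$ and ${\rm ran}(L_Z)$ for $Z\geq 0$ convert the lemma into the kernel inclusions ${\rm ker}(\Phi^\dagger(X))\subseteq{\rm ker}(\Phi^\dagger(W))$ for all $W\in M_m(\C)$, which you prove by sandwiching with the identity, $\epsilon\one\leq X\leq \|X\|\one$, so that ${\rm ker}(\Phi^\dagger(X))={\rm ker}(\Phi^\dagger(\one))\subseteq{\rm ker}(\Phi^\dagger(W))$. The shared engine is positivity (hence monotonicity) of $\Phi^\dagger$, the domination made possible by $X\in M_m^{++}(\C)$, and the fact that $0\leq A\leq B$ forces ${\rm ker}(B)\subseteq{\rm ker}(A)$, which the paper invokes in the conjugated form $K(\cdot)K^*$ rather than directly. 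The paper's route is shorter because it never needs to describe the ranges of $L_Z$ and $R_Z$; your route is more elementary and yields slightly more: it identifies the common kernel of all the operators $\Phi^\dagger(X)$, $X\in M_m^{++}(\C)$, as ${\rm ker}(\Phi^\dagger(\one))$, exhibiting $\Phi^\dagger(\one)$ as the operator whose support controls everything --- precisely the point about non-invertibility of $\Phi^\dagger(X)$ that you flag at the end. All the auxiliary facts you cite (the range characterizations via generalized inverses, positivity of the adjoint map, kernel inclusions passing to linear combinations of positives) are correct as stated.
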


\begin{proof} The statement about $X$ that is to be proved is equivalent to ${\rm ker}(R_{\Phi^\dagger(X)}) \subset {\rm ker}(\Phi)$, To prove this, let $K\in {\rm ker}(R_{\Phi^\dagger(X)})$ We must show that $\Phi(K) =0$, and evidently this is the case if and only if $\langle Y,\Phi(K)\rangle = 0$ for all $Y\in M_m^{+}(\C)$. Since for $Y\in M_m^+(\C)$,
$$
\langle Y,\Phi(K)\rangle = \tr[Y\Phi(K)] = \tr[\Phi^\dagger(Y)K] = \tr[K\Phi^\dagger(Y)]\ ,
$$
we must show that $\tr[K\Phi^\dagger(Y)] = 0$ for all  $Y\in M_m^{+}(\C)$.

Since $X\in M_m^{++}(\C)$, for any $Y\in  M_m^{+}(\C)$, there exists $\lambda>0$ so that $X \geq \lambda Y$. Then since $\Phi^\dagger$ is positive, 
$\Phi^\dagger(X) \geq \lambda \Phi^\dagger(Y)$. 

Since $K\in  {\rm ker}(R_{\Phi^\dagger(X)})$,
$$
0 = K\Phi^\dagger(X) K^* \geq \lambda K\Phi^\dagger(Y) K^* = \lambda\left( K \left(\Phi^\dagger(Y)\right)^{1/2}\right)\left( K \left(\Phi^\dagger(Y)\right)^{1/2}\right)^*\ .
$$
Therefore, $K\left(\Phi^\dagger(Y)\right)^{1/2} = 0$, and consequently $K\Phi^\dagger(Y) =0$. Taking the trace, $\tr[\Phi^\dagger(Y)K] = 0$ as was to be shown. This proves the statemeent about $X$.
The statement about $Y$ may be proved in the same way. 
\end{proof}

\begin{lemma}\label{CMHlem2} Let  $\Phi:M_n(\C)\to M_m(\C)$ be a positive linear map, and let $X,Y\in M_m^{++}(\C)$.  Let ${\mathbb J}_f(X,Y)$ and 
${\mathbb J}_f^+(\Phi^\dagger(X),\Phi^\dagger(Y))$ be defined using \eqref{Jdef}. Then 
\begin{equation}\label{duality21} 
\langle \Phi(K),  {\mathbb J}_f(X,Y)\Phi(K)\rangle \   \leq \ \langle K, {\mathbb J}_f(\Phi^\dagger(X),\Phi^\dagger(Y)) K\rangle \qquad{\rm for\ all}\quad K\in M_n(\C)
\end{equation}
if and only if 
\begin{equation}\label{duality22}
   \langle \Phi^\dagger(K), {\mathbb J}_f^+(\Phi^\dagger(X),\Phi^\dagger(Y))\Phi^\dagger(K)\rangle \  \leq\    \langle K, {\mathbb J}^{-1}_f(X,Y)  K\rangle \qquad{\rm for\ all}\quad K\in M_m(\C)\ .
\end{equation}
\end{lemma}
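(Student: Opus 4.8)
The plan is to read this off directly from the duality theorem, Theorem~\ref{NDual}, once the objects are matched up correctly. I would take $\cH = \cH_n$, $\cK = \cH_m$, and $T = \Phi$ (so that $T^* = \Phi^\dagger$, by definition of the Hilbert--Schmidt adjoint), and then set $A = {\mathbb J}_f(\Phi^\dagger(X),\Phi^\dagger(Y)) \in \cB^+(\cH_n)$ and $B = {\mathbb J}_f(X,Y)\in \cB^+(\cH_m)$. With these choices, moving $\Phi$ across the inner product via $\langle K, \Phi^\dagger M\rangle = \langle \Phi(K), M\rangle$ turns \eqref{duality6} into precisely \eqref{duality21}, while $\langle w, \Phi\,{\mathbb J}_f^+(\Phi^\dagger(X),\Phi^\dagger(Y))\,\Phi^\dagger w\rangle = \langle \Phi^\dagger(w), {\mathbb J}_f^+(\Phi^\dagger(X),\Phi^\dagger(Y))\Phi^\dagger(w)\rangle$ turns \eqref{duality7} into \eqref{duality22}, using that ${\mathbb J}_f^+(X,Y) = {\mathbb J}_f^{-1}(X,Y)$ because $X,Y\in M_m^{++}(\C)$ force ${\mathbb J}_f(X,Y)\in\cB^{++}(\cH_m)$. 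Thus the whole lemma reduces to checking that the hypothesis \eqref{duality5} of Theorem~\ref{NDual} holds with these identifications.

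The first half of \eqref{duality5}, namely ${\rm ran}(T) = {\rm ran}(\Phi)\subseteq {\rm ran}(B) = {\rm ran}({\mathbb J}_f(X,Y))$, is immediate: since $X,Y\in M_m^{++}(\C)$ we have ${\mathbb J}_f(X,Y)\in\cB^{++}(\cH_m)$, so its range is all of $\cH_m$ and the inclusion is automatic. The content is entirely in the second half, ${\rm ran}(T^*) = {\rm ran}(\Phi^\dagger)\subseteq {\rm ran}(A) = {\rm ran}\big({\mathbb J}_f(\Phi^\dagger(X),\Phi^\dagger(Y))\big)$, and this is the step I expect to be the main obstacle, precisely because $\Phi^\dagger(X)$ and $\Phi^\dagger(Y)$ need only be positive semidefinite, so ${\mathbb J}_f$ may be genuinely degenerate.

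To handle it I would compute the range of ${\mathbb J}_f$ at degenerate arguments using the simultaneous eigenbasis $\{E_{i,j}\}$ introduced after \eqref{Jdef}. Writing $X' = \Phi^\dagger(X)$, $Y' = \Phi^\dagger(Y)$ with $X'u_j = \lambda_j u_j$ and $Y'v_i = \mu_i v_i$, the operator ${\mathbb J}_f(X',Y')$ is diagonal with eigenvalue $f(\lambda_j\mu_i^+)\mu_i$ on $E_{i,j}$. Because $f(0)=0$ while $f>0$ on $(0,\infty)$, this eigenvalue is nonzero exactly when both $\lambda_j>0$ and $\mu_i>0$; hence ${\rm ran}({\mathbb J}_f(X',Y')) = {\rm ran}(R_{X'})\cap{\rm ran}(L_{Y'})$, since ${\rm ran}(R_{X'})$ is spanned by the $E_{i,j}$ with $\lambda_j>0$ and ${\rm ran}(L_{Y'})$ by those with $\mu_i>0$. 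Now Lemma~\ref{CMHlem} gives exactly ${\rm ran}(\Phi^\dagger)\subseteq{\rm ran}(R_{\Phi^\dagger(X)})$ and ${\rm ran}(\Phi^\dagger)\subseteq{\rm ran}(L_{\Phi^\dagger(Y)})$, so ${\rm ran}(\Phi^\dagger)$ lies in the intersection, which is ${\rm ran}\big({\mathbb J}_f(\Phi^\dagger(X),\Phi^\dagger(Y))\big)$. This verifies \eqref{duality5}, and Theorem~\ref{NDual} then delivers the equivalence of \eqref{duality21} and \eqref{duality22} at once.
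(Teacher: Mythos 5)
Your proof is correct, and in substance it is the paper's own proof: both reduce the lemma to Theorem~\ref{NDual}, with Lemma~\ref{CMHlem} supplying the range hypothesis \eqref{duality5}. Two points of difference are worth recording, both in your favor. First, your identification ($\cH=\cH_n$, $\cK=\cH_m$, $T=\Phi$, $A={\mathbb J}_f(\Phi^\dagger(X),\Phi^\dagger(Y))$, $B={\mathbb J}_f(X,Y)$) is the one that literally turns \eqref{duality6} into \eqref{duality21} and \eqref{duality7} into \eqref{duality22}. The paper states the transposed substitution ($T=\Phi^\dagger$, $A={\mathbb J}_f(X,Y)$, $B={\mathbb J}_f^+(\Phi^\dagger(X),\Phi^\dagger(Y))$); read literally, that substitution produces the pair of inequalities with ${\mathbb J}_f(X,Y)$ and ${\mathbb J}_f^{-1}(X,Y)$ interchanged relative to \eqref{duality21}--\eqref{duality22}, so the paper's stated matching is a slip and yours is the correct reading (both substitutions impose the same two range conditions, the nontrivial one being ${\rm ran}(\Phi^\dagger)\subseteq {\rm ran}\big({\mathbb J}_f(\Phi^\dagger(X),\Phi^\dagger(Y))\big)$, so nothing else is affected). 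Second, the paper invokes Lemma~\ref{CMHlem} without explaining why the inclusions ${\rm ran}(\Phi^\dagger)\subseteq{\rm ran}(R_{\Phi^\dagger(X)})$ and ${\rm ran}(\Phi^\dagger)\subseteq{\rm ran}(L_{\Phi^\dagger(Y)})$ yield the inclusion into ${\rm ran}\big({\mathbb J}_f(\Phi^\dagger(X),\Phi^\dagger(Y))\big)$; your computation in the simultaneous eigenbasis $\{E_{i,j}\}$, showing that the eigenvalue $f(\lambda_j\mu_i^+)\mu_i$ is nonzero exactly when $\lambda_j>0$ and $\mu_i>0$ (because $f(0)=0$ and $f>0$ on $(0,\infty)$), and hence that ${\rm ran}\big({\mathbb J}_f(\Phi^\dagger(X),\Phi^\dagger(Y))\big)={\rm ran}(R_{\Phi^\dagger(X)})\cap{\rm ran}(L_{\Phi^\dagger(Y)})$, is precisely the step the paper leaves implicit, and it is carried out correctly.
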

 
\begin{proof} By Lemma~\ref{CMHlem}, we may apply Theorem~\ref{NDual} with with
$\cH_m$ in place of $\cH$, $\cH_n$ in place of $\cK$, ${\mathbb J}_f(X,Y)$ in place of $A$, ${\mathbb J}_f^+(\Phi^\dagger(X),\Phi^\dagger(Y))$ in place of $B$, and $\Phi^\dagger$ in place of $T$, 
and the result is the equivalence of \eqref{duality21} and \eqref{duality22}. 
\end{proof}

\begin{remark} Lemma~\ref{CMHlem2} was proved by Hiai and Petz \cite{HP12} under the condition that $\Phi^\dagger(\one) > 0$, in which case 
${\mathbb J}_f^+(\Phi^\dagger(X),\Phi^\dagger(Y))$  is invertible whenever $X,Y\in M_n^{++}(X)$. Their proof does not use Theorem~\ref{NDual} but rather a 
simple computation that is valid whenever both ${\mathbb J}_f(X,Y)$ and  ${\mathbb J}_f^+(\Phi^\dagger(X),\Phi^\dagger(Y))$ are invertible. The proof of Lemma~\ref{CMHlem2} as 
stated here was first given in \cite{CMH}, but with a slightly different proof turning on an argument with Schur complements that also sheds useful light on the equivalence; see \cite{CMH}.
\end{remark}

The following is Theorem 5 of Hiai and Petz \cite{HP12} with relaxed conditions on the positive map 
$\Phi$, as extended in  \cite{CMH}.

\begin{theorem}[Hiai, Petz]\label{HPext}  Let $f:(0,\infty) \to (0,\infty)$ be operator monotone,  and define $f(0)= 0$. 
Let  ${\mathbb J}_f$ 
 be defined by \eqref{Jdef}.
Let $\Phi:M_n(\C)\to M_m(\C)$  be a unital Schwarz map. The following inequalities are both valid:

\smallskip
\noindent{\it (a)}  For all $X,Y\in M_{m}^{++}(\C)$, 

$$
\langle \Phi(K),  {\mathbb J}_f(X,Y)\Phi(K)\rangle \   \leq \ \langle K, {\mathbb J}_f(\Phi^\dagger(X),\Phi^\dagger(Y)) K\rangle \qquad{\rm for\ all}\quad K\in M_n(\C)
$$

\smallskip
\noindent{\it (b)}  For all  $X,Y\in M_{m}^{++}(\C)$,  
$$
  \langle \Phi^\dagger(K), {\mathbb J}_f^+(\Phi^\dagger(X),\Phi^\dagger(Y))\Phi^\dagger(K)\rangle \  \leq\    \langle K, {\mathbb J}^+_f(X,Y)  K\rangle \qquad{\rm for\ all}\quad K\in M_m(\C)\ .
  $$
\end{theorem}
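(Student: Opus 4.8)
The plan is to establish part (a) directly and then to deduce part (b) from it. Since $X,Y\in M_m^{++}(\C)$, the operator ${\mathbb J}_f(X,Y)$ is invertible, so ${\mathbb J}_f^+(X,Y)={\mathbb J}_f^{-1}(X,Y)$, and Lemma~\ref{CMHlem2} says precisely that (a) holds if and only if (b) holds; thus it suffices to prove (a). To exploit the structure of $f$, I would first apply L\"owner's Theorem~\ref{LOT} to write
$$
f(x) = \beta + \gamma x + \int_{(0,\infty)} \phi(x,\lambda){\rm d}\mu(\lambda),
$$
with $\gamma\geq 0$, $\mu$ a finite positive measure, and $\beta=f(0^+)\geq 0$ (nonnegative because the positive function $f$ is increasing). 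Since the assignment $f\mapsto {\mathbb J}_f(X,Y)$ is linear and the finite-dimensional quadratic forms in (a) interchange with the $\lambda$-integration, inequality (a) for $f$ follows once it is known for each of the three elementary integrands $g(x)=1$, $g(x)=x$, and $g(x)=\phi(x,\lambda)$ (one for each fixed $\lambda>0$).

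For each such $g$, which is nonnegative on $(0,\infty)$, Lemma~\ref{CMHlem2} reduces (a) to its dual form
$$
\langle \Phi^\dagger(K), {\mathbb J}_g^+(\Phi^\dagger(X),\Phi^\dagger(Y))\Phi^\dagger(K)\rangle \leq \langle K, {\mathbb J}_g^{-1}(X,Y)K\rangle .
$$
The point of passing to this form is that, unlike ${\mathbb J}_g$ itself, the reciprocal ${\mathbb J}_g^{-1}$ splits as a sum: using the simultaneous eigenbasis $\{E_{i,j}\}$ of $R_X$ and $L_Y$ introduced before \eqref{Jexamp}, one computes ${\mathbb J}_1^{-1}(X,Y)=L_{Y^{-1}}$, ${\mathbb J}_x^{-1}(X,Y)=R_{X^{-1}}$, and ${\mathbb J}_{\phi(\cdot,\lambda)}^{-1}(X,Y) = (1+\lambda)^{-1}\big(L_{Y^{-1}} + \lambda R_{X^{-1}}\big)$. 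The analogous computation for ${\mathbb J}_g^+(\Phi^\dagger(X),\Phi^\dagger(Y))$ is more delicate because $\Phi^\dagger(X)$ and $\Phi^\dagger(Y)$ may be degenerate. However, this operator is evaluated only against $\Phi^\dagger(K)$, and Lemma~\ref{CMHlem} places $\Phi^\dagger(K)$ in ${\rm ran}(R_{\Phi^\dagger(X)})\cap{\rm ran}(L_{\Phi^\dagger(Y)})$, that is, in the span of those $E_{i,j}$ on which both $\Phi^\dagger(X)$ and $\Phi^\dagger(Y)$ act with strictly positive eigenvalue. On exactly this subspace the generalized inverse agrees with the additive splitting, so inside the quadratic form ${\mathbb J}_g^+(\Phi^\dagger(X),\Phi^\dagger(Y))$ may be replaced by $L_{\Phi^\dagger(Y)}^+$, by $R_{\Phi^\dagger(X)}^+$, and by $(1+\lambda)^{-1}\big(L_{\Phi^\dagger(Y)}^+ + \lambda R_{\Phi^\dagger(X)}^+\big)$ respectively.

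After this reduction, each dual inequality becomes a nonnegative combination of the two tracial inequalities
$$
\tr[\Phi^\dagger(K)^*\Phi^\dagger(Y)^+\Phi^\dagger(K)] \leq \tr[K^*Y^{-1}K]
\quad\text{and}\quad
\tr[\Phi^\dagger(K)^*\Phi^\dagger(K)\Phi^\dagger(X)^+] \leq \tr[K^*KX^{-1}].
$$
Both are instances of \eqref{trform2}, which is available because $\Phi$ is a unital Schwarz map: the first is \eqref{trform2} with the matrix $Y$ (whose kernel is trivial, so the kernel hypothesis is vacuous) and the vector $K$; the second is \eqref{trform2} with the matrix $X$ and the vector $K^*$, after rewriting the traces cyclically and using that the positive map $\Phi^\dagger$ satisfies $\Phi^\dagger(K)^*=\Phi^\dagger(K^*)$. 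Assembling the three elementary cases and integrating against $\gamma$ and ${\rm d}\mu$ then gives (a), and a final application of Lemma~\ref{CMHlem2} gives (b).

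The main obstacle is the generalized-inverse bookkeeping in the degenerate case: one must be certain that when $\Phi^\dagger(\one)$ is not invertible, ${\mathbb J}_g^+(\Phi^\dagger(X),\Phi^\dagger(Y))$ genuinely coincides with the naive additive splitting on the subspace where the quadratic form lives (the two operators do \emph{not} agree on the full space, only on that subspace). This is precisely the role of Lemma~\ref{CMHlem}, and beneath it the range conditions \eqref{duality5} feeding Theorem~\ref{NDual}; it is what allows the argument to dispense with the hypothesis $\Phi^\dagger(\one)>0$ present in the original Hiai--Petz statement.
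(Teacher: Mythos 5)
Your proposal is correct and follows essentially the same route as the paper's proof: the equivalence of \textit{(a)} and \textit{(b)} via Lemma~\ref{CMHlem2}, the L\"owner-theorem reduction of $f$ to the elementary integrands $1$, $x$ and $x/(\lambda+x)$, and the reduction of the resulting dual inequalities to the tracial Schwarz inequality \eqref{trform2}. The only deviations are minor: the paper proves the two linear pieces \eqref{ph7} directly from the Schwarz inequality rather than dualizing them as you do, while your explicit good-subspace argument (that $\Phi^\dagger(K)$ lies in ${\rm ran}(R_{\Phi^\dagger(X)})\cap{\rm ran}(L_{\Phi^\dagger(Y)})$, where the generalized inverse coincides with the additive splitting) makes rigorous the step the paper compresses into ``a simple computation shows,'' a statement which indeed fails as a global operator identity in the degenerate case.
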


 \begin{proof}[Proof of Theorem~\ref{HPext}]  By Lemma~\ref{CMHlem2}, the inequalities in parts {\it (a)} and {\it (b)} are equivalent, and hence it suffices to prove the inequality in {\it (a)}. 
 Note the the inequality in {\it (a)}, unlike that in {\it (b)}, is additive in $f$. Therefore, using 
 the L\"owner theorem, Theorem~\ref{LOT}, which gives an integral representation of all operator monotone functions, Hiai and Petz show that it suffices to do this for  the special case
 \begin{equation*}
f(x) :=  \beta + \gamma x +    \frac{x}{t + x}  
 \end{equation*}
 with $\beta,\gamma,t \geq 0$.

  To prove {\it (a)} for this choice of $f$ it suffices to prove
 \begin{equation}\label{ph7}
\Phi^\dagger L_Y \Phi \leq L_{\Phi^\dagger(Y)}\ ,\quad   \Phi^\dagger R_X \Phi \leq R_{\Phi^\dagger(X)}
\end{equation}
and
\begin{equation}\label{ph8}
\Phi^\dagger   \frac{R_X}{t + R_XL_{Y^{+}}} \Phi  \leq  \frac{R_{\Phi^\dagger(X)}}{t + R_{\Phi^\dagger(X)}L_{\Phi^\dagger(Y)^{+}}}\ .
\end{equation}
For any $K\in M_{{n}}(\C)$, using the Schwarz inequality 
we have 
\begin{align*}\langle K, \Phi^\dagger L_Y \Phi K\rangle &= \tr[\Phi(K)^* Y \Phi(K)] \\
 &\leq \tr[\Phi(KK^*)Y] = \tr[KK^*\Phi^\dagger(Y)] = \langle K, L_{\Phi^\dagger(Y)} K\rangle\ ,
 \end{align*}
 and this proves the first inequality in \eqref{ph7}. The proof of the second is entirely analogous. To prove \eqref{ph8}, note that by the equivalence of the inequalities in {\it (a)} and {\it (b)}, it suffices to show that,
 \begin{equation*}
\tr\left[  \Phi^\dagger(K)^*  \left(\frac{R_{\Phi^\dagger(X)}}{t + R_{\Phi^\dagger(X)}L_{\Phi^\dagger(Y)}^{+}}  \right)^{+}\Phi^\dagger(K)\right]   \leq \tr\left[K^*\left(   \frac{R_X}{t + R_XL_Y^{-1}} \right)^{-1}K\right]\ .
 \end{equation*}
 A simple computation shows that 
 $$
  \left(\frac{R_{\Phi^\dagger(X)}}{t + R_{\Phi^\dagger(X)}L_{\Phi^\dagger(Y)}^{+}}  \right)^{+} = (t R_{\Phi^\dagger(X)}^+ +  L_{\Phi^\dagger(Y)}^+)\ ,
  $$
  and of course
  $$
  \left(   \frac{R_X}{t + R_XL_Y^{-1}} \right)^{-1} =  tR_X^{-1} + R_Y^{-1}\ .
  $$
  Hence all that remain to be shown is that 
 \begin{align}\label{equ:finIneq}
 &t \tr[ \Phi^\dagger(K) \Phi^\dagger(X)^{+} \Phi^\dagger(K^*) ] +  \tr[ \Phi^\dagger(K^*) \Phi^\dagger(Y)^{+} \Phi^\dagger(K)] \nonumber\\
  &\quad\quad\quad\leq t \tr[KX^{-1}K^*] + \tr[K^*Y^{-1}K]\ ,
 \end{align}
 for all $K\in M_{{m}}(\C)$. Since $\Phi$ is a Schwarz map, by \cite[Theorem 4]{CMH},  \eqref{trform2} is satisfied 
 and hence we have both
 \begin{equation*}
 \tr[ \Phi^\dagger(K) \Phi^\dagger(X)^{+} \Phi^\dagger(K^*) ]   \leq  \tr[KX^{-1}K^*]  
 \end{equation*}
 and
 \begin{equation*}
\tr[ \Phi^\dagger(K^*) \Phi^\dagger(Y)^{+} \Phi^\dagger(K)] \leq  \tr[K^*Y^{-1}K]\ ,
 \end{equation*}
 and \eqref{equ:finIneq} follows. 
 \end{proof}
 
 \begin{remark} Hiai and Petz assumed the $\Phi$ was unital, completely positive and such that $\Phi^\dagger(\one)> 0$. Their proof extends without change to the case in which  
 compete positivity is relaxed to $2$ positivity, since then one still has the crucial trace inequality. The proof here, as in Uhlmann's original proof \cite{Uh77} of Theorem~\ref{L2M}, 
 uses only that $\Phi$ is a Schwarz map. It was shown in \cite{CZ23} that this is optimal; Theorem~\ref{HPext} does not hold for any larger class of positive maps $\Phi$. 
 
 On the other hand, for our purposes, and most purposes in mathematical physics, it suffices to know that the inequalities of Theorem~\ref{HPext} are valid when 
 $\Phi^\dagger$ is a quantum operation, or, what is the same thing, when $\Phi$ is unital and completely positive. The proof we have given of Theorem~\ref{HPext} is not 
 completely self contained, as it uses 
 \cite[Theorem 4]{CMH}. However, when $\Phi^\dagger$ is completely positive (even $2$-positive) and trace preserving, Theorem~\ref{LRC}, proved here in full, provides the inequality
 \eqref{trform2}  that we need in the final step of the proof.  
 \end{remark}

Before proving Theorem~\ref{HPext}, we apply it to prove Theorems~\ref{L1M}, \ref{L2M} and \ref{L3M}.    

\begin{proof}[Proof of Theorems~\ref{L1M} and \ref{L2M}]    Consider $f(x) = x^r$, $0 < r < 1$,  which as we have noted is operator monotone with an explicit integral representation of L\"owner type.   Then
 in the case of $f(x) = x^r$, $0 < r < 1$,  
 $$
 {\mathbb J}_f(\Phi^\dagger(X), \Phi^\dagger(Y))K =  \Phi^\dagger(Y)^{1-r} K \Phi^\dagger(X)^{r}  
 $$
 and
 $$
 {\mathbb J}_f(X,Y)(\Phi(K)) = Y^{1-r} \Phi(K) X^{r}\ .
 $$
 Then since by {\it (a)},
 $$
 \langle \Phi(K), Y^{1-r} \Phi(K) X^{r}\rangle \leq \langle K, \Phi^\dagger(Y)^{1-r} K \Phi^\dagger(X)^{r} \rangle\ ,
 $$
 we have
 \begin{equation*}
 \tr[\Phi(K)^* Y^{1-r} \Phi(K) X^{r}] \leq \tr[ K^* \Phi^\dagger(Y)^{1-r} K \Phi^\dagger(X)^{r}]\ ,
 \end{equation*}
 which proves Theorem~\ref{L1M}. 
  Likewise, from {\it (b)} we get
\begin{equation*}
 \tr[\Phi^\dagger(K)^* (\Phi^\dagger(Y)^+)^{1-r} \Phi^\dagger(K) (\Phi^\dagger(X)^+)^{r}] \leq \tr[ K^* Y^{r-1} K X^{-r}]\ ,
 \end{equation*}
 and this prove Theorem~\ref{L2M}. 
 
 To prove Theorem~\ref{L3M}, take $f(x) := \int_0^1 x^r {\rm d}r$, so that
 $$
 {\mathbb J}_f(X,Y)(K) = \int_0^1 Y^{1-r} K X^r {\rm d}r\ .
 $$
 Notice that if $v$ is an eigenvector of $Y$, and $W$ is an eigenvector of $X$, $|v\rangle \langle w|$ is an eigenvector of  ${\mathbb J}_f(X,Y)$. In this way we get an explicit diagonalization of $ {\mathbb J}_f(X,Y)$ from which it follows that
 $$
  {\mathbb J}_f(X,Y)^{-1}(K)   = \int_0^\infty \frac{1}{\lambda +Y} K \frac{1}{\lambda +X} {\rm d}\lambda\ .
 $$
 Then {\it (b)} gives us Theorem~\ref{L3M}. 
 \end{proof}

 At this point we have given self-contained proof of all of the entropy  inequalities that we shall use in the following lectures.  As will be seen, it is the ideas and constructions going into the proofs, as well as the inequalities themselves, that will be useful in what follows. 
 In particular,  operator ${\mathbb J}_f(X,Y) = f(R_X L_Y^{+} )L_Y $
defined in \eqref{Jdef} will, for appropriate choices of $f$, come up again in the construction of the quantum mass transportation metric that we discuss in the next two lectures.

%
%
%
\section{ Lecture Two:  Reversible QMS as Gradient flow for Relative Entropy }


\abstract{\footnotesize According to the Data Processing Inequality, for every quantum dynamical semigroup $\{\cP_t^\dagger\}_{t\geq 0}$, any invariant state $\sigma$ for this semigroup, and any other state $\rho$, $D(\cP_t^\dagger \rho||\sigma)$ 
 is monotone decreasing in $t$. When can the evolution described by semigroup be viewed as gradient flow  for the relative entropy with respect to some metric?  This is not always the case. We discuss the necessary conditions and the known sufficient conditions, giving details of the construction of the metrics. Many people have contributed to this subject, and while other approaches will be discussed, the focus will be on the approach developed by myself and  Jan Maas.}

\subsection{Gradient flow on $\mathfrak{S}$}   

Recall that $\Dens$, the space of strictly positive density matrices,  is a relatively open subset of $\{ A \in M_n^+(\C)\  :  \tr[A] = 1 \}$, and we may identify the tangent space $T_{\rho} \Dens$
at each point $\rho \in \Dens$ with 
\begin{equation}\label{TS}
 \mathcal{V} := \{ A \in M_n(\C)  : A =A^* \quad{\rm and}\quad  \tr[A] = 0 \}\ .
 \end{equation}
 The cotangent space $T_{\rho}^\dagger \Dens$ may also be identified with $\mathcal{V}$ through the duality pairing $\ip{A,B} = \tr[A B]$ for $A, B \in  \mathcal{V}$.

Let $\{g_{\rho}\}_{\rho \in \Dens}$ be a Riemannian metric on $\Dens$, i.e., a collection of positive definite bilinear forms $g_{\rho} : T_{\rho}\Dens \times T_{\rho}\Dens \to \R$ depending smoothly on $\rho \in \Dens$. 
Consider the associated operator $\cG_{\rho} :  T_{\rho}\Dens \to  T_{\rho}^{\dagger}\Dens$ defined by 
$$\ip{A, \cG_{\rho} B} = g_{\rho}(A, B)$$ 
for $A, B \in T_{\rho} \Dens$. 
Clearly, $\cG_{\rho}$ is invertible and self-adjoint with respect to the Hilbert--Schmidt inner product.
Define $\cK_{\rho} :  T_{\rho}^{\dagger}\Dens \to T_{\rho} \Dens$ by $\cK_{\rho} = (\cG_{\rho})^{-1}$, so that 
\begin{align}\label{eq:metric-K}
g_{\rho}(A,B) = \ip{ A, \cK_{\rho}^{-1} B} \ .
\end{align}

The contravariant metric tensor consists of the collection of positive definite bilinear forms $g^\dagger_{\rho} : T_{\rho}^\dagger\Dens \times T_{\rho}^\dagger\Dens \to \R$ depending smoothly on $\rho \in \Dens$ given by
\begin{align*}
g_{\rho}^\dagger(A,B) = \ip{ A, \cK_{\rho} B} \ .
\end{align*}

Equipping  $T_{\rho}\Dens$ and $T_{\rho}^\dagger\Dens$ with these inner products, making them real Hilbert spaces, 
$\cG_{\rho} :  T_{\rho}\Dens \to  T_{\rho}^{\dagger}\Dens$ is  an isometry. We could have started with the the inner product on $T_{\rho}^\dagger\Dens$ specified by  the positive operator $\cK_{\rho}$, and then defined $\cG_{\rho}$ to be its inverse. In fact, this is often the most convenient way to specify the metrics with which  we shall work.

For a smooth functional $\cF : \Dens \to \R$ and $\rho \in \Dens$, its \emph{differential} $\rmD\cF(\rho) \in T_{\rho}^{\dagger}\Dens$ is defined by 
$$\ip{A,\rmD \cF(\rho)} =\lim_{\eps \to 0} \eps^{-1}(\cF(\rho + \eps A) - \cF(\rho) ) $$
 for $A \in T_{\rho}\Dens$. Note that this is independent of the Riemannian metric $g_{\rho}$.
However, the \emph{gradient} $\nabla_g \cF(\rho) \in T_{\rho}\Dens$ depends on the Riemannian metric through the duality formula  that defines it, namely
$$g_{\rho}(A, \nabla_g \cF(\rho)) = \ip{A, \rmD \cF(\rho)}$$ for $A \in T_{\rho}\Dens$. It follows that $\cG_{\rho}\nabla_g \cF(\rho) = \rmD \cF(\rho)$, or equivalently
\begin{align*}
	\nabla_g \cF(\rho) = \cK_{\rho} \rmD \cF(\rho) \ .
\end{align*}
Given a smooth function $t\mapsto \rho(t)$ with values in $\Dens$, at each $t$,
$$
\lim_{h\to 0} \frac1h(\rho(t+h) - \rho(t)) =: \partial_{t}\rho(t) \in T_{\rho(t)}\Dens\ .
$$
The gradient flow equation $\partial_{t}\rho(t) = - \nabla_g \cF(\rho(t))$ takes the form 
\begin{align*}
\partial_{t}\rho(t)(t) = - \cK_{\rho} \rmD \cF(\rho) \ .
\end{align*}

Now consider a QMS $(\cP_t)_{t\geq 0}$ with generator $\cL$ so that $\cP_t = e^{t\cL}$. Suppose that $\sigma$ is the unique invariant state for this semigroup. Given $\rho_0\in \Dens$, define the curve
$$
\rho(t) := \cP_t^\dagger \rho_0\ .
$$
Then by the DPI,  $t\mapsto D(\rho(t)||\sigma)$ is monotone decreasing.  This is an obvious necessary condition for the evolution specified by 
$(\cP_t^\dagger)_{t\geq 0}$ to be gradient flow for the function $\cF(\rho) := D(\rho ||\sigma)$ on $\Dens$.   There is however a further necessary condition: each
$\cP_t$ must be self adjoint with respect to an inner product associated with the entropy. 

Recall that 
$$
\frac{\partial^2}{\partial s \partial t} \tr[ (\rho + s A + t B) \log (\rho  + sA + t B) ]\bigg|_{s=0,t=0}  =  \tr\left [\int_0^\infty A \frac{1}{\lambda+\rho} B \frac{1}{\lambda + \rho}{\rm d}\lambda\right] \ ,
$$
Define $\cD_\rho$ by
\begin{equation}\label{logdif}
\cD_\rho(A) :=  \int_0^\infty \frac{1}{\lambda +\rho} A \frac{1}{\lambda + \rho}{\rm d}\lambda = \frac{\rm d}{{\rm d}s} \log (\rho + sA)\bigg|_{s=0}
\end{equation}
so that 
$$
\langle A, \cD_\rho B\rangle =  \tr\left [\int_0^\infty A \frac{1}{\lambda+\rho} B \frac{1}{\lambda + \rho}{\rm d}\lambda\right] \ .
$$
In this case it is easy to compute $\cD_\rho^{-1}$:
$$
\cD_\rho^{-1}(A) = \int_0^1 \rho^{1-t} A \rho^{t}{\rm d}t =   \frac{\rm d}{{\rm d}s} \exp (\log\rho + sA)\bigg|_{s=0}\ .
$$

We then have the following non-degenerate quadratic forms:
$$\langle A, \cD_\rho(B)\rangle \quad{\rm and}\quad  \langle A, \cD^{-1}_\rho(B)\rangle\ .$$
In this geometric context, these quatratic forms are defined in the space ${\mathcal V}$ defined in \eqref{TS},  but there is the obvious way to extend them to all of $M_n(\C)$, and this is useful to do. In fact, the extension of the inner prodcut  $\langle A, \cD^{-1}_\rho(B)\rangle$ all of $M_n(\C)$ has a name: It is the
{\em Boguliobov--Kubo--Mori} (BKM) inner product \cite{KTH91}  specified by $\rho$ which is given by
\begin{equation}\label{BKMipdef}
\langle A,B\rangle_{{\rm BKM}} := \langle A, \cD^{-1}_\rho(B)\rangle =\int_0^1 \tr\left[ A^* \rho^{1-t} B\rho^{t}\right]{\rm d}t\ .
\end{equation}
The following theorem was proved by Maas and myself \cite{CM20}, and it is a quantum version of a result proved in the Markov chain setting by Dietert \cite{Die15}. 

\begin{theorem}\label{necBKM}  Let $(\cP_t)_{t\geq 0}$ be an ergodic QMS with generator $\cL$ and invariant state $\sigma\in \Dens_+$. 
If there exists a continuously differentiable Riemannian metric $(g_\rho)$ on $\Dens_+$ such that the quantum master equation $\partial \rho = \cL^\dagger \rho$ is the gradient flow equation for $\Ent_\sigma$ with respect to $(g_\rho)$, then each $\cP_t$ is self-adjoint with respect to the BKM inner product associated to $\sigma$. 
\end{theorem}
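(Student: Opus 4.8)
The plan is to linearize the gradient flow equation at the fixed point $\sigma$ and to recognize the resulting linear relation as the statement that $\cL$ is self-adjoint for the BKM inner product. Write $\cF(\rho) := \Ent_\sigma(\rho) = D(\rho\|\sigma)$. The first step is to compute its differential: for $A\in\mathcal{V}$ (the tangent space \eqref{TS} of traceless self-adjoint matrices) one finds $\langle A,\rmD\cF(\rho)\rangle = \tr[A(\log\rho-\log\sigma)]$, so that $\rmD\cF(\rho) = P(\log\rho-\log\sigma)$, where $P$ is the Hilbert--Schmidt projection of the self-adjoint matrices onto $\mathcal{V}$ (subtraction of the normalized trace). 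Since the master equation is assumed to be the gradient flow $\partial_t\rho = -\cK_\rho\rmD\cF(\rho)$, this reads $\cL^\dagger\rho = -\cK_\rho P(\log\rho-\log\sigma)$ for all $\rho\in\Dens$.

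Next I would linearize this identity at $\rho=\sigma$. Because $\cL^\dagger\sigma=0$ and $P(\log\rho-\log\sigma)$ vanishes at $\rho=\sigma$, the product rule leaves only the value $\cK_\sigma$ of the metric operator, while differentiating $\log\rho$ produces $\cD_\sigma$ as in \eqref{logdif}. This gives the linear relation $\cL^\dagger A = -\cK_\sigma P\cD_\sigma A$ for all $A\in\mathcal{V}$. The crucial algebraic point is then to promote this to an identity on the whole real space of self-adjoint matrices, which decomposes as $\R\sigma\oplus\mathcal{V}$: extending $\cK_\sigma$ to vanish on $\one$ (so that $\cK_\sigma P$ is self-adjoint for the Hilbert--Schmidt inner product), the relation holds on $\mathcal{V}$ by the linearization, and on $\R\sigma$ both sides vanish, since $\cL^\dagger\sigma=0$ while $\cD_\sigma\sigma=\one$ and $P\one=0$. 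Thus $\cL^\dagger = -\cK_\sigma P\cD_\sigma$ on all self-adjoint matrices.

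Composing with $\cD_\sigma^{-1}$, which is a genuine inverse on the entire algebra, then kills the factor $\cD_\sigma$ and yields $\cL^\dagger\cD_\sigma^{-1} = -\cK_\sigma P$, whose right-hand side is Hilbert--Schmidt self-adjoint. Taking Hilbert--Schmidt adjoints and using that $\cD_\sigma^{-1}$ is self-adjoint gives $\cD_\sigma^{-1}\cL = \cL^\dagger\cD_\sigma^{-1}$. Since $\langle A,B\rangle_{\rm BKM} = \langle A,\cD_\sigma^{-1}B\rangle$ by \eqref{BKMipdef}, this is exactly the assertion that $\langle A,\cL B\rangle_{\rm BKM}$ is symmetric in $A$ and $B$; hence $\cL$ is self-adjoint for the BKM inner product, and therefore so is each $\cP_t=e^{t\cL}$.

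I expect the main obstacle to be the bookkeeping of the trace, embodied in the projection $P$: neither $\rmD\cF$ nor $\cD_\sigma$ preserves the traceless space $\mathcal{V}$, so the linearized relation is \emph{a priori} valid only on $\mathcal{V}$, whereas BKM-self-adjointness is a statement on all of $M_n(\C)$. What makes everything fit together is precisely the invariance $\cL^\dagger\sigma=0$, which supplies the missing $\sigma$-direction and forces the non-traceless contributions to cancel when one passes to the full algebra. Ergodicity and $\sigma\in\Dens$ enter only to guarantee that $\sigma$ is invertible, so that $\cD_\sigma$, $\cD_\sigma^{-1}$, and the BKM inner product are all well defined.
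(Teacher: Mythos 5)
Your proposal is correct and follows essentially the same route as the paper: linearize the gradient flow identity $\cL^\dagger\rho = -\cK_\rho\,\rmD \Ent_\sigma(\rho)$ at $\rho=\sigma$ to get $\cL^\dagger = -\cK_\sigma\cD_\sigma$ (up to the trace projection), then use Hilbert--Schmidt self-adjointness of $\cK_\sigma$ and $\cD_\sigma$ to read off BKM self-adjointness of $\cL$ and hence of each $\cP_t$. Your explicit treatment of the projection $P$ and the extension of the linearized identity from $\mathcal{V}$ to all self-adjoint matrices via $\cL^\dagger\sigma=0$ is a careful elaboration of bookkeeping the paper leaves implicit, not a genuinely different argument.
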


\begin{proof}[Proof of Theorem~\ref{necBKM}]

Suppose that there is such a metric, and let  operator $\cK_\rho$ be defined in terms of it  by \eqref{eq:metric-K}.
Since $\rmD D(\rho ||\sigma) = \log \rho - \log \sigma$, the gradient flow equation $\partial_t \rho = - \cK_{\rho} \rmD \Ent_{\sigma}(\rho)$ is 
\begin{align*}
	\cL^\dagger \rho= - \cK_\rho (\log \rho - \log \sigma) \ . 
\end{align*}
Applying this identity to $\rho_\eps = \sigma + \eps A$ for $A \in \cA_0$, and differentiating at $\eps = 0$, we obtain using \eqref{logdif}
\begin{align*}
	\cL^\dagger A = - \cK_\sigma \left(  \int_0^\infty \frac{1}{\lambda +\sigma} A \frac{1}{\lambda + \sigma}{\rm d}\lambda\right) = -\cK_\sigma\cD_\sigma(A)\ ,
\end{align*}
or equivalently,
$$
\cL  A = -  \left(  \int_0^\infty \frac{1}{\lambda +\sigma} \cK_\sigma(A) \frac{1}{\lambda + \sigma}{\rm d}\lambda\right)  = -\cD_\sigma\cK_\sigma(A)\ ,
$$
Consequently, for $A, B \in {\mathcal V}$, with ${\mathcal V}$ given by \eqref{TS},
\begin{align*}
	\ip{\cL A, B}_{L^2_{\rm BKM}(\sigma)}
	= - \tr[ A^* \cK_\sigma B] \ .
\end{align*}

As  the operator $\cK_\sigma$ is self-adjoint with respect to the Hilbert-Schmidt scalar product, this implies the result.
\end{proof}

A converse has very recently been proved by Brooks and Maas \cite{BM23}:  The condition of BKM self adjointness is sufficient as well as necessary. 
At this stage, the construction of the Riemannian metric is somewhat implicit, and it is not presently clear in what sense it might be a transport 
metric, or how it might be used to address the questions raised in the first lecture. 

However, for a physically important sub-class of QMS semigroups for which each $\cP_t$ is self-adjoint with respect to the 
BKM inner product for the invariant state, one can construct the metric very explicitly, and use this construction to prove 
inequalities governing the rate of approach to the invariant state, among other things. The basis of this construction is a structure theorem for 
the generators of this class of processes.   In the next section, we discuss structure theorems for various classes of 
QMS generators, focusing on results we shall use to construct our metrics. 

\subsection{Inner products on $M_n(\C)$ associated to $\sigma\in \Dens$ and the structure of QMS Generators}

Work of  Lindblad \cite{Lin76} and Gorini, Kossakowski and Sudarshan \cite{GKS76} has provided a structure theorem for the set of QMS generators. It turns out that if one imposes certain
self-adjointness conditions on the QMS, the structure theorem becomes much more specific. We have already seen in Theorem~\ref{necBKM} 
 that if $\cP_t$ is a QMS such that $\cP_t^\dagger$ is given by 
gradient flow for the relative entropy functional $D(\rho||\sigma)$, where $\sigma$ is the invariant state for $\cP_t^\dagger$, then $\cP_t$ must be self adjoint for the 
BKM inner product defined in \eqref{BKMipdef}. Unfortunately, there does not presently exist a useful  structure theorem for the set of BKM self-adjoint QMS generators, 
but for the smaller class of GNS self-adoint QMS generators, introduced below, there is such a structure theorem due to Alicki \cite{Al78}.

 We now discuss various inner products on $M_n(\C)$ determined by a state $\rho\in \Dens$, and we closely follow \cite{AC20}. Let $\mathcal{P}[0,1]$ denote the set of probability measures on the interval $[0,1]$.  Notice that for each $s\in [0,1]$,
$$
\tr[B^* \sigma^{1-s}A \sigma^s] =  \tr[ ( \sigma^{(1-s)/2}B \sigma^{s/2})^*  \sigma^{(1-s)/2}A \sigma^{s/2}]\ ,
$$
and the right hand side is strictly positive when $B=A \neq0$.

\begin{definition} For each  $m\in\mathcal{P}[0,1]$, $\langle \cdot,\cdot\rangle_m$ denotes the inner product on $M_n(\C)$ given by
\begin{equation}
    \langle B,A \rangle_m = \tr[B^\ast\mathcal{M}_m(A)] \quad \text{where} \quad \mathcal{M}_m(A) = \int_0^1 \sigma^{s} A \sigma^{1-s} \; \mathrm{d} m (s)  \ .
    \label{integral_M}
\end{equation}
\end{definition}

The {\em Gelfand-Naimark-Segal} (GNS) inner product corresponds to $m = \delta_0$, the point mass at $s=0$. Other cases are known by name. Taking $m = \delta_{1/2}$ yields the {\em Kubo-Martin-Schwinger} (KMS) inner product
\begin{equation*}
\langle B,A\rangle_{{\rm KMS}} = \tr[B^*\sigma^{1/2}A\sigma^{1/2}]\ .
\end{equation*}
Taking $m$ to be uniform on $[0,1]$ yields the  Bogoliubov-Kubo-Mori  inner product 
$\langle B,A\rangle_{{\rm BKM}}$ already defined in \eqref{BKMipdef}.

For $\sigma = n^{-1}\one$, the normalized identity, all choices of $m$ reduce to the normalized Hilbert-Schmidt inner product.  

Throughout what follows, $\mathfrak{H}$ always denotes the Hilbert space $M_n(\C)$ equipped with this inner product,
\begin{equation*}
\langle B,A\rangle_{\mathfrak{H}} = \frac1n\tr[B^*A]\ .
\end{equation*}
Let  $\mathcal{L}(M_n(\C))$ denote the linear operators on $M_n(\C)$, or, what is the same thing in this finite dimensional setting, on $\mathfrak{H}$.  A dagger is always used to denote the adjoint with respect to the inner product on $\mathfrak{H}$, That is,  for $\Phi\in \mathcal{L}(M_n(\C))$,  $\Phi^\dagger$ is defined by
\begin{equation*}
\langle \Phi^\dagger(B),A\rangle_{\mathfrak{H}} =  \langle B, \Phi(A)\rangle_{\mathfrak{H}}
\end{equation*}
for all $A,B$, which is consistent with our previous notation since the normalization has no effect on the adjoint.  

 We can also make $\mathcal{L}(M_n(\C))$ into a Hilbert space by equipping it with the normalized Hilbert-Schmidt inner product. 
Throughout what follows, this  Hilbert space is denoted by $\widehat{\mathfrak{H}}$. The following formula for the inner product in 
$\widehat{\mathfrak{H}}$ is often useful. Let  $\{F_{i,j}\}_{1\leq i,j\leq N}$ be any  orthonormal basis for $\mathfrak{H}$. Then for $\Phi$ and $\Psi\in \widehat{\mathfrak{H}}$, 
\begin{equation*}
\langle \Psi,\Phi\rangle_{\widehat{\mathfrak{H}}} = \frac{1}{n^2}\sum_{i,j=1}^n  \langle \Psi(F_{i,j}), \Phi(F_{i,j})\rangle_{\mathfrak{H}} \ .
\end{equation*}

To each $\sigma\in \Dens_+$, there corresponds the {\em modular operator} $\Delta_\sigma$ on $\mathfrak{H}$ defined by 
\begin{equation*}
\Delta_\sigma A = \sigma A \sigma^{-1}\ .
\end{equation*}
Consider any  orthonormal basis $\{u_1,\dots,u_n\}$ of $\C^N$ consisting of eigenvectors of $\sigma$, so that for $1 \leq j \leq n$, $\sigma u_j = \lambda_j u_j$.  For $1\leq i,j \leq n$, define 
$
E_{i,j} = \sqrt{n} | u_i \rangle \langle u_j|
$, so that $\{E_{i,j}\}_{1\leq i,j\leq n}$ is an orthonormal basis of $\fH$.  A simple computation shows that for each $i,j$,
\begin{equation*}
\Delta_\sigma E_{i,j} = \lambda_i\lambda_j^{-1} E_{i,j}\ .
\end{equation*}
Thus $\Delta_\sigma$ is diagonalized, with positive diagonal entries, by an orthonormal basis in $\mathfrak{H}$. It follows that $\Delta_\sigma$ is a positive operator on $\mathfrak{H}$.   
Using the Spectral Theorem, we may then define $\Delta_\sigma^{s}$ for all $s$. This provides another way to write the operator ${\mathcal M}_m$ defined in  \eqref{integral_M}:
\begin{equation*}
{\mathcal M}_m = \left( \int_0^1 \Delta_\sigma^s {\rm d}m\right) R_\sigma
\end{equation*}
where $R_\sigma$ denotes right multiplication by $\sigma$, also a positive operator on $\mathfrak{H}$ that commutes with $\Delta_\sigma$, 
and hence with $\int_0^1 \Delta_\sigma^s {\rm d}m$. 

This may well look familiar. If we define $f_m(x) = \int x^s {\rm d}m$, then 
$$
{\mathcal M}_m  = {\mathbb J}_f(\sigma,\sigma)
$$
where ${\mathbb J}_f(X,Y)$ is the operator introduced in connection with the Hiai-Petz method for proving trace inequalities.

As is well-known, both the set of completely positive maps and the set of QMS generators  are convex cones.  For completely positive maps this is obvious. For the case of QMS generators, consider a QMS    $\{\cP_t\}_{t\geq 0}$  with generator 
$$\cL := \lim_{t\to0}\frac1t(\cP_t - I) $$
 Note that $\cL \one = 0$, and $\cP_t = e^{t\cL}$. If $\cL_1$ and $\cL_2$ are two  generators of completely positive semigroups, then
\begin{equation}\label{trotter}
e^{t(\cL_1 + \cL_2)} = \lim_{k\to\infty} (e^{(t/k)\cL_1}e^{(t/k)\cL_2})^k
\end{equation}
is completely positive,  and hence the set of generators of completely positive semigroups is closed under addition, and also evidently under multiplication 
by non-negative real numbers.  Since if $\cL_j\one = 0$ for $j=1,2$, then $(\cL_1+\cL_2)\one =0$, it follows easily that the set of all 
QMS semigroup generators is a convex cone.  However, it is not a pointed cone: Let $H\in M_n(\C)$ be self-adjoint, and define $\cL(A) = i[H,A]$. 
Then $e^{t\cL}(A) = e^{itH}A e^{-it H}$ is a QMS, and evidently both $\cL$ and $-\cL$  are QMS generators.

There are some obvious examples of semigroups of completely positive operators on $M_n(\C)$. For instance if $\Phi$ is completely positive, so it $e^{t\Phi}$. 
Likewise, for any $G \in M_n(\C)$, so is each $A \mapsto e^{-tG^*} A e^{-tG}$, which has the generator
$$
\cL(A) = -G^*A - AG\ .
$$
By what we have noted above, for any completely positive map $\Phi$ on $M_n(\C)$ and any $G\in M_n(\C)$,
\begin{equation}\label{lind0}
\cL(A) := \Phi(A) -G^*A - AG
\end{equation}
generates a semigroup of completely positive operators. It is a QMS generator if and only if $\cL(\one) = 0$, and this in turn is the case if and only if
$\Phi(\one) = G^* +G$.

Write $G := K+iH$ where $H$ and  $K$ are self adjoint.  Then 
\begin{equation}\label{lind2}
\cL(A) =  \Phi(A) - \frac12(\Phi(\one)A + A \Phi(\one)) + i[H,A]\ .
\end{equation}
It turns out that this class of examples is all that there is: By a 1976 theorem
 of Lindblad \cite{Lin76} and Gorini, Kossakowski and Sudarshan \cite{GKS76}, every QMS generator  $\cL$  on $M_n(\C)$ has this exact  form.   
 Moreover, since every completely positive map $\Phi$ has a Kraus representation
 \begin{equation}\label{kraus}
 \Phi(A) = \sum_{j=1}^m V_j^*AV_j\  
\end{equation}
for some $\{V_1,\dots,V_m\}\subset M_n(\C)$,
 \eqref{lind2} can be rewritten as
\begin{equation}\label{lind2b}
 \cL A  =  i [H,A] + \sum_{j =1}^m  V_j^* [A, V_j] +  [V_j^* , A] V_j \ , 
 \end{equation}
 and consequently,
 \begin{equation*}
  \cL^\dagger \rho  = - i [H,\rho] + \sum_{j=1}^m  [V_j, \rho V_j^*] +  [V_j \rho, V_j^*]
\end{equation*}
(All of this will be proved in the next section of this lecture.)

The terms $\Phi$ and $G$ in the decomposition of $\cL$ in \eqref{lind0} are not uniquely determined by $\cL$.  Indeed, consider any CP map $\Phi$ with the Kraus representation \eqref{kraus}. Then for any choice of complex numbers $c_1,\dots,c_m$, define $W_j := V_j + c_j\one$, and 
$$
\Psi(A) := \sum_{j=1}^m W_j^*AW_j + \sum_{j=1}^m|c_j|^2 A\ .
$$
Then a simple computation shows that 
$$
\Phi(A) = \Psi(A) - G^*A - AG \qquad{\rm where}\qquad G = \sum_{j=1}c_jW_j\ .
$$
Consequently, in the Lindblad form \eqref{lind2b}, the ``Hamiltonian part'' $ i [H,A] $ is not uniquely determined by $\cL$.

The situation is better if we impose some sort of self adjointness requirement on $\cL$. However, for general $m$, and for $\cL$ in the form \eqref{lind0},
it is possible for $\cL$ to be self adjoint but not $\Phi$, and vice-versa -- even for any choice of the non-unique decomposition. 
Hence in general,
 the problem of determining the structure of the cone of completely positiver maps that are self adjoint with respect to $\langle\cdot,\cdot\rangle$   
 is different from the problem of determining the structure of the cone QMS generators the are self adjoint with respect to some $\langle\cdot,\cdot\rangle_m$.    
 There is, however,  an important family of cases in which there is a canonical decomposition into self adjoint components.  
 This is when $m  = \delta_s$, the point mass at $s\in [0,1]$, which includes the GNS and KMS inner products, but not the BKM inner product.

\subsection{Structure of QMS generators}

In this section we begin to explain the structure theory for various classes of QMS generators. We begin with some useful tools that are used in the proof of the 
structure theorems, but are more widely useful. Again in this section we closely follow \cite{AC20}.   

Recall that $\fH$ denotes $M_n(\C)$ equipped with the normalized Hilbert-Schmidt inner product.  Let $\mathcal{L}(\fH)$ denote the space of linear transformation on $\fH$.  We make 
$\mathcal{L}(\fH)$ into a Hilbert space $\widehat{\fH}$ by equipping $\mathcal{L}(\fH)$ with its normalized Hilbert-Schmidt inner product.   There is a very useful construction that takes two orthonormal bases for $\fH$, and constructs from them an orthonormal basis for $\widehat{\fH}$. Expanding $\Phi\in \mathcal{L}(\fH)$ in bases constructed this way turns out to reveal significan properties of $\Phi$, as we now explain. 

It is convenient to index orthonormal bases of $\fH$ by ordered pairs 
$$(i,j) \in \{1,\dots n\}\times  \{1,\dots n\} =: \cJ_n\ .$$
 We use lower case greek letters to denote elements of the index set. For $\alpha = (i,j)\in \cJ_n$, $\alpha':= (j,i)$.

For $F,G\in M_n(\C)$, define the operator $\#(F\otimes G)$ on $\fH$ by
\begin{equation*}
\#(F\otimes G)X := FXG\ .
\end{equation*}
Simple computations show that 
\begin{equation*}
\langle \#(F_1\otimes G_1), \#(F_2\otimes G_2)\rangle_{\widehat{\fH}} = \langle F_1,F_2\rangle_{\fH} \langle G_1,G_2\rangle_{\fH}\ .
\end{equation*}
Therefore, whenever $\{F_\alpha\}_{\alpha\in \cJ_n}$   and $\{G_\alpha\}_{\alpha\in \cJ_n}$  are two orthonormal bases of $\fH$,  $\{\#(F_\alpha\otimes G_\beta)\}_{\alpha,\beta\in \cJ_n}$ is an orthonormal basis of $\widehat{\fH}$. 

Now fix any orthonormal basis $\{F_\alpha\}_{\alpha\in \cJ_n}$ of $\fH$.
Then $\{F^*_\alpha\}_{\alpha\in \cJ_n}$  is also an orthonormal basis of $\fH$, and hence $\{\#(F_\alpha^*\otimes F_\beta)\}_{\alpha,\beta\in \cJ_n}$ is an orthonormal basis of $\widehat{\fH}$.   Thus,
 every linear operator $\Phi$ on $\fH$ has an expansion
\begin{equation}\label{GKS1}
\Phi = \sum_{\alpha,\beta\in \cJ_n} (c_\Phi)_{\alpha,\beta}\#(F_\alpha^*\otimes F_\beta)\ ,
\end{equation}
where
\begin{equation}\label{GKS2}
(c_\Phi)_{\alpha,\beta} = \langle \#(F_\alpha^*\otimes F_\beta), \Phi \rangle_{\widehat{\fH}}\ .
\end{equation}
In particular, the coefficients $(c_\Phi)_{\alpha,\beta}$ are uniquely determined.

\begin{definition}[Characteristic matrix] Given $\Phi\in {\mathcal L}(\fH)$, and an orthonormal basis $\{F_\alpha\}_{\alpha\in \cJ_n}$ of $\fH$, 
its {\em characteristic matrix} for this orthonormal basis is the $n^2\times n^2$ matrix  $C_\Phi$ whose $(\alpha,\beta)$th entry is $[C_\Phi]_{\alpha,\beta}$ as specified in 
\eqref{GKS2}. 
\end{definition}

The following is a convenient formulation of a theorem of Choi \cite{Choi75}:

\begin{theorem}\label{CPchar}  A linear operator $\Phi$ on $M_n(\C)$  is completely positive if and only if for every orthonormal basis $\{F_\alpha\}_{\alpha\in \cJ_n}$ of $\fH$, the corresponding characteristic matrix $C_\Phi$ is positive semi-definite. 
\end{theorem}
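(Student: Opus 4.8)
The plan is to reduce the statement to the Kraus representation of completely positive maps already recorded in the discussion around \eqref{lieb6}, namely that $\Phi:M_n(\C)\to M_n(\C)$ is completely positive precisely when $\Phi(X)=\sum_{k} V_k^* X V_k$ for some finite family $\{V_k\}\subset M_n(\C)$. The bridge between such a representation and the characteristic matrix $C_\Phi$ is obtained by expanding each Kraus operator in a fixed orthonormal basis $\{F_\alpha\}_{\alpha\in\cJ_n}$ of $\fH$, and then matching coefficients against the (unique) expansion \eqref{GKS1}--\eqref{GKS2}.

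First I would fix an arbitrary orthonormal basis $\{F_\alpha\}_{\alpha\in\cJ_n}$ and, assuming $\Phi$ completely positive, write $V_k=\sum_\beta a_{k\beta}F_\beta$, so that $V_k^*=\sum_\alpha \overline{a_{k\alpha}}F_\alpha^*$. Substituting into $\Phi(X)=\sum_k V_k^* X V_k$ and collecting terms gives
\begin{equation*}
\Phi(X)=\sum_{\alpha,\beta\in\cJ_n}\Big(\sum_k \overline{a_{k\alpha}}\,a_{k\beta}\Big)F_\alpha^* X F_\beta,
\end{equation*}
and comparing with \eqref{GKS1}--\eqref{GKS2}, whose coefficients are uniquely determined, identifies $[C_\Phi]_{\alpha,\beta}=\sum_k\overline{a_{k\alpha}}\,a_{k\beta}$. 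For any vector $\xi=(\xi_\alpha)_{\alpha\in\cJ_n}$ this yields $\langle\xi,C_\Phi\xi\rangle=\sum_k|\sum_\alpha a_{k\alpha}\xi_\alpha|^2\ge 0$, so $C_\Phi\ge 0$; since the basis was arbitrary, this establishes the forward implication for \emph{every} orthonormal basis at once.

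For the converse I would run the same computation backwards. If $C_\Phi\ge 0$ for some orthonormal basis, diagonalize $C_\Phi=\sum_k\lambda_k|\psi_k\rangle\langle\psi_k|$ with $\lambda_k\ge 0$, set $a_{k\alpha}:=\sqrt{\lambda_k}\,\overline{(\psi_k)_\alpha}$, and define $V_k:=\sum_\alpha a_{k\alpha}F_\alpha$. Then $[C_\Phi]_{\alpha,\beta}=\sum_k\overline{a_{k\alpha}}\,a_{k\beta}$, and reversing the substitution above shows $\Phi(X)=\sum_k V_k^* X V_k$; by the elementary direction of \eqref{lieb6}, such a $\Phi$ is completely positive. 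Since the hypothesis is assumed for every orthonormal basis, it certainly holds for one, which is all the converse requires.

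The only genuinely delicate point is the bookkeeping with complex conjugates: the first tensor slot carries $F_\alpha^*$ while the inner product $\langle\cdot,\cdot\rangle_{\widehat\fH}$ is conjugate-linear in its first argument, so one must track carefully that the quadratic form $\langle\xi,C_\Phi\xi\rangle$ collapses to the sum of squares $\sum_k|\sum_\alpha a_{k\alpha}\xi_\alpha|^2$ rather than some indefinite expression. I would also remark---though it is not strictly needed for the proof---that passing to a different orthonormal basis replaces $C_\Phi$ by $VC_\Phi V^*$ for a suitable unitary $V$, so positivity of the characteristic matrix is in fact a basis-independent property; this explains why the ``for every basis'' and ``for some basis'' formulations coincide.
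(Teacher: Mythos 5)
Your proof is correct, and your converse is essentially identical to the paper's (diagonalize $C_\Phi$ and assemble Kraus operators from the eigenvector coefficients), but your forward implication takes a genuinely different route. The paper proves ``completely positive $\Rightarrow C_\Phi \geq 0$'' directly: writing $\sum_{\alpha,\beta}\overline{z}_\alpha [C_\Phi]_{\alpha,\beta}z_\beta$ as $\frac{1}{n^2}\sum_{k,\ell}\tr[E_{\ell,k}\,G^*\Phi(E_{k,\ell})G]$ with $G=\sum_\alpha z_\alpha F_\alpha^*$, it observes that this is the trace of a product of two positive block matrices --- $[E_{i,j}]$, a multiple of an orthogonal projection, and $[G^*\Phi(E_{i,j})G]$, which is positive precisely because $\Phi$ is completely positive --- hence nonnegative. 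You instead invoke the Kraus representation theorem recorded around \eqref{lieb6} and verify, by matching coefficients against the unique expansion \eqref{GKS1}--\eqref{GKS2}, that $C_\Phi$ is then a Gram matrix. That argument is logically valid within the paper's framework, since the Kraus/Choi theorem is stated there with citations; but note what it trades away. The paper cites that theorem \emph{without proof}, and its self-contained proof of Theorem~\ref{CPchar} is exactly what supplies one: the converse direction constructs a Kraus representation from positivity of $C_\Phi$, so Theorem~\ref{CPchar} together with the elementary fact that maps of the form \eqref{lieb6} are completely positive yields the Kraus decomposition as a corollary (the paper remarks that its construction ``provides a Kraus representation''). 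Your forward direction presupposes precisely that nontrivial ingredient, so your proof could not serve this purpose --- if Theorem~\ref{CPchar} is meant to establish the Kraus theorem, which is the standard logical order, your argument would be circular. Your closing remark on basis-independence is correct (the characteristic matrices in two bases are unitarily conjugate, via the complex conjugate of the change-of-basis matrix) and is a nice observation that the paper leaves implicit.
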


\begin{proof} 
The right side of \eqref{GKS2} can be computed using the matrix unit basis $\{E_{k,\ell}\}$ to compute the normalized trace:
\begin{equation*}
[C_\Phi]_{\alpha,\beta} = \langle \#(F_\alpha^*\otimes F_\beta ), \Phi \rangle_{\widehat{\fH}} = \frac{1}{n^2} \sum_{1 \leq k,\ell\leq n} \tr[E_{k,\ell}^* F_\alpha\Phi(E_{k,\ell})F_\beta^*]\ .
\end{equation*} 
Let $(z_1,\dots,z_{n^2})\in \C^{n^2}$ and define $G := \sum_{\alpha\in \cJ_n} z_\alpha F_\alpha^*$. Then
\begin{equation}\label{GKSA1}
\sum_{\alpha,\beta }\overline{z}_\alpha (c_\Phi)_{\alpha,\beta} z_\beta   =  \frac{1}{n^2} \sum_{1 \leq k,\ell\leq n} \tr[E_{\ell,k} G^*\Phi(E_{k,\ell})G]\ .
\end{equation}
Let $[E_{i,j}]$ denote the block matrix whose $i,j$th entry is $E_{i,j}$. Then it is easy to see that $n^{-1/2}[E_{i,j}]$ is an orthogonal projection, and in particular, positive. 
Now suppose that $\Phi$ is completely positive. Then the block matrix $[G^*\Phi(E_{i,j})G]$ whose $i,j$th entry is $G^*\Phi(E_{i,j})G$ is positive. The right side of \eqref{GKSA1} is then
the trace (on the direct sum of $n$ copies of $\C^n$) of the product of positive $n^2\times n^2$ matrices, and as such it is positive.  Thus, whenever $\Phi$ is completely positive, $C_\Phi$ is positive semi-definite.

On the other hand, suppose that $C_\Phi$ is positive semi-definite. Let $\Lambda$ be a diagonal matrix whose diagonal entries are the eigenvalues of $C_\Phi$, and let $U$ be a unitary such that
$C_\Phi = U^*\Lambda U$.
Then by \eqref{GKS1}, for any $X\in \fH$,
$$
\Phi(X) =   \sum_{\alpha,\beta,\gamma\in \cJ_n}  U^*_{\alpha,\gamma} \lambda_k U_{\gamma,\beta }F_\alpha^*XF_\beta  =  \sum_{\gamma\in \cJ_n}  V_\gamma^* X V_\gamma \ ,
$$
where ${\displaystyle V_\gamma := \sqrt{\lambda_\gamma}\sum_{\alpha\in \cJ_n} U_{\gamma,\alpha}F_\alpha}$.
This shows that whenever $C_\Phi$ is positive semi-definite, $\Phi$ is completely positive, and provides a Kraus representation of it. 
\end{proof}

So far, we have not required any special properties of the orthonormal bases $\{F_\alpha\}_{\alpha\in \cJ_n}$ of $\fH$ that we used. To characterize QMS generators, it will be necessary to choose bases that have several useful properties:

\begin{definition}[symmetric, unital and matrix unit  bases]\label{unbadef}  
An orthonormal basis $\{F_\alpha\}_{\alpha\in \cJ_n}$ of $\fH$ is  {\em symmetric} in case
\begin{equation*}
{\rm For \ all}\quad \alpha\in \cJ_n\ ,\quad F_\alpha^* = F_{\alpha'}\ .
\end{equation*}
It is {\em unital}  in case it is symmetric and moreover
\begin{equation*}
F_{(1,1)} = \one\ .
\end{equation*}
It is the {\em matrix unit} basis corresponding to an orthonormal basis $\{u_1,\dots,u_n\}$ of $\C^n$ in case
\begin{equation*}
F_{(i,j)} =\sqrt{n}|u_i\rangle \langle u_j|\ .
\end{equation*}
Note that a matrix unit basis is symmetric. 
\end{definition}

One reason unital bases are useful is the following: Let us compute the characteristic matrix of the identity transformation for a unital basis $\{F_\alpha\}_{\alpha\in \cJ_n}$. Note that
$$\tr[F_\alpha] = \tr[F_{(1,1)}^* F_\alpha] = n \delta_{(1,1),\alpha}\ .$$
Therefore, proceeding as in the proof of Lemma~\ref{CPchar},
\begin{multline}\label{chid}
(C_I)_{\alpha,\beta} =   \frac{1}{n^2} \sum_{1 \leq k,\ell\leq n} \tr[E_{k,\ell}^* F_\alpha E_{k,\ell} F_\beta^*]  =\\ \frac{1}{n^2} \tr[F_\alpha]\tr[F_\beta] =   \delta_{\alpha,(1,1)} \delta_{\beta,(1,1)} \ .
\end{multline}
That is, $C_I$ is zero except in the upper left entry, where it has the value $1$. This fact will be useful below.

\begin{definition}[Reduced characteristic matrix]  Let $\cL$ be a Hermitian operator on $\fH$ such that $\cL\one = 0$, and let $\{F_\alpha\}_{\alpha\in \cJ_n}$ be a unital orthonormal basis. 
Let $C_\cL$ be the characteristic matrix of $\cL$ with respect to this basis.  The {\em reduced characteristic matrix} $R_\cL$ of $\cL$ is the $(n^2-1)\times (n^2-1)$ matrix obtained by deleting the first row and column of $C_{\cL}$.
\end{definition}

The following lemma is proved in \cite{GKS76}.

\begin{lemma}\label{redLm} Let $\cL$ be a Hermitian operator on $\fH$ such that $\cL\one = 0$, and let $\{F_\alpha\}_{\alpha\in \cJ_n}$ be a unital orthonormal basis. Let $R_\cL$ be the  reduced characteristic matrix of $\cL$ with respect to this basis.  Then $e^{t\cL} $ is completely positive for all $t>0$  if and only if  $R_\cL$ is positive semi-definite,
and in this case there is a completely positive operator $\Phi\in {\mathcal L}(\fH)$ and a $G\in M_n(\C)$ such that for all $A\in M_n(\C)$,
\begin{equation}\label{lindform}
\cL(A)  = \Phi(A) - G^*A - AG\ .
\end{equation}
\end{lemma}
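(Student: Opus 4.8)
The plan is to carry out the entire argument at the level of the characteristic matrix $C_\cL$, using Theorem~\ref{CPchar} as the sole nontrivial input together with the computation \eqref{chid} of the characteristic matrix of the identity. First I would record two preliminaries. Writing $o:=(1,1)$, so that $F_o=\one$, the expansion \eqref{GKS1} reads $\cL(A)=\sum_{\alpha,\beta\in\cJ_n}(c_\cL)_{\alpha,\beta}F_\alpha^* A F_\beta$, and since the $\#(F_\alpha^*\otimes F_\beta)$ form an orthonormal basis of $\widehat{\fH}$ the coefficients are unique; comparing $\cL(A^*)$ with $\cL(A)^*$ and reindexing then shows that the standing hypothesis that $\cL$ is Hermitian (which I read as $\cL(A^*)=\cL(A)^*$) is equivalent to $(c_\cL)_{\alpha,\beta}=\overline{(c_\cL)_{\beta,\alpha}}$, i.e. to $C_\cL$ being a Hermitian matrix. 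Relative to the splitting $\C^{n^2}=\C e_o\oplus e_o^{\perp}$, where $e_o$ is the coordinate vector at the index $o$, the matrix $C_\cL$ has a real $(o,o)$ entry $a=(c_\cL)_{o,o}$, an off-diagonal column $b$, and lower-right block equal to the reduced characteristic matrix $R_\cL$; and by \eqref{chid} the characteristic matrix of the identity is $C_I=e_oe_o^*$.

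For the forward implication I would use that $\Phi\mapsto C_\Phi$ is linear (by \eqref{GKS2}) and that $t\mapsto e^{t\cL}$ is smooth with $e^{0}=I$ and derivative $\cL$ at $t=0$, so that $C_{e^{t\cL}}=C_I+tC_\cL+O(t^2)$. If $e^{t\cL}$ is completely positive for every $t>0$, then by Theorem~\ref{CPchar} each $C_{e^{t\cL}}$ is positive semidefinite. Testing against any vector $v=(0,v')$ orthogonal to $e_o$ annihilates the $C_I$ contribution and leaves $t\,\langle v',R_\cL v'\rangle+O(t^2)\geq 0$; dividing by $t$ and letting $t\downarrow 0$ gives $R_\cL\geq 0$.

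For the converse, assume $R_\cL\geq 0$ and diagonalize $R_\cL=\sum_k\lambda_k w_kw_k^*$ with $\lambda_k\geq 0$ and $w_k\in\C^{n^2-1}$ supported on $\cJ_n\setminus\{o\}$. Setting $V_k:=\sum_{\beta\neq o}\overline{(w_k)_\beta}F_\beta$, the block of the expansion with $\alpha,\beta\neq o$ becomes $\Phi(A):=\sum_{\alpha,\beta\neq o}[R_\cL]_{\alpha,\beta}F_\alpha^* A F_\beta=\sum_k\lambda_k V_k^* A V_k$, which is completely positive (of the form \eqref{lieb6}). The remaining terms are those with $\alpha=o$ or $\beta=o$, namely $(c_\cL)_{o,o}A+\sum_{\beta\neq o}(c_\cL)_{o,\beta}AF_\beta+\sum_{\alpha\neq o}(c_\cL)_{\alpha,o}F_\alpha^* A$, using $F_o=\one$. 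I would then verify that the single matrix $G:=-\tfrac12(c_\cL)_{o,o}\one-\sum_{\beta\neq o}(c_\cL)_{o,\beta}F_\beta$ makes these three terms equal to $-G^*A-AG$, which produces exactly the asserted form \eqref{lindform} with $\Phi$ completely positive. By the discussion surrounding \eqref{lind0}, any $\cL$ of this form generates a semigroup of completely positive maps, so $e^{t\cL}$ is completely positive for all $t>0$; and the hypothesis $\cL\one=0$ forces $\Phi(\one)=G+G^*$, i.e. $e^{t\cL}\one=\one$, so the semigroup is a QMS.

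The conceptual steps — the first-order expansion fed into Theorem~\ref{CPchar}, and reading the Lindblad form off the block decomposition — are short. The main obstacle is the bookkeeping in verifying the absorption $-G^*A-AG$. Matching the right-multiplication terms determines $G$, while matching the left-multiplication terms determines $G$ a second time; these two determinations must be shown to coincide, and this uses the symmetry of the basis ($F_\alpha^*=F_{\alpha'}$, with the involution $\alpha\mapsto\alpha'$ fixing $o$ and permuting $\cJ_n\setminus\{o\}$), together with the Hermiticity $(c_\cL)_{\alpha,o}=\overline{(c_\cL)_{o,\alpha}}$. Hermiticity also makes $(c_\cL)_{o,o}$ real, so the $\one$-parts of $-G^*A$ and $-AG$ split the diagonal term $(c_\cL)_{o,o}A$ evenly; tracking the relabelling $\gamma=\alpha'$ consistently is precisely what closes the argument.
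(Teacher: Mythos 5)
Your proposal is correct and follows essentially the same route as the paper's proof: both directions rest on Theorem~\ref{CPchar} together with the computation \eqref{chid}, the forward implication via the first-order expansion of $e^{t\cL}$ restricted to the complement of the index $(1,1)$, and the converse via the identical block decomposition with the same choice of $G := -\tfrac12 (c_\cL)_{(1,1),(1,1)}\one - \sum_{\beta\neq(1,1)}(c_\cL)_{(1,1),\beta}F_\beta$, concluding with the Trotter argument around \eqref{trotter}. Your explicit verification that the cross terms are absorbed into $-G^*A - AG$ (using Hermiticity of $C_\cL$; the basis symmetry is in fact not strictly needed there) merely fills in a step the paper states without detail, and your diagonalization of $R_\cL$ into a Kraus form just inlines the corresponding half of the proof of Theorem~\ref{CPchar}.
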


\begin{proof} Suppose that $\cP_t := e^{t\cL}$ is completely positive for each  $t>0$. By \eqref{chid}, $R_I$, the reduced characteristic matrix of the identity, is $0$.
Then since
$$R_{t^{-1}(\cP_t - I)} = t^{-1}R_{\cP_t} -  t^{-1}R_I  =  t^{-1}R_{\cP_t}  \ ,$$
the reduced characteristic matrix of $t^{-1}(\cP_t - I)$ coincides with the reduced characteristic  matrix of 
$t^{-1}\cP_t $, and by Lemma~\ref{CPchar} this is positive. Taking the limit $t\to 0$, we conclude that the reduced characteristic matrix of $\cL$ is positive.

Conversely, suppose that the reduced characteristic matrix of $\cL$ is positive. 
Since $F_{(1,1)} = \one$, 
\begin{eqnarray*}
\cL(A)  = \sum_{\alpha,\beta}[C_\cL]_{\alpha,\beta} F_\alpha^* A F_\beta = -G^*A - AG  +   \sum_{\alpha,\beta}[R_\cL]_{\alpha,\beta} F_\alpha^* A F_\beta
\end{eqnarray*}
where 
$$G := -\frac12 (c_\cL)_{(1,1),(1,1)} \one - \sum_{\beta} (c_\cL)_{(1,1),\beta} F_\beta \ .
$$
By Theorem~\ref{CPchar}, if we define $\Phi(A) :=  \sum_{\alpha,\beta}[R_\cL]_{\alpha,\beta} F_\alpha^* A F_\beta$, then 
$\Phi$  is completely positive. This shows that for all $A\in M_n(\C)$,   $\cL(A)  = \Phi(A) - G^*A - AG$ whenever $R_{\cL}$ is positive semidefinite. Then by the  argument around \eqref{trotter}, 
$e^{t\cL}$ is completely positive for all $t> 0$, and $\cL$ has the form specified in \eqref{lindform}. 
\end{proof}

\begin{theorem}[Structure of QMS generators] Let $\cL$ be a QMS generator on $M_n(\C)$. Then there is a completely positive map $\Phi\in {\mathcal L}(M_n(\C))$ and a self adjoint 
$H\in M_n(\C)$ 
such that for all $A\ in M_n(\C)$, 
\begin{equation*}
\cL(A)  = \Phi(A) - \frac12( \Phi(\one)A + A\Phi(\one) + i[H,A] \ .
\end{equation*}
\end{theorem}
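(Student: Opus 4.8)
The plan is to obtain this statement as a direct corollary of the Lindblad-form lemma, Lemma~\ref{redLm}, after checking its hypotheses, followed by a short algebraic reorganization forced by the unitality constraint $\cL\one = 0$. The structural content is already contained in Lemma~\ref{redLm} and Theorem~\ref{CPchar}, so no new hard analysis is required.

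First I would verify that a QMS generator $\cL$ meets the standing hypotheses of Lemma~\ref{redLm}, namely that $\cL$ is Hermitian with $\cL\one = 0$. Since $\{\cP_t\}_{t\geq 0}$ is by definition a semigroup of completely positive unital maps, $\cP_t = e^{t\cL}$ is completely positive for every $t > 0$; thus the positivity criterion $R_\cL \geq 0$ of Lemma~\ref{redLm} holds automatically, and there is no positivity obstruction to dispatch. Unitality gives $\cP_t \one = \one$ for all $t$, and differentiating at $t = 0$ yields $\cL\one = 0$. The one point that genuinely requires a remark is that $\cL$ is Hermitian, i.e. Hermiticity preserving: every completely positive map preserves adjoints, so each $\cP_t$ satisfies $\cP_t(A^*) = \cP_t(A)^*$, and writing $\cL = \lim_{t\to 0} t^{-1}(\cP_t - I)$ as a limit of Hermiticity-preserving maps (with $I$ trivially such), $\cL$ inherits the property. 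This is exactly what makes the characteristic matrix $C_\cL$, and hence the reduced matrix $R_\cL$, Hermitian in a unital (hence symmetric) basis, so that the condition $R_\cL \geq 0$ is meaningful and Lemma~\ref{redLm} applies.

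With the hypotheses in hand, Lemma~\ref{redLm} supplies a completely positive $\Phi \in \mathcal{L}(M_n(\C))$ and a $G \in M_n(\C)$ with $\cL(A) = \Phi(A) - G^*A - AG$ for all $A$. I would then impose $\cL\one = 0$, which reads $\Phi(\one) - G^* - G = 0$, so $G + G^* = \Phi(\one)$. Splitting $G = K + iH$ into its self-adjoint parts $K = \tfrac12(G + G^*)$ and $H = \tfrac{1}{2i}(G - G^*)$, this identity fixes $K = \tfrac12\Phi(\one)$ while leaving $H$ free and self-adjoint. Substituting and expanding gives $-G^*A - AG = -(KA + AK) + i(HA - AH) = -\tfrac12(\Phi(\one)A + A\Phi(\one)) + i[H,A]$, which converts the representation of Lemma~\ref{redLm} into the asserted form $\cL(A) = \Phi(A) - \tfrac12(\Phi(\one)A + A\Phi(\one)) + i[H,A]$. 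The only step meriting care is the Hermiticity verification, since it is what licenses the use of $R_\cL$ and the Lindblad decomposition; the remainder is the completion-of-the-square computation that separates the dissipative part $\Phi(A) - \tfrac12(\Phi(\one)A + A\Phi(\one))$ from the Hamiltonian part $i[H,A]$.
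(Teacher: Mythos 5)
Your proposal is correct and takes essentially the same route as the paper's own proof: both invoke Lemma~\ref{redLm} to write $\cL(A) = \Phi(A) - G^*A - AG$, then use $\cL\one = 0$ to force the self-adjoint part of $G$ to equal $\tfrac12\Phi(\one)$, leaving a free self-adjoint $H$ whose skew part produces $i[H,A]$ upon expansion. The only difference is that you explicitly verify the hypotheses of Lemma~\ref{redLm} (Hermiticity preservation of $\cL$ and $\cL\one = 0$), which the paper leaves implicit; this is a harmless and indeed welcome addition.
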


\begin{proof} By Lemma~\ref{redLm},  there is a completely positive operator $\Phi\in {\mathcal L}(\fH)$ and a $G\in M_n(\C)$ such that for all $A\in M_n(\C)$,
$\cL(A)  = \Phi(A) - G^*A - AG$. Since $\cL$ is a QMS generator, $\cL\one =0$, and hence $\Phi(\one) = G^* + G$.  Therefore, $G = \frac12 \Psi(\one) + i H$ for some self 
adjoint $H\in M_n(\C)$.
Then
$$
G^*A + AG  = \frac12\Phi(\one) A + A \frac12\Phi(\one) -iHA + i AH  =  \frac12\Phi(\one) A + A \frac12\Phi(\one) -i[H,A]\ .
$$
\end{proof} 

\begin{remark} We have essentially followed the proof of \cite{GKS76} with some simplifications from \cite{PZ77}. The proof of Lindblad \cite{Lin76} is somewhat different and has the advantage of extending to some infinite dimensional cases.
\end{remark}

\subsection{Structure of generators satisfying detailed balance}

A classical Markov chain on the state space $\{1,\dots,n\}$ with transition matrix $P$ and invariant measure $\mu$ satisfies the {\em detailed balance condition} in case for all $1 \leq i,j \leq n$
\begin{equation}\label{DBCC}
\mu(i)P_{i,j} = \mu(j)P_{j,i}
\end{equation}
which means that in the steady state, a transition from $i$ to $j$ has exactly the same probability as a transition from $j$ to $i$. If one is watching a film of the process, it is impossible to know if the film is running backwards or forwards. For this reason, the detailed balance condition is sometimes referred to as {\em microscopic reversibility}.   If we equip the space of functions  on the state space  $\{1,\dots,n\}$ with the inner product $\langle \cdot,\cdot\rangle_\mu$ defined by
$$
\langle f,g\rangle_\mu = \sum_{j=1}^n  \mu(j) \overline{f(j)}g(j)\ ,
$$
and define $Pf(i) = \sum_{j=1}^nP_{i,j}f(j)$, then given \eqref{DBCC}
$$
\langle f,Pg\rangle_\mu = \sum_{i,j=1}^n \mu(i)P_{i,j} \overline{f(i)}g(j) = \sum_{i,j=1}^n \mu(j)P_{i,j} \overline{f(i)}g(j) = \langle Pf,g\rangle_\mu\ .
$$
Thus, \eqref{DBCC} implies that $P$ is self adjoint with respect to the inner product $\langle \cdot,\cdot\rangle_\mu$, and a little thought shows that the argument reverses to that in fact the self adjointness of $P$ with respect to  inner product $\langle \cdot,\cdot\rangle_\mu$ characterizes the detailed balance condition. 

This simple observation opens the way to defining the detailed balance condition for QMS semigroups $\cP_t$ with invariant state $\sigma$. Actually, it opens many ways, since, as we have  seen, there are many choices for the inner product associated to $\sigma\in \Dens$. 

There is an extensive discussion of quantum detailed  balance in the the physics literature, and in particular on how it related to reversibility.  For early papers from the 1970's,
see
\cite{ A78, Ar73, CW76,KFGV,SpLe77}. In these papers various different proposals are discussed, and recent work \cite{FR15, FU07,FU10,  MS98, TKRWV} has added to the variety of 
proposals. Indeed, the ``right'' choice of the inner product depends on the physical question being asked, and as emphasized already in the early work of Agarwal \cite{Ar73}, the connection with reversibility may involve other operations. In any case, two choices that have been considered to be particularly relevant in various physical contexts are the GNS and KMS inner products. 

As will be seen here, the following choice is natural for our context:

\begin{definition}\label{DBCD}
The generator $\cL$ of a QMS   satisfies {\em detailed balance condition with respect to $\sigma$} in case $\cL$ is self-adjoint with 
regard the the GNS inner product associated to $\sigma$. 
\end{definition}

\begin{remark} Whenever  $\cL$  satisfies  detailed balance condition with respect to $\sigma$, $\cL^\dagger\sigma =0$ since for any $A$, 
$$
0 = \langle A,\cL\one\rangle_{{\text GNS}} = \langle \cL A,\one\rangle_{{\text GNS}}  = \tr[(\cL A)^* \sigma] = \tr[A^* \cL^\dagger \sigma]\ .$$

\end{remark}

Note that the GNS inner product is the inner product determined by $m_0 = \delta_0$. 
The key fact that makes an explicit structure theorem possible in this case is due to Alicki \cite{Al78}:

\begin{lemma}\label{Al2}  Let $\sigma \in \Dens$ be a non-degenerate density matrix, and let $s\in [0,1]$, $s\neq 1/2$. 
Let $\cK$ be any operator on $M_n(\C)$ that is self-adjoint with respect to $\langle \cdot, \cdot\rangle_{\delta_s}$ and also preserves self-adjointness. 
Then $\cK$ commutes with  the modular operator $\Delta_\sigma$. 
\end{lemma}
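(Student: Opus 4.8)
The plan is to diagonalize everything in a matrix-unit basis adapted to $\sigma$ and reduce the claim to the vanishing of certain matrix elements of $\cK$. Choose an orthonormal eigenbasis $\{u_1,\dots,u_n\}$ of $\C^n$ with $\sigma u_i = \lambda_i u_i$, $\lambda_i>0$, and set $E_{ij} := |u_i\rangle\langle u_j|$, an orthonormal basis of $M_n(\C)$ for the (unnormalized) Hilbert--Schmidt inner product. Then $\Delta_\sigma E_{ij} = (\lambda_i/\lambda_j) E_{ij}$, so $\Delta_\sigma$ is diagonal in this basis, and a short computation gives $\langle E_{ab}, E_{ij}\rangle_{\delta_s} = \delta_{ai}\delta_{bj}\,\lambda_a^{s}\lambda_b^{1-s}$, so the matrix units are $\langle\cdot,\cdot\rangle_{\delta_s}$-orthogonal with $\|E_{ij}\|_{\delta_s}^2 = \lambda_i^s\lambda_j^{1-s}$. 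Writing $\cK E_{ij} = \sum_{p,q}\cK_{pq,ij} E_{pq}$, the statement ``$\cK$ commutes with $\Delta_\sigma$'' is equivalent to $\cK_{pq,ij}=0$ whenever $\lambda_p/\lambda_q \neq \lambda_i/\lambda_j$; this is the target.

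Next I would translate the two hypotheses into identities among the $\cK_{pq,ij}$. Self-adjointness with respect to $\langle\cdot,\cdot\rangle_{\delta_s}$, evaluated on matrix units and using the orthogonality above, reads
$$\overline{\cK_{ij,ab}}\,\lambda_i^s\lambda_j^{1-s} = \cK_{ab,ij}\,\lambda_a^s\lambda_b^{1-s} \qquad \text{for all } i,j,a,b.$$
Preservation of self-adjointness means $\cK(A^*)=\cK(A)^*$; applied to $E_{ij}$ (and using $E_{ij}^*=E_{ji}$) it gives
$$\overline{\cK_{ij,ab}} = \cK_{ji,ba} \qquad \text{for all } i,j,a,b.$$

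Finally I would combine these. From the self-adjointness identity (with the roles of $(i,j)$ and $(a,b)$ interchanged) one gets $\cK_{ij,ab}\,\lambda_i^s\lambda_j^{1-s} = \overline{\cK_{ab,ij}}\,\lambda_a^s\lambda_b^{1-s}$; applying the same identity at the swapped indices $(j,i),(b,a)$ and then eliminating the conjugated coefficient via the $*$-preservation relation yields a second equation $\cK_{ij,ab}\,\lambda_j^s\lambda_i^{1-s} = \overline{\cK_{ab,ij}}\,\lambda_b^s\lambda_a^{1-s}$. If $\cK_{ij,ab}\neq 0$, both $\cK_{ij,ab}$ and $\overline{\cK_{ab,ij}}$ are nonzero, so dividing the two equations cancels them and leaves
$$\left(\frac{\lambda_i}{\lambda_j}\right)^{2s-1} = \left(\frac{\lambda_a}{\lambda_b}\right)^{2s-1}.$$
Here the hypothesis $s\neq 1/2$ is exactly what is needed: since $2s-1\neq 0$ and the $\lambda$'s are positive, $t\mapsto t^{2s-1}$ is injective, forcing $\lambda_i/\lambda_j = \lambda_a/\lambda_b$. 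Thus $\cK_{ij,ab}=0$ unless $\lambda_i/\lambda_j=\lambda_a/\lambda_b$, which is precisely the commutation condition, proving the lemma. The one delicate point --- and the reason the result fails for the KMS inner product --- is the bookkeeping in this last step: one must apply the two hypotheses at both index pairs and their transposes so that dividing the resulting relations isolates the factor $(\lambda_i/\lambda_j)^{2s-1}$; at $s=1/2$ this exponent vanishes and no constraint survives.
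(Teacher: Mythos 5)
Your proof is correct: the $\delta_s$-orthogonality of the matrix units, the two matrix-element identities encoding $\delta_s$-self-adjointness and $*$-preservation (the latter, for a linear map, is indeed equivalent to $\cK(A^*)=\cK(A)^*$ for all $A$), and the division argument all check out, and the hypothesis $s\neq 1/2$ enters exactly where you say it does.

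Your route is genuinely different in presentation from the paper's, though the two proofs share the same underlying mechanism. The paper never diagonalizes anything: by a chain of trace identities that interleaves the two hypotheses it proves the basis-free relation
$$
\langle \cK\,\Delta_\sigma^{2s-1}(A),\,B\rangle_{\delta_s} \;=\; \langle \cK(A),\,B\rangle_{\delta_{1-s}}\ ,
$$
and combines it with the elementary identity $\langle \Delta_\sigma^t(A),B\rangle_{\delta_s}=\langle A,B\rangle_{\delta_{s-t}}$ and the observation $s-(2s-1)=1-s$ to conclude the operator identity $\cK\,\Delta_\sigma^{2s-1}=\Delta_\sigma^{2s-1}\,\cK$; it then invokes functional calculus, applying $f(x)=x^{1/(2s-1)}$, which is precisely where $s\neq 1/2$ is used. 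Your two displayed equations are the matrix-element shadow of that operator identity, and your appeal to the injectivity of $t\mapsto t^{2s-1}$ on $(0,\infty)$ is the coordinate form of the functional-calculus step. What the paper's version buys is independence from the spectral decomposition, which is what one wants for extensions to settings where $\Delta_\sigma$ need not have discrete spectrum (e.g., von Neumann algebraic or infinite-dimensional versions). What your version buys is transparency: it directly exhibits the block structure $\cK_{ij,ab}=0$ unless $\lambda_i/\lambda_j=\lambda_a/\lambda_b$, which is exactly the conclusion the paper extracts only afterwards, in the computation of the characteristic matrix leading to \eqref{block}, and it makes the degeneration at the KMS point $s=1/2$ visible at a glance, since every constraint collapses when the exponent $2s-1$ vanishes.
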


\begin{proof}
For any $A,B\in M_n(\C)$,
\begin{eqnarray*}
\langle  \cK \Delta_\sigma^{2s-1}(A), B\rangle_{\delta_s} 
&=& \tr[ \sigma^s (\cK (\sigma^{2s-1}A \sigma^{1-2s}))^* \sigma^{1-s} B]\nonumber \\
&=& \tr[ \sigma^s (\sigma^{2s-1}A \sigma^{1-2s})^*  \sigma^{1-s}\cK ( B)]\nonumber\\
&=& \tr[  \sigma^{1-s}A^* \sigma^{s}  \cK ( B)]
= \tr[\sigma^{s}  (\cK ( B^*))^*  \sigma^{1-s}A^* ]\nonumber\\
&=& \tr[\sigma^{s}  B  \sigma^{1-s}\cK (A^*) ] = \tr[\sigma^{1-s}( \cK(A))^* \sigma^s B]  = \langle \cK (A), B\rangle_{\delta_{1-s}}\ .
\end{eqnarray*}

By a simple computation, for all $s,t\in [0,1]$,
\begin{align*}
\langle \Delta_\sigma^t (A), B\rangle_{\delta_s} 
= \tr[ \sigma^{s}(\sigma^t A \sigma^{-t})^* \sigma^{1-s}B]  
= \tr[ \sigma^{s-t} A^* \sigma^{1-s+t}B] 
= \langle A,B\rangle_{\delta_{s-t} }
\ .	\end{align*}
Since $s- (2s-1) = 1-s$, $\langle \cK (A), B\rangle_{\delta_{1-s}} = \langle  
\Delta_\sigma^{2s-1}(\cK (A)), B\rangle_{\delta_s}$. 
As $B$ is arbitrary,  $\Delta_\sigma^{2s-1} \cK  = \cK \Delta_\sigma^{2s-1}$. 
Since $\cK $ commutes with $\Delta_\sigma^{2s-1}$, it commutes with 
$ f(\Delta_\sigma^{2s-1})$ for every function $f$. 
\end{proof} 

In particular, if $\cL$ satisfies detailed balance, then
\begin{equation*}
\cL A =  \sigma^{-1}( \cL (\sigma A\sigma^{-1})) \sigma\ .
\end{equation*}

We now consider a unital orthonormal basis for $\fH$ consisting of eigenvectors of $\Delta_\sigma$. 
This is possible since $\Delta_\sigma$ is a positive operator.   The eigenvectors are of the form $|u_i\rangle\langle u_j|$ where 
$$\sigma u_j = \lambda _j u_j$$
and then 
$$
\Delta_\sigma(|u_i\rangle\langle u_j|) = e^{\omega_{(i,j)}}|u_i\rangle\langle u_j|\quad{\rm where}\quad \omega_{i,j} :=  \log \lambda_i - \log \lambda_j\ .
$$
Computing the characteristic matrix $C_{\alpha,\beta}$ for such a basis we find
\begin{eqnarray}
 \sum_{\alpha,\beta} C_{\alpha,\beta} F_\alpha^*A F_\beta &=& \cL A 
=  \sigma^{-1}( \cL (\sigma A\sigma^{-1}))\sigma  =   
\sum_{\alpha,\beta} 
C_{\alpha,\beta} \sigma^{-1}F_\alpha^*\sigma A \sigma^{-1}F_\beta\sigma \nonumber\\
&=&  \sum_{\alpha,\beta} C_{\alpha,\beta} 
e^{\omega_\beta - \omega_\alpha} F_\alpha^*A F_\beta\ .\nonumber
\end{eqnarray}
In other words, 
\begin{equation}\label{block}
e^{\omega_\alpha}C_{\alpha,\beta} = C_{\alpha,\beta}e^{\omega_\beta} \ .
\end{equation}
Since the characteristc matrix is self adjoint, (\ref{block}) implies that
\begin{equation*}
\omega_\alpha \neq \omega_\beta\quad  \Rightarrow \quad C_{\alpha,\beta} = 0\ .
\end{equation*}
Therefore with an ordering of the indices $\alpha$ so that 
$\alpha \geq \beta \iff \omega_\alpha \geq \omega_\beta$, the matrix 
$C_{\alpha,\beta}$ is block-diagonal.   Now one need only recall how the Kraus operator in the representation of $\cL$ are determined by the eigenvectors of the reduced density matrix of $\cL$, and one has Alicki's Thoerem \cite{Al78}:

\begin{theorem}[Structure of Lindblad operators with detailed balance] \label{thm:structure}
Let $\cP_t = e^{t \cL}$ be a quantum Markov semigroup on $B(\sH)$ satisfying detailed balance with respect to $\sigma \in \Dens$.
Then the generator $\cL$ and its adjoint $\cL^\dagger$ have the form
\begin{align}\label{eq:L-general}
	\cL & = \sum_{j \in \cJ}  e^{-\omega_j/2} \cL_j \ , \qquad
	\cL_{j}(A) =  V_j^* [A, V_j] +  [V_j^* , A] V_j \ , \\
\label{eq:L-dagger-general}
	\cL^\dagger & = \sum_{j \in \cJ}  e^{-\omega_j/2} \cL_{j}^\dagger \ , \qquad 
		\cL^\dagger_{j}(\rho) = 
		  [V_j, \rho V_j^*] + [V_j \rho, V_j^* ] \ ,
\end{align}
where $\cJ$ is a finite index set, the operators $V_j \in B(\sH)$ satisfy $\{ V_{j} \}_{j \in \cJ} = \{ V_{j}^{*} \}_{j \in \cJ}$, and $\omega_{j} \in \R$ satisfies
\begin{align}
\label{eq:log-relation} 
\Delta_\sigma V_j & = e^{-\omega_j} V_j \quad \text{for all } j \in \cJ \ .
\end{align}
\end{theorem}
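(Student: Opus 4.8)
The plan is to read off the structure from the two facts already in hand: the general Lindblad decomposition from the preceding structure theorem, and the modular commutation $\cL\Delta_\sigma=\Delta_\sigma\cL$ that detailed balance forces through Lemma~\ref{Al2} (applied with $s=0$). First I would fix a single orthonormal basis $\{F_\alpha\}_{\alpha\in\cJ_n}$ of $\fH$ that is simultaneously a basis of eigenvectors of $\Delta_\sigma$, symmetric ($F_\alpha^*=F_{\alpha'}$), and unital ($F_{(1,1)}=\one$); this is possible because $\Delta_\sigma$ is positive and self-adjoint on $\fH$, has $\one$ as an eigenvector of weight $0$, and satisfies $\Delta_\sigma(A^*)=(\Delta_\sigma^{-1}A)^*$, so its weight-$\omega$ eigenspace is carried to the weight-$(-\omega)$ eigenspace by the adjoint. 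Because any QMS generator preserves self-adjointness, its characteristic matrix $C_\cL$ is Hermitian, and the computation recorded in \eqref{block} then shows $C_\cL$ is block diagonal with respect to the grouping of indices by modular weight, i.e. $[C_\cL]_{\alpha,\beta}=0$ whenever $\omega_\alpha\neq\omega_\beta$.

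With this in place I would extract the Kraus operators. By Lemma~\ref{redLm} the reduced characteristic matrix $R_\cL$ is positive semidefinite, and it inherits the block-diagonal structure. Diagonalizing $R_\cL$ exactly as in the proof of Theorem~\ref{CPchar}, but with the diagonalizing unitary chosen block diagonal, produces Kraus operators $W_\gamma=\sqrt{\lambda_\gamma}\sum_\alpha U_{\gamma,\alpha}F_\alpha$ that are linear combinations of eigenvectors $F_\alpha$ of a single modular weight, hence are themselves eigenvectors of $\Delta_\sigma$; labelling that weight $-\omega_\gamma$ gives $\Delta_\sigma W_\gamma=e^{-\omega_\gamma}W_\gamma$. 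Writing $\Phi(A)=\sum_\gamma W_\gamma^* A W_\gamma$ and rescaling $V_j:=2^{-1/2}e^{\omega_j/4}W_j$ pulls the scalar $e^{-\omega_j/2}$ out front and matches the normalization of $\cL_j(A)=V_j^*[A,V_j]+[V_j^*,A]V_j$, so that the dissipative part of $\cL$ becomes $\sum_j e^{-\omega_j/2}\cL_j$.

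Two points remain, and the second is where I expect the real work. The relation $\{V_j\}=\{V_j^*\}$ should follow from the symmetry $F_\alpha^*=F_{\alpha'}$ of the basis together with the Hermiticity of $C_\cL$: the involution $\alpha\mapsto\alpha'$ interchanges the weight-$\omega$ and weight-$(-\omega)$ blocks, and matching eigenvectors of $R_\cL$ under it pairs each $V_j$ with a $V_{j'}=V_j^*$ of opposite weight. The harder point is the vanishing of the Hamiltonian term, i.e. that the operator $G$ in $\cL(A)=\Phi(A)-G^*A-AG$ can be taken self-adjoint with $G=\tfrac12\Phi(\one)$. This does not follow from the modular commutation alone --- a generator $i[H,\cdot]$ with $[H,\sigma]=0$ commutes with $\Delta_\sigma$, has $R_\cL=0$, and hides entirely in the anti-self-adjoint part of $G$ --- so I would have to invoke the full strength of GNS self-adjointness. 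The key observation is that such a Hamiltonian term is anti-self-adjoint for the GNS inner product, whence self-adjointness of $\cL$ forces it to vanish; this is exactly the step that uses the GNS ($s=0$) rather than the KMS ($s=1/2$) inner product, and it is the main obstacle to a fully careful proof. Once $\cL$ is in the stated form, the expression \eqref{eq:L-dagger-general} for $\cL^\dagger$ follows by taking Hilbert--Schmidt adjoints of each summand, using that the adjoint of $A\mapsto V^*AV$ is $\rho\mapsto V\rho V^*$.
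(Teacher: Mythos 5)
Your route is the paper's own: Lemma~\ref{Al2} forces $[\cL,\Delta_\sigma]=0$, the characteristic matrix in a unital, symmetric eigenbasis of $\Delta_\sigma$ becomes block diagonal by \eqref{block}, the Kraus operators are extracted blockwise as in Theorem~\ref{CPchar}, and your rescaling $V_j=2^{-1/2}e^{\omega_j/4}W_j$ is the correct normalization. The genuine gap sits in the step you were confident about, not only in the one you flagged. Hermiticity of $C_\cL$ is exactly the statement that $\cL$ preserves self-adjointness, and it reads $[C_\cL]_{\alpha,\beta}=\overline{[C_\cL]_{\beta,\alpha}}$; once $C_\cL$ is block diagonal, every nonzero entry has $\omega_\alpha=\omega_\beta$, so this relation never leaves a block --- it says each block is a Hermitian matrix and nothing more. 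It gives no relation whatsoever between the weight-$\omega$ and weight-$(-\omega)$ blocks, so it cannot pair eigenvectors of $R_\cL$ across blocks. Concretely, $\cL_0(A)=V^*AV-\tfrac12\{V^*V,A\}$ with $\Delta_\sigma V=e^{-\omega}V$, $\omega\neq 0$, is a QMS generator that commutes with $\Delta_\sigma$ and preserves hermiticity, yet its weight-$(-\omega)$ block vanishes while its weight-$\omega$ block does not, and its Kraus set $\{V\}$ is not closed under adjoints (of course, $\cL_0$ is not GNS self-adjoint). What you must use is the quantitative content of GNS self-adjointness itself: writing $\tr[\cL(B)^*A\sigma]=\tr[B^*\cL(A)\sigma]$ in the eigenbasis and commuting $\sigma$ through eigenvectors of $\Delta_\sigma$ yields
\begin{equation*}
[C_\cL]_{\gamma,\delta}\;=\;e^{\omega_\delta}\,\overline{[C_\cL]_{\gamma',\delta'}}\ ,
\end{equation*}
i.e.\ the $(-\omega)$-block is $e^{-\omega}$ times the entrywise conjugate of the $\omega$-block under $\alpha\mapsto\alpha'$. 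This identity pairs an eigenvector $u$ of the $\omega$-block with eigenvalue $\lambda$ to the eigenvector $\gamma'\mapsto\overline{u_\gamma}$ of the $(-\omega)$-block with eigenvalue $e^{-\omega}\lambda$, and this exact factor is what produces $W_{j^*}=e^{\omega_j/2}W_j^*$ and hence, after your rescaling, the clean equalities $V_{j^*}=V_j^*$ and $\omega_{j^*}=-\omega_j$.

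The same missing identity is also what your Hamiltonian step needs; as sketched it is circular. From $\cL=\cL_{\mathrm{diss}}+i[H,\cdot\,]$ and GNS self-adjointness of $\cL$, you can only conclude that $i[H,\cdot\,]$ is the GNS-anti-self-adjoint part of $\cL$ if you already know that $\cL_{\mathrm{diss}}$ is GNS self-adjoint --- which is precisely the cross-block pairing you have not yet established. The repair is the displayed identity evaluated on the row of the identity: since $F_{(1,1)}=\one$ is self-adjoint of weight zero, taking $\gamma=(1,1)$ and using block diagonality gives $[C_\cL]_{(1,1),\delta}=\overline{[C_\cL]_{(1,1),\delta'}}$, and substituting this into the formula for $G$ in the proof of Lemma~\ref{redLm} shows directly that $G^*=G$, i.e.\ $H=0$. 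So both of your ``remaining points'' have a single common fix, the characteristic-matrix form of GNS self-adjointness, and neither follows from hermiticity-preservation plus basis symmetry. In fairness, the paper compresses all of this into one sentence (``one need only recall how the Kraus operators are determined by the eigenvectors of the reduced characteristic matrix''), so the identity above is the substance that both your proposal and the paper's sketch require in order to be complete.
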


For $j \in \cJ$, let $j^{*} \in \cJ$ be an index such that $V_{j^{*}} = V_{j}^{*}$. It follows from \eqref{eq:log-relation} that 
\begin{align*}
\omega_{j^{*}} = - \omega_{j} \ .
\end{align*}
Moreover, if we define $H = -\log \sigma$, \eqref{eq:log-relation} is equivalent to the commutator identity 
$
[V_j,H]  = - \omega_{j}V_j
$.
Furthermore, in our finite-dimensional context, the identity 
\begin{align*}
\Delta_\sigma^t V_j  = e^{- \omega_j t} V_j
\end{align*}
is valid for some $t\neq 0$ in $\R$ if and only if it is valid for all $t \in \C$. 

There is also a well developed structure theorem for generators of QMS semigroups that are self-adjoint with respect to the KMS inner product; see \cite{AC20,FU10}.  Unfortunately, 
there does not yet exist such a theorem for the set of QMS generators that are self-adjoint with respect to the BKM inner product, which strictly includes \cite{AC20} the set of those that 
are self-adjoint for the GNS inner product considered here. By Theorem~\ref{necBKM}, this would  be the most natural class of QMS to consider.  However, the metrics to be constructed in the 
remainder of this lecture make use of the structure theorem, Theorem~\ref{thm:structure}, and its analog for the BKM case is presently lacking. 

\subsection{Gradient flow for the relative entropy}

Consider a generator $\cL^\dagger$ written in the form \eqref{eq:L-dagger-general}, i.e., 
\begin{align}\label{Ldet}
	\cL^\dagger & = \sum_{j \in \cJ}  e^{-\omega_j/2} \cL_{j}^\dagger \ , \qquad 
		\cL^\dagger_{j}(\rho) = 
		  [V_j, \rho V_j^*] + [V_j \rho, V_j^* ] \ ,
\end{align}
where $\{V_j\}_{j\in \cJ}$ is a finite set of eigenvectors of $\Delta_\sigma$ such that $\{V_j^*\}_{j\in \cJ} = \{V_j\}_{j\in \cJ}$, and where $\Delta_\sigma V_j = e^{-\omega_j}V_j$ for some $\omega_j \in \R$. 
As before, we use the notation $\partial_j A := [V_j, A]$.

The next lemma will be a chain rule for the logarithm. It gives us a non-commuative analog and extension  of the formula
$$
\nabla \rho(x) = \rho(x) \nabla \log\rho(x)\ ,
$$
which behind the fact that one recovers the usual linear heat equation from the gradient flow of the entropy with respect to the $2$-Wasserstein metric. 
The relevant non-commutative analog of multiplication by $\rho$ is somewhat involved, but is also something we have discussed.
Define
\begin{equation*}
f_0(x) = \int_0^1 x^{s}{\rm d}s = \frac{1- x}{\log x}
\end{equation*}
on $(0,\infty)$. 

Recall the operator introduced by Hiai and Petz:
\begin{equation*}
{\mathbb J}_f(X,Y) := f(R_X L_Y^{+} )L_Y ,
\end{equation*}
for any $Y,X\geq 0$ and any operator monotone function $f$ on $(0,\infty)$.
Taking $f$ to be $f_0$, we find
$$
{\mathbb J}_{f_0}(X,Y)(A) = \int_0^1  Y^{1-s}A X^{s} {\rm d} s\ .
$$ 
Our formulas will involve
\begin{equation}\label{Jfomf0}
{\mathbb J}_{f_0}(e^{\omega_j/2}\rho,e^{-\omega_j/2}\rho)(A) = \int_0^1 \big(e^{\omega_j /2} \rho\big)^{1-s}
   			 A  \big(e^{-\omega_j /2} \rho\big)^{s}\dd s\ .
\end{equation}
Evidently, if $A$ commutes with $\rho$,
$$
{\mathbb J}_{f_0}(e^{\omega_j/2}\rho,e^{-\omega_j/2}\rho)(A) = e^{\omega_j/2}\rho A\ .
$$
That is, ${\mathbb J}_f(e^{\omega_j/2}\rho,e^{-\omega_j/2}\rho)$ is nothing other than a  non-commutative version of multiplication by $e^{\omega_j/2}\rho$.
Its inverse, which we have encountered in the first lecture, is
$$
{\mathbb J}_{f_0}^{-1}(e^{\omega_j/2}\rho,e^{-\omega_j/2}\rho) (A) =
\int_0^\infty
	    \frac{1}{\lambda + e^{\omega/2}\rho} A
\frac{1}{\lambda + e^{-\omega/2} \rho}  \dd \lambda\ .
$$

We are now ready to state our fundamental lemma \cite{CM17} which may be regarded as a non-commutative chain rule for the logarithm.

\begin{lemma}[Chain rule for the logarithm]\label{chain}
For all $\rho \in \Dens_+$ and $j \in \cJ$ we have
	\begin{align*}
		 e^{-\omega_j/2}V_j \rho  - e^{\omega_j/2}\rho V_j
		 &  =   \left[ {\mathbb J}_{f_0}(e^{\omega_j/2}\rho,e^{-\omega_j/2}\rho)\right] \partial_j(\log \rho - \log \sigma) \ .
	\end{align*}
\end{lemma}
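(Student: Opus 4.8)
The plan is to reduce the asserted operator identity to an elementary scalar identity in two positive variables, exploiting that \emph{every} operator appearing here is a function of the two commuting positive operators $L_\rho$ and $R_\rho$ (left and right multiplication by $\rho$ on $\fH$) evaluated on the single element $V_j$. First I would rewrite the left-hand side as $(e^{-\omega_j/2}R_\rho - e^{\omega_j/2}L_\rho)V_j$, since $V_j\rho = R_\rho V_j$ and $\rho V_j = L_\rho V_j$.

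Next I would put the argument $\partial_j(\log\rho - \log\sigma)$ into the same language. For any matrix and any continuous $g$ one has $[V_j, g(\rho)] = (g(R_\rho) - g(L_\rho))V_j$, because $R_{g(\rho)} = g(R_\rho)$ and $L_{g(\rho)} = g(L_\rho)$; taking $g=\log$ gives $\partial_j\log\rho = (\log R_\rho - \log L_\rho)V_j$. The term $\partial_j\log\sigma$ is the only place the modular structure enters: the relation $\Delta_\sigma V_j = e^{-\omega_j}V_j$ is equivalent to $[\log\sigma, V_j] = -\omega_j V_j$, hence $\partial_j\log\sigma = [V_j,\log\sigma] = \omega_j V_j$. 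Therefore $\partial_j(\log\rho-\log\sigma) = (\log R_\rho - \log L_\rho - \omega_j)V_j$.

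It then remains to apply ${\mathbb J}_{f_0}(e^{\omega_j/2}\rho, e^{-\omega_j/2}\rho)$. By \eqref{Jfomf0}, after pulling out scalars this operator equals $e^{\omega_j/2}\int_0^1 e^{-\omega_j s}L_\rho^{1-s}R_\rho^s\,\mathrm{d}s$, again a function of the commuting pair $(L_\rho,R_\rho)$. Thus the whole right-hand side is $\Psi(L_\rho,R_\rho)\,\Theta(L_\rho,R_\rho)V_j$ with $\Psi(u,v) = e^{\omega_j/2}\int_0^1 e^{-\omega_j s}u^{1-s}v^s\,\mathrm{d}s$ and $\Theta(u,v) = \log v - \log u - \omega_j$, while the left-hand side is $\Xi(L_\rho,R_\rho)V_j$ with $\Xi(u,v) = e^{-\omega_j/2}v - e^{\omega_j/2}u$. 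Passing to matrix elements in an eigenbasis of $\rho$ (so that $u=r_k$, $v=r_\ell$ range over the eigenvalues of $\rho$), i.e.\ invoking the joint spectral decomposition of $L_\rho$ and $R_\rho$, it suffices to verify the single scalar identity $\Psi(u,v)\Theta(u,v) = \Xi(u,v)$ for all $u,v>0$.

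That scalar identity is immediate: with $\int_0^1 a^s\,\mathrm{d}s = (a-1)/\log a$ and $a = e^{-\omega_j}v/u$ one gets $\Psi(u,v) = (e^{-\omega_j}v - u)/(\log v - \log u - \omega_j)$, whose denominator is exactly $\Theta(u,v)$; it cancels, leaving $\Psi\Theta = e^{\omega_j/2}(e^{-\omega_j}v - u) = e^{-\omega_j/2}v - e^{\omega_j/2}u = \Xi$. I do not expect any genuine obstacle. The one point needing a word of care is the removable singularity at $\log v - \log u = \omega_j$, where $\Theta$ vanishes; there $\Psi$ stays finite and $\Psi\Theta$ and $\Xi$ both vanish, so the identity persists by continuity. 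The only real bookkeeping risk is keeping straight which of $L_\rho,R_\rho$ carries which power and tracking the $e^{\pm\omega_j/2}$ factors; the whole mechanism is that the divided-difference denominator produced by ${\mathbb J}_{f_0}$ cancels precisely the factor $\Theta$ coming from the logarithmic derivative together with the modular shift $\omega_j$, leaving the clean two-term difference.
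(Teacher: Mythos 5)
Your proof is correct, and it takes a genuinely different route from the paper's. The two arguments share only the opening move: differentiating $\Delta_\sigma^s V_j = e^{-s\omega_j}V_j$ at $s=0$ to obtain $[V_j,\log\sigma]=\omega_j V_j$. From there the paper stays entirely at the operator level: it rewrites $\partial_j(\log\rho - \log\sigma) = V_j\log(e^{-\omega_j/2}\rho) - \log(e^{\omega_j/2}\rho)V_j$, observes that the integrand of \eqref{Jfomf0} applied to this element is exactly the total derivative $\frac{{\rm d}}{{\rm d}s}\bigl[(e^{\omega_j/2}\rho)^{1-s}V_j(e^{-\omega_j/2}\rho)^{s}\bigr]$, and integrates over $[0,1]$, the two boundary terms being precisely the left-hand side of the lemma; no spectral decomposition ever appears. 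You instead recast everything as functions of the commuting pair $(L_\rho,R_\rho)$ acting on $V_j$ and reduce the lemma, via the joint functional calculus, to the scalar identity $\Psi\Theta=\Xi$, which follows from $\int_0^1 a^s\,{\rm d}s = (a-1)/\log a$. Both proofs are complete, and your bookkeeping of the $e^{\pm\omega_j/2}$ factors matches the convention of \eqref{Jfomf0}, which is the convention that makes the cancellation work. The paper's fundamental-theorem-of-calculus trick buys brevity and sidesteps the one delicate point you must (and correctly do) address: on the part of the joint spectrum where $\log v - \log u = \omega_j$, your closed-form expression for $\Psi$ is of the form $0/0$, so $\Psi$ has to be evaluated there directly from the integral and both sides checked to vanish. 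What your route buys is transparency of mechanism: it exhibits ${\mathbb J}_{f_0}(e^{\omega_j/2}\rho,e^{-\omega_j/2}\rho)$ as multiplication, in the $(L_\rho,R_\rho)$-calculus, by a shifted logarithmic mean whose denominator is exactly the factor produced by $\partial_j(\log\rho-\log\sigma)$, which ties the chain rule directly to the ${\mathbb J}_f(X,Y)=f(R_XL_Y^+)L_Y$ formalism of Lecture One and makes clear why $f_0(x)=\int_0^1 x^s\,{\rm d}s$ is the right choice of $f$; the paper's proof implements the same divided-difference cancellation without ever diagonalizing.
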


\begin{proof}
	Since $\Delta_\sigma V_j  = e^{-\omega_j} V_j$ for all $j$,  $\sigma^s V_j\sigma^{-s} = \Delta^s_\sigma V_j  = e^{-s\omega_j} V_j$. Differentitating at $s=0$,
	$$
	[V_j,\log\sigma] = \omega_j V_j\ .
	$$
	\begin{align*}
		\partial_{j}( \log \rho - \log \sigma  ) 
		   &= [V_j,\log \rho] - [V_j,\log \sigma]\\
		   &= V_j \log\rho - \log\rho V_j  -\omega_jV_j    \\
		   &= V_j \log (e^{-\omega_j/2} \rho) - \log (e^{\omega_j/2} \rho) V_j \ .
	\end{align*}
	Therefore, 
	\begin{multline*}
	\int_0^1 \big(e^{\omega_j /2} \rho\big)^{1-s}
   			 [\partial_{j}( \log \rho - \log \sigma  ) ]  \big(e^{-\omega_j /2} \rho\big)^{s}\dd s  =\\
		\int_0^1 \big(e^{\omega_j /2} \rho\big)^{1-s}
   			 [ V_j \log (e^{-\omega_j/2} \rho) - \log (e^{\omega_j/2} \rho) V_j ]  \big(e^{-\omega_j /2} \rho\big)^{s}\dd s\ .	 
	\end{multline*}
	However,
	$$
	\big(e^{\omega_j /2} \rho\big)^{1-s}
   			  [V_j \log (e^{-\omega_j/2} \rho) - \log (e^{\omega_j/2} \rho) V_j]  \big(e^{-\omega_j /2} \rho)^{s}  =
			   \frac{{\rm d}}{{\rm d}s} 
			   (e^{\omega_j /2} \rho)^{1-s} V_j 
			 (e^{\omega_j/2} \rho)^{s}\ .		 
	$$
	The claim now follows from \eqref{Jfomf0} and the integration of the total derivative. 
\end{proof}

For $\rho \in \Dens$ we define the operator $\cK_\rho : {\mathcal V} \to {\mathcal V}$ by
\begin{align}\label{eq:K-rho}
	\cK_\rho A := \sum_{j \in \cJ} \partial_{j}^\dagger \big({\mathbb J}_{f_0}(e^{\omega_j/2}\rho,e^{-\omega_j/2}\rho) ( \partial_{j} A) \big) \ .
\end{align}
Since $\tr(A^* \cK_\rho B) = \overline{\tr(B^* \cK_\rho A)}$ for $A, B \in {\mathcal V}$, it follows that $\cK_\rho$ is a non-negative self-adjoint operator on $\fH$ for each $\rho \in \Dens_+$. 
Recall that ${\mathcal V}$ is the traceless self adjoint matrices in $M_n(\C)$ which we have identified with the tangent and cotangent spaces to $\Dens$. 
Assuming that $\cP_t$ is ergodic, the operator $\cK_\rho : {\mathcal V} \to {\mathcal V}$ is invertible for each $\rho \in \Dens_+$, as we explain below. Since $\cK_\rho$ depends smoothly on $\rho$, it follows that $\cK_\rho$ induces a Riemannian metric on $\Dens_+$ defined by \eqref{eq:metric-K}.

\begin{definition}\label{metricdef} Let $\cL$ be a QMS generator satisfying the detailed balance condition with respect to $\sigma\in \Dens$.  For each $\rho\in \Dens$, let the operator $\cK_\rho$ be defined by \eqref{eq:K-rho}.  Then, provided $\cK_\rho$ is invertible for all $\rho\in \Dens$,  {\em the Riemannian metric on $\Dens$ induced by $\cL$} is given by
\begin{equation}\label{grhodefag}
g_\rho(\dot \rho,\dot\rho) = \langle \dot \rho, \cK_\rho^{-1}\dot \rho\rangle = \langle (\cK_\rho^{-1}\dot \rho), \cK_\rho(\cK_\rho^{-1}\dot \rho)\rangle  \ .
\end{equation}
\end{definition}

The following result \cite{CM17} shows that the Kolmogorov forward equation $\partial_t \rho = \cL^\dagger \rho$ can be formulated as the gradient flow equation for $\Ent_{\sigma}$. A different construction of such a metric was given at the same time by Mielke and Mittenzweig \cite{MM17}

\begin{proposition}\label{prop:Lindblad-grad-flow}
For $\rho \in \Dens_+$ we have the identity
	\begin{align*}
		 \cL^\dagger \rho = - \cK_\rho \rmD D(\rho ||\sigma)\ ,
	\end{align*}
	hence the gradient flow equation of $\Ent_\sigma$ with respect to the Riemannian metric induced by $\cK_\rho$ is given by the solution of the  equation $\partial_t \rho = \cL^\dagger \rho$. 
\end{proposition}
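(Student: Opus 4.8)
The plan is to evaluate the operator $\cK_\rho$ on $\rmD D(\rho \,||\, \sigma)$ and show directly that the result equals $-\cL^\dagger \rho$. Once this identity is in hand, the gradient-flow assertion follows at once from the formalism established above, namely $\nabla_g \cF(\rho) = \cK_\rho \rmD \cF(\rho)$, applied to $\cF = \Ent_\sigma = D(\cdot \,||\, \sigma)$: the flow $\partial_t \rho = -\nabla_g \cF(\rho)$ becomes $\partial_t \rho = -\cK_\rho \rmD D(\rho \,||\, \sigma) = \cL^\dagger \rho$. So the whole proof reduces to the single operator identity.

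First I would recall that $\rmD D(\rho \,||\, \sigma) = \log \rho - \log \sigma$ and substitute $A = \log\rho - \log\sigma$ into the definition \eqref{eq:K-rho} of $\cK_\rho$. A point worth flagging is that $\log\rho - \log\sigma$ need not be traceless, but this is harmless: $\cK_\rho$ sees $A$ only through the commutators $\partial_j A = [V_j, A]$, which are unchanged by adding any multiple of $\one$. The decisive step is that, for each $j \in \cJ$, the inner factor ${\mathbb J}_{f_0}(e^{\omega_j/2}\rho, e^{-\omega_j/2}\rho)\big(\partial_j(\log\rho - \log\sigma)\big)$ is \emph{exactly} the right-hand side of the chain rule, Lemma~\ref{chain}, and therefore equals $e^{-\omega_j/2} V_j \rho - e^{\omega_j/2}\rho V_j$. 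This is the reason the particular non-commutative multiplication operator ${\mathbb J}_{f_0}$ was built into $\cK_\rho$ in the first place. Next I would compute the adjoint $\partial_j^\dagger$: writing $\partial_j = L_{V_j} - R_{V_j}$ and using $L_X^\dagger = L_{X^*}$, $R_X^\dagger = R_{X^*}$ together with cyclicity of the trace, a one-line computation gives $\partial_j^\dagger B = [V_j^*, B]$. Applying this yields
$$
\cK_\rho \rmD D(\rho \,||\, \sigma) = \sum_{j \in \cJ} \big[V_j^*,\ e^{-\omega_j/2} V_j \rho - e^{\omega_j/2}\rho V_j\big] = \sum_{j \in \cJ}\Big( e^{-\omega_j/2}\big(V_j^* V_j \rho - V_j \rho V_j^*\big) - e^{\omega_j/2}\big(V_j^* \rho V_j - \rho V_j V_j^*\big)\Big).
$$

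The one step requiring genuine care, and the main obstacle, is matching this against $-\cL^\dagger \rho = -\sum_j e^{-\omega_j/2}\big(2 V_j \rho V_j^* - V_j^* V_j \rho - \rho V_j^* V_j\big)$, since the two expressions do \emph{not} agree summand by summand. They agree only after invoking the symmetry of the index set from Theorem~\ref{thm:structure}: reindexing the $e^{\omega_j/2}$-sum by $j \mapsto j^*$, where $V_{j^*} = V_j^*$ and $\omega_{j^*} = -\omega_j$, converts $\sum_j e^{\omega_j/2}\big(V_j^* \rho V_j - \rho V_j V_j^*\big)$ into $\sum_j e^{-\omega_j/2}\big(V_j \rho V_j^* - \rho V_j^* V_j\big)$. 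Substituting this back, the cross terms collapse to
$$
\cK_\rho \rmD D(\rho \,||\, \sigma) = -\sum_{j \in \cJ} e^{-\omega_j/2}\big(2 V_j \rho V_j^* - V_j^* V_j \rho - \rho V_j^* V_j\big) = -\sum_{j \in \cJ} e^{-\omega_j/2}\cL_j^\dagger(\rho) = -\cL^\dagger \rho,
$$
using \eqref{Ldet} in the last two equalities. This proves the identity $\cL^\dagger \rho = -\cK_\rho \rmD D(\rho \,||\, \sigma)$, and hence the gradient-flow formulation.
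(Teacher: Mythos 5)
Your proof is correct and follows essentially the same route as the paper's: apply the chain rule of Lemma~\ref{chain} to $\partial_j(\log\rho-\log\sigma)$, compute $\partial_j^\dagger B=[V_j^*,B]$, and invoke the symmetry $V_{j^*}=V_j^*$, $\omega_{j^*}=-\omega_j$ of the index set to match the result with $-\cL^\dagger\rho$. The only (cosmetic) difference is bookkeeping --- the paper symmetrizes the whole sum over $j$ and $j^*$ with a factor $\tfrac12$, while you reindex just the $e^{\omega_j/2}$-terms --- and your explicit remark that the non-tracelessness of $\log\rho-\log\sigma$ is harmless because $\cK_\rho$ only sees commutators is a correct point the paper leaves implicit.
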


\begin{proof}
	Using the  chain rule from Lemma~\ref{chain}, and the fact that $\{V_j\} = \{V_j^*\}$ and $\omega_{j^*} = -\omega_j$, we obtain 
	\begin{align*}
		\cK_\rho \rmD D(\rho||\sigma)
		 & =  \sum_{j \in \cJ} 
 	\partial_{j}^\dagger \big( {\mathbb J}_{f_0}(e^{\omega_j/2}\rho,e^{-\omega_j/2}\rho) \partial_{j} (\log \rho - \log \sigma) \big)  
		\\& =   \sum_{j \in \cJ} 
 	\partial_{j}^\dagger \big(  e^{-\omega_j/2}V_j \rho  - e^{\omega_j/2}\rho V_j \big)   
		\\&  = \frac12 \sum_{j \in \cJ} \Big(
 	\partial_{j}^\dagger \big(  e^{-\omega_j/2}V_j \rho  - e^{\omega_j/2}\rho V_j \big)   
 	    + 
 	\partial_{j} \big(  e^{\omega_{j}/2}V_j^* \rho  - e^{-\omega_j/2}\rho V_j^* \big)   \Big)
		 \\& = - \frac12\sum_{j \in \cJ}
		 e^{-\omega_j/2} \Big( [V_j, \rho V_j^*] +  [V_j \rho, V_j^* ] \Big) 
		+ e^{\omega_j/2} \Big(  [V_j^*, \rho V_j] +  [V_j^* \rho, V_j ]\Big) 
		 \\& = - \sum_{j \in \cJ}
		 e^{-\omega_j/2} \Big( [V_j, \rho V_j^*] +  [V_j \rho, V_j^* ] \Big) 
		  = - \cL^\dagger \rho \ ,
	\end{align*}
which is the desired identity.
\end{proof}

The gradient flow structure given in Proposition \ref{prop:Lindblad-grad-flow} is a non-commutative analogue of the Kantorovich gradient flow structure established by Jordan, Kinderlehrer and Otto \cite{JKO} for the \emph{Kolmogorov forward equation}
\begin{align*}
\frac{\partial}{\partial t}\rho(x,t) = \Delta \rho(x,t) - \nabla \cdot ( \rho(x,t) \nabla \log \sigma(x) ) \ .
\end{align*} 
Their gradient flow structure is formally given in terms of the operator $K_\rho$ defined by 
\begin{align*}
 K_{\rho} \psi  = - \nabla\cdot(\rho \nabla \psi) \ ,
\end{align*}
for probability densities $\rho$ on $\R^n$ and suitable functions $\psi : \R^n \to \R$ in analogy with \eqref{eq:K-rho}.
As the differential of the relative entropy $\Ent_\sigma(\rho) = \int_{\R^n} \rho(x)\log \frac{\rho(x)}{\sigma(x)} \dd x$ is given by $\rmD \Ent_\sigma(\rho) = 1 + \log \frac{\rho}{\sigma}$, we have
\begin{align*}
   K_\rho \rmD \Ent_\sigma(\rho) = - \Delta \rho + \nabla\cdot (\rho \nabla \log \sigma)\ ,
\end{align*}
which is the commutative counterpart of Proposition \ref{prop:Lindblad-grad-flow}. 

\subsection{Non commutative differential structure}

To bring out the analogy made at the end of he previous section, we introduce a non-commmutative gradient and divergence associated to $\cL$, as specified by $\sigma$ and 
$\{V_1,\dots,V_{|\cJ|}\}$.

Define the Hilbert space $\fH_{\cJ}$ by
$$\fH_{\cJ} = \bigoplus_{j\in \cJ} \fH_j \ ,$$
where each $\fH_j$ is a copy of $\fH$. 
For ${\bf A}\in \fH_{\cJ}$ and $j \in \cJ$, let $A_j$ denote the component of ${\bf A}$ in $\fH_j$. Thus, picking some linear ordering of $\cJ$, we can write
$${\bf A} = ( A_1, \dots, A_{|\cJ|})\ .$$
We equip $\fH_{\cJ}$ with the usual inner product
$
\langle {\bf A}, {\bf B}\rangle_{\fH_{\cJ}} = \sum_{j\in \cJ}\langle A_j,B_j\rangle_{\fH_j}$.

Define an operator
$\nabla : \fH \to \fH_{\cJ}$ by
$$\nabla A = ( \partial_1 A, \dots, \partial_{|\cJ|}A)\ ,$$
where, as before, $\partial_j A = [V_j,A]$. 
Thinking of elements of $\fH$ as non-commutative analogs of functions on a manifold, we may think of
${\bf A} = ( A_1, \dots, A_{|\cJ|})$ as a vector field. 
We define the operator $\dive : \fH_{\cJ}\to \fH$ by
$$\dive {\bf A} 
  = - \sum_{j\in \cJ}  \partial_j^\dagger A_j
  = \sum_{j\in \cJ}   [A_j,V_j^*] \ .$$
Note that $\dive$ is minus the adjoint of the map $\nabla: \fH \to  \fH_{\cJ}$, so that 
\begin{equation}\label{cL0def}
\cL_0  := \dive \circ \nabla
\end{equation} is negative semi-definite.
 We call $\nabla$ the {\em non-commutative gradient} associated to $\cL$, and $\dive$ the {\em non-commutative divergence} associated to $\cL$.

Now for a vector ${\bf A} =  ( A_1, \dots, A_{|\cJ|})$, we define
\begin{equation}\label{Mudef}
{\mathbb M}_\rho {\bf A} = ({\mathbb J}_{f_0}(e^{\omega_1/2}\rho,e^{-\omega_1/2}\rho)(A_1),\dots, 
{\mathbb J}_{f_0}(e^{\omega_{|\cJ|}/2}\rho,e^{-\omega_{|\cJ|}/2}\rho)(A_{|\cJ|}))\ .
\end{equation}

Now the operator $\cK_\rho$ can be writen as
\begin{equation}\label{Mudef2}
\cK_\rho(A) = -\dive({\mathbb M}_\rho \nabla A) \ .
\end{equation}
The following was proved in \cite{CM17}.

\begin{theorem}\label{erg} 
Let $\cP_t = e^{t\cL}$ be QMS  that satisfies the detailed balance condition with respect to  $\sigma\in \Dens$.  
Let $\cL$ be given in the form \eqref{Ldet}. 
Then the commutant of $\{V_j\}_{j\in \cJ}$ equals the null space of $\cL$.
In particular, $\cP_t$ is ergodic if and only if the commutant of $\{V_j\}_{j\in \cJ}$ is spanned by the identity.   
\end{theorem}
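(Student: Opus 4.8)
The plan is to establish the two inclusions $\{V_j\}_{j\in\cJ}' \subseteq \ker\cL$ and $\ker\cL \subseteq \{V_j\}_{j\in\cJ}'$, and then deduce the ergodicity statement by a dimension count. Two facts, both consequences of detailed balance, will be used throughout: $\cL$ has the purely dissipative form \eqref{eq:L-general} with no Hamiltonian term, and $\cL^\dagger\sigma = 0$, so that for every $X\in M_n(\C)$,
\begin{equation*}
\tr[\cL(X)\sigma] = \tr[\sigma\,\cL(X)] = \tr[(\cL^\dagger\sigma)^*X] = 0\ .
\end{equation*}
Recall also that $\sigma\in\Dens$ is invertible, hence faithful.

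The inclusion $\{V_j\}' \subseteq \ker\cL$ is immediate. Since $\{V_j^*\}_{j\in\cJ} = \{V_j\}_{j\in\cJ}$ as sets, any $A$ commuting with every $V_j$ commutes with every $V_j^*$ as well; thus $[A,V_j] = [V_j^*,A] = 0$, each $\cL_j(A) = V_j^*[A,V_j] + [V_j^*,A]V_j$ vanishes, and so $\cL A = 0$. The reverse inclusion I would obtain from the positivity of the carré du champ. A direct expansion (carried out termwise in $j$) gives the identity
\begin{equation*}
\Gamma(A,A) := \cL(A^*A) - A^*\cL(A) - \cL(A)^*A = 2\sum_{j\in\cJ} e^{-\omega_j/2}\,(\partial_j A)^*(\partial_j A)\ ,
\end{equation*}
where $\partial_j A = [V_j,A]$; since the weights $e^{-\omega_j/2}$ are positive and each $(\partial_j A)^*(\partial_j A)\ge 0$, we have $\Gamma(A,A)\ge 0$.

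Now suppose $\cL A = 0$. Then $A^*\cL(A) = \cL(A)^*A = 0$, while $\tr[\cL(A^*A)\sigma] = 0$ by the invariance recorded above, so taking the trace of $\Gamma(A,A)$ against $\sigma$ gives
\begin{equation*}
0 = \tr[\Gamma(A,A)\sigma] = 2\sum_{j\in\cJ} e^{-\omega_j/2}\,\tr\big[\sigma^{1/2}(\partial_j A)^*(\partial_j A)\sigma^{1/2}\big]\ .
\end{equation*}
Every summand is non-negative, hence each $\tr[\sigma^{1/2}(\partial_j A)^*(\partial_j A)\sigma^{1/2}] = 0$; as $\sigma$ is faithful this forces $\partial_j A = 0$ for all $j$, i.e. $A\in\{V_j\}'$. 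This proves $\ker\cL = \{V_j\}'$. For the ergodicity claim, I would note that $\one\in\{V_j\}' = \ker\cL$ always, and that an operator and its Hilbert--Schmidt adjoint have kernels of equal dimension, so $\dim\ker\cL^\dagger = \dim\ker\cL$. By definition $\cP_t$ is ergodic precisely when the eigenvalue $1$ of $\cP_t^\dagger = e^{t\cL^\dagger}$ is simple, i.e. when $\dim\ker\cL^\dagger = 1$; this holds iff $\dim\ker\cL = 1$, iff $\ker\cL = \C\one$, iff $\{V_j\}'$ is spanned by the identity.

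The one genuinely substantive point is the reverse inclusion, and the key realization is that it sidesteps any direct analysis of the (twisted and unwieldy) GNS Dirichlet form: purely algebraic positivity of the carré du champ, combined with the single analytic input $\cL^\dagger\sigma = 0$ and the faithfulness of $\sigma$, already does the job. The only steps needing care are the termwise verification $\cL_j(A^*A) - A^*\cL_j(A) - \cL_j(A)^*A = 2(\partial_j A)^*(\partial_j A)$ and the elementary implication that $\tr[M\sigma] = 0$ with $M\ge 0$ and $\sigma > 0$ forces $M = 0$.
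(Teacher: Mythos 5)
Your proof is correct; the forward inclusion is the same as the paper's, but your reverse inclusion takes a genuinely different route. The paper argues through the KMS-weighted Dirichlet form: using the modular relation \eqref{eq:log-relation} (equivalently $\sigma^{1/2}V_j\sigma^{-1/2}=e^{-\omega_j/2}V_j$) to absorb the weights, it establishes the identity
$$
-\tr[A^*\sigma^{1/2}(\cL A)\sigma^{1/2}]\ =\ \sum_{j\in\cJ}\tr[(\partial_jA)^*\sigma^{1/2}(\partial_jA)\sigma^{1/2}]\ ,
$$
so that $\cL A=0$ forces every $\partial_jA=0$ by positivity and invertibility of $\sigma$, with no appeal to invariance of $\sigma$. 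Your carr\'e du champ identity does check out (termwise, $\cL_j(A^*A)-A^*\cL_j(A)-\cL_j(A)^*A=2[A,V_j]^*[A,V_j]=2(\partial_jA)^*(\partial_jA)$), and it is purely algebraic: it never uses that the $V_j$ are eigenvectors of $\Delta_\sigma$. In your argument the detailed balance hypothesis enters only through its other consequence, the invariance $\cL^\dagger\sigma=0$, which kills the $\tr[\cL(A^*A)\sigma]$ term. The two proofs therefore isolate different parts of the hypothesis: the paper leans on the modular eigenvector relation \eqref{eq:log-relation}, you on the existence of a faithful invariant state. What your route buys is slightly more generality, since it shows that for any generator $\sum_j c_j\cL_j$ of the form \eqref{eq:L-general} with positive weights $c_j$ and a faithful invariant state, the null space equals the commutant of the jump operators, whether or not those operators diagonalize $\Delta_\sigma$; what the paper's route buys is brevity, since once its weighted identity is verified nothing further is needed. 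Your rank--nullity justification that $\dim\ker\cL^\dagger=\dim\ker\cL$, so that ergodicity of $\cP_t^\dagger$ is equivalent to $\{V_j\}_{j\in\cJ}'=\C\one$, is also correct and makes explicit a step the paper treats as immediate.
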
 

\begin{proof} 
Suppose that $A$ belongs to the commutant of $\{V_j\}_{j\in \cJ}$. 
By definition, this means that $\partial_j A = 0$ for all $j \in \cJ$, and therefore $A \in {\rm Null}(\cL)$ by \eqref{Ldet}.

Conversely, if $\cL A = 0$, then by \eqref{Ldet} and a simple computation,
$$
0   = - \tr[ A^* \sigma^{1/2}\cL A \sigma^{1/2}]
	= \sum_{j\in \cJ}  
		\tr[ (\partial_{j} A)^*\sigma^{1/2}\partial_{j} A \sigma^{1/2}] \ , 
$$
which is the case if and only if $[V_j,A] = 0$ for all $j$. This means that $A$ belongs to the commutant of  $\{V_j\}_{j\in \cJ}$. 
\end{proof}

\begin{definition}[Gradient subspace]\label{gradsubdef}
Define ${\mathfrak G}_{\cJ}$ to be the subspace of $\fH_{\cJ}$ consisting of gradients:
\begin{equation*}
{\mathfrak G}_{\cJ} := \{ \nabla X \ : \ X\in M_n(\C)\ \}\ .
\end{equation*}
\end{definition}

Theorem~\ref{erg} shows that  if the commutant of $\{V_j\}_{j\in \cJ}$ is spanned by the identity and ${\mathcal V}$ denotes, as before, the space of self adjoint traceless elements of $M_n(\C)$,
then $\nabla$ is invertible from  ${\mathcal V}$  to ${\mathfrak G}_{\cJ}$.  In fact, it is easy to write down the explicit inverse. Under our hypotheses, $\cL_0$ is invertible as a map from 
${\mathcal V}$ to ${\mathcal V}$.  Let $\cL_0^+$ denote the generalized inverse of $\cL_0$ on $M_n(\C)$. Then
${\displaystyle
\left(\nabla\big|_{{\mathcal V}}\right)^{-1} = \cL_0^+\cdot\dive}$
This map will prove useful in the next lecture.

Another obvious consequence of Theorem~\ref{erg}  is that $\cK_\rho$ is positive definite on ${\mathcal V}$, the space of self adjoint traceleless elements of $M_n(\C)$. 
 Thus, $\cK_\rho$ defines a bona-fide Riemannian metric on $\Dens$, and it is a close analog of the classical version 
 $K_{\rho} \psi  = - \nabla\cdot(\rho \nabla \psi)$.
 
The following theorem summarizes a number of the results presented in this lecture, and provides the basis for the applications to be made in the next lecture.

\begin{theorem}\label{ergX2} 
Let $\cP_t = e^{t\cL}$ be QMS  that satisfies the detailed balance condition with respect to for $\sigma\in \Dens$.  
Let $\cL$ be given in the form \eqref{Ldet}.  Suppose that
 the commutant of $\{V_j\}_{j\in \cJ}$ is spanned by the identity.  Define a metric $g_\rho$ on $\Dens$ by \eqref{grhodefag}.
 Then the solution of the equation ${\displaystyle \partial_t \rho = \cL^\dagger \rho}$ is given by the gradient flow for the relative entropy functional $D(\rho||\sigma)$ with respect to the Riemannian metric specified in Definition~\ref{metricdef}.
\end{theorem}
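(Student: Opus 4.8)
The plan is to recognize that this statement is essentially a repackaging of Proposition~\ref{prop:Lindblad-grad-flow} together with the ergodicity criterion of Theorem~\ref{erg}, fitted into the abstract gradient-flow formalism set up in the subsection on gradient flow on $\Dens$. First I would recall that, for a Riemannian metric encoded by $\cK_\rho$ via \eqref{eq:metric-K}, the gradient of a functional $\cF$ is $\nabla_g \cF(\rho) = \cK_\rho \rmD\cF(\rho)$, so that the gradient flow equation reads $\partial_t\rho = -\cK_\rho\rmD\cF(\rho)$. Taking $\cF(\rho) = D(\rho\,||\,\sigma)$, the whole theorem reduces to the single identity $\cL^\dagger\rho = -\cK_\rho\rmD D(\rho\,||\,\sigma)$, which is exactly the content of Proposition~\ref{prop:Lindblad-grad-flow}.

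Before that identity can be invoked, however, I must confirm that \eqref{grhodefag} genuinely defines a Riemannian metric, i.e. that $\cK_\rho$ is invertible, indeed positive definite, on $\mathcal V$ for every $\rho\in\Dens$. This is where the hypothesis that the commutant of $\{V_j\}_{j\in\cJ}$ is spanned by the identity enters. By Theorem~\ref{erg} this commutant equals ${\rm Null}(\cL)$, so the hypothesis forces the only elements annihilated by all $\partial_j = [V_j,\cdot]$ to be scalar multiples of $\one$; hence no nonzero $A\in\mathcal V$ lies in the joint kernel of the $\partial_j$. Since the definition \eqref{eq:K-rho} gives $\langle A,\cK_\rho A\rangle = \sum_{j\in\cJ} \langle \partial_j A, {\mathbb J}_{f_0}(e^{\omega_j/2}\rho,e^{-\omega_j/2}\rho)\,\partial_j A\rangle$ and each ${\mathbb J}_{f_0}(e^{\omega_j/2}\rho,e^{-\omega_j/2}\rho)$ is a positive operator, this quadratic form is strictly positive on $\mathcal V\setminus\{0\}$. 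Thus $\cK_\rho|_{\mathcal V}$ is invertible and $g_\rho$ is well defined and, since $\cK_\rho$ depends smoothly on $\rho$, smooth in $\rho$.

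With the metric in hand, I would substitute $\cF = D(\cdot\,||\,\sigma)$ into the gradient-flow equation and apply Proposition~\ref{prop:Lindblad-grad-flow} to replace $-\cK_\rho\rmD D(\rho\,||\,\sigma)$ by $\cL^\dagger\rho$, yielding $\partial_t\rho = \cL^\dagger\rho$. One point that deserves a sentence of care is that $\rmD D(\rho\,||\,\sigma) = \log\rho - \log\sigma$ need not be traceless, whereas the cotangent space was identified with $\mathcal V$. However, as the computation in Lemma~\ref{chain} and Proposition~\ref{prop:Lindblad-grad-flow} shows, $\cK_\rho$ sees its argument only through the commutators $\partial_j = [V_j,\cdot]$, so adding any multiple of $\one$ is invisible and the identity is unaffected by this ambiguity.

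The main obstacle here is organizational rather than technical: essentially all of the analytic content has already been discharged in Lemma~\ref{chain}, Proposition~\ref{prop:Lindblad-grad-flow}, and Theorem~\ref{erg}. The only genuine verification that remains is the strict positivity of $\cK_\rho$ on $\mathcal V$ under the ergodicity hypothesis, together with checking that the sign and duality conventions of the abstract gradient-flow formalism line up with the explicit operator $\cK_\rho$ of \eqref{eq:K-rho}; both are routine once the pieces are assembled.
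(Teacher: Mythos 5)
Your proposal is correct and follows essentially the same route as the paper: the paper presents Theorem~\ref{ergX2} as a summary whose proof is precisely the assembly you describe, namely the gradient-flow identity of Proposition~\ref{prop:Lindblad-grad-flow} combined with the positive-definiteness of $\cK_\rho$ on $\mathcal{V}$, which the paper also deduces from the ergodicity criterion of Theorem~\ref{erg}. Your extra remark that $\rmD D(\rho\,||\,\sigma)=\log\rho-\log\sigma$ is only defined up to multiples of $\one$, which are annihilated by the commutators $\partial_j$, is a point the paper leaves implicit, and it is handled correctly.
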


In the next lecture, we explain why this metric is naturally viewed as a quantum, or non-commutative, mass transport metric.  
The subject of quantum mass transport metrics has been actively developed in recent years by many researchers with many 
different applications in mind.  The works \cite{CGT18,CGTT20,CLZ,DR20,GR21,GR22,GJL20,MM17,MW18} all 
 investigate quantum mass transportation from a dynamical perspective involving quantum Markov equations, and contain much of interest, although the present discusison sticks close to the approach of \cite{CM14,CM17,CM20}.  There are also a number of works dealing with quantum mass transport from rather different perspectives. Among these are \cite{CGP21,DMTS21,DR22,DT21,GP18,RD19}. It would be interesting to clarify the connections between these various works.

%
%
%
\section{ Applications of the Gradient Flow Structure}


\abstract{\footnotesize There are many motivations for writing Quantum Markov Semigroups in terms of gradient flow. One is that this provides a powerful means of proving certain useful functional inequalities, a point of view pioneered in the classical setting by Felix Otto.  This approach is very geometric and curvature bounds for the metrics constructed in the second lecture play a fundamental role. In the third lecture we shall discuss a number of applications and also a number of open problems.}

\subsection{Geodesic convexity and relaxation to equilibrium}

In this section we develop the advantages of having written the evolution equations as gradient flow for the relative entropy. 
We draw on work of Otto and Westdickenberg \cite{OW05} and Daneri and Savar\'e \cite{DS08}. 

Let $(\cM,g)$ be any smooth, finite-dimensional Riemannian manifold.  
For $x,y$ in $\cM$, the Riemannian distance $d_g(x,y)$ between $x$ and $y$ is given by minimizing an action integral of paths $\gamma:[0,1]\to \cM$ running from $x$ to $y$:
\begin{equation*}
d_g^2(x,y) = \inf\left\{ \int_0^1 \|\dot{\gamma}(s)\|^2_{g(\gamma(s))} 
\dd s\ :\ \gamma(0) =x,\ \gamma(1) = y\right\} \ ,
\end{equation*}
where
\begin{equation*}
\|\dot \gamma(s)\|^2_{g(\gamma(s))} =  g_{\gamma(s)}(\dot \gamma(s), \dot \gamma(s)) \ .
\end{equation*}
(If the infimum is achieved, any minimizer $\gamma$ will be a geodesic.)  
If $F$ is a smooth function on $\cM$,
let ${\rm grad}_g F$ denote its Riemannian gradient. Consider the semigroup $S_t$ 
of transformations on $\cM$
given by solving $\dot \gamma(t) = -{\rm grad}_g F(\gamma(t))$; we assume  that nice global solutions exist, which will be the case in our application. The semigroup $S_t$, $t\geq 0$, is  gradient flow for $F$. 

For $\lambda \in \R$, the function $F$ is $\lambda$-convex  in case whenever $\gamma:[0,1]\to \cM$ is a distance minimizing geodesic, then for all $s\in (0,1)$,
\begin{equation*}
\frac{{\rm d}^2}{{\rm d}s^2} F(\gamma(s)) \geq \lambda g(\dot \gamma(s), \dot \gamma(s))\ .
\end{equation*}

It is a standard result that whenever $F$ is $\lambda$-convex, the gradient flow for $F$ is $\lambda$-contracting in the sense that for all $x,y\in \cM$ and $t > 0$,
\begin{equation*}
\frac{{\rm d}}{{\rm d}t} d_g^2(S_t(x),S_t(y))  \leq -2\lambda d_g^2(S_t(x),S_t(y)) \ .
\end{equation*}

We apply these ideas  here using an approach to geodesic convexity developed  by Otto and Westdickenberg \cite{OW05}. 
Let $\{\gamma(s)\}_{s\in [0,1]}$ be any smooth path in $\cM$ with $\gamma(0)= x$ and $\gamma(1) = y$.  
They use the gradient flow transformation $S_t$ to define a one-parameter family of paths $\gamma^t:[0,1]\to \cM$, $t\geq 0$ defined by
\begin{equation*}
\gamma^t(s) = S_t\gamma(s)\ .
\end{equation*}
Since $\gamma^t$ is admissible for the variational problem that defines $d_g(S_t(x),S_t(y))$, it is immediate that for each $t\geq 0$, 
\begin{equation}\label{act5}
d_g^2(S_t(x),S_t(y)) \leq \int_0^1 \left\| \frac{{\rm d}}{{\rm d}s}\gamma^t(s)\right\|^2_{g(\gamma^t(s))} \dd s \ .
\end{equation}
In the present smooth setting it is shown in \cite[(2.8) -- (2.11)]{DS08}  that if for all smooth curves
$\gamma:[0,1]\to\cM$, 
\begin{equation}\label{act6}
\frac{{\rm d}}{{\rm d}t}\bigg|_{0+}  \left( \left\| \frac{{\rm d}}{{\rm d}s}\gamma^t(s)\right\|^2_{g(\gamma^t(s))}\right)
\leq -2\lambda 
 \left\| \frac{{\rm d}}{{\rm d}s}\gamma^0(s)\right\|^2_{g(\gamma^0(s))}\ ,
\end{equation}
for all $s\in (0,1)$, then $F$ is geodesically $\lambda$-convex.

To see the connection between (\ref{act6}) and the contraction property, suppose that $x$ and $y$ are connected by a minimal geodesic $\gamma$ so that 
$$d_g^2(x,y) = \int_0^1 \left\| \frac{{\rm d}}{{\rm d}s}\gamma(s)\right\|^2_{g(\gamma(s))} \dd s\ .$$
(If $x$ and $y$ are sufficiently close, this is the case.)
Then (\ref{act5}) and (\ref{act6}) combine to yield 
\begin{equation*}
\frac{{\rm d}}{{\rm d}t}\bigg|_{0+}   d_g^2(S_t(x), S_t(y))  \leq -2\lambda 
d_g^2(x,y)\ ,
\end{equation*}
and then, provided that $S_t(x)$ and $S_t(y)$ continue to be connected by a minimal geodesic for all $t$, the semigroup property of $S_t$ yields the exponential $\lambda$-contractivity of the flow:
\begin{equation}\label{act8}
d_g(S_t(x),S_t(y))   \leq e^{-\lambda t}d_g(x,y)\ .
\end{equation}
The local argument in \cite{DS08} proves the geodesic $\lambda$-convexity of $F$ when (\ref{act6}) is valid for all smooth paths in $\cM$, and thus leads to (\ref{act8}) without any assumptions of geodesic completeness.

When  $F$ is  $\lambda$-convex  for some $\lambda>0$, so that  (\ref{act6}) and hence also (\ref{act8}) are satisfied, then, $F$ has at most one fixed point $x_0$ in $\cM$, which is necessarily a strict minimizer of $F$ on $\cM$.  
Subtract a constant to obtain  $F(x_0) = 0$. Then under the geodesic $\lambda$-convexity of $F$, one has
\begin{equation}\label{act9}
\frac{{\rm d}}{{\rm d}t} F(S_t(x)) \leq - 2\lambda F(S_t(x))\ ,
\end{equation}
and this $F(S_t(x))$ decreases exponentially fast to its minimum value.

There is useful direct route from \eqref{act6} to \eqref{act9}. 
If $\gamma(t)$ is given by the gradient flow of $F$ through $\gamma(t) = S_t(x)$, then
\begin{equation}\label{act10}
\frac{{\rm d}}{{\rm d}t} F(\gamma(t)) = -\|{\rm grad}_g F(\gamma(t))\|^2_{g(\gamma(t))} \ .
\end{equation}
Define the {\em energy function} $E$ associated to $F$ by
\begin{equation}\label{act10b}
E(x) = \|{\rm grad}_g F(x)\|^2_{g(x)}\ .
\end{equation}
Then (\ref{act6}) applied with $\gamma^t(s) = S_{s+t}(x)$ 
together with the semigroup property yields
\begin{equation*}
\frac{{\rm d}}{{\rm d}t} E(S_t(x)) \leq -2\lambda E(S_t(x))\ .
\end{equation*}
Hence (\ref{act6}) leads directly to 
\begin{equation}\label{act12A}
 E(S_t(x)) \leq e^{-2\lambda t}E(x)\ . 
\end{equation}
Then, as in the original work of Bakry and Emery, \eqref{act12A} together with \eqref{act10} and \eqref{act10b} yield
\begin{equation*}
 F(x) - F(S_T(x)) = \int_0^T E(S_t(x)){\rm d}t \leq \int_0^T e^{-2\lambda t}E(x){\rm d}t \ .
\end{equation*}
Then taking $T\to\infty$ yields
\begin{equation}\label{act13}
F(x) \leq  \frac{1}{2\lambda}E(x)\ . 
\end{equation}
In our setting, when $F$ is a relative entropy function, (\ref{act13}) will be a 
{\em generalized  logarithmic Sobolev inequality}. 
Form here one readily deduces
\begin{equation}\label{act12}
F(S_t(x)) \leq e^{-2\lambda t}F(x) \ . 
\end{equation}

Thus, to prove geodesic convexity of $F$, and what is more directly significant for our applications, the inequalities (\ref{act8}) and (\ref{act12}), it suffices to prove (\ref{act6}), which is a
{\em first derivative estimate}.  In contrast, a  proof of  $\lambda$-convexity of $F$ by direct calculation of the Hessian of $F$ followed by an estimate of its least eigenvalue requires the computation of second covariant derivatives which can be difficult depending on the metric. Otto and Westdickenberg developed their approach having in mind applications to an infinite-dimensional setting (such as that of \cite{O01}) in which second covariant derivatives are difficult to work with rigorously.  It turns out
 that their approach is also quite fruitful in our finite-dimensional setting.

\subsection{Connection with the Brenier-Benamou Formula}

The Brenier-Benamou formula \cite{BB00}  for the $2$-Kantorovich distance between two probability densities $\rho_0$ and $\rho_1$  on $\R^n$ is:
\begin{equation*}\begin{aligned}
	 &W_2(\rho_0, \rho_1)^2\\ & = 
 \inf_{} \bigg\{ \int_0^1 \int  |\nabla \psi_t(x)|^2  \dd \rho_t(x) \dd t \ : \
 		  \partial_t \rho_t + \dive(\rho_t \nabla \psi_t) = 0 \ , 
		  \rho_t|_{t=0,1} = \rho_{0,1} \bigg\}
		  \\& = \inf_{} \bigg\{ \int_0^1\int \frac{|{\bf p}_t(x)|^2}{\rho_t(x)}  \dd x \dd t \ : \
 		  \partial_t \rho_t + \dive {\bf p}_t = 0 \ , 
		  \rho_t|_{t=0,1} = \rho_{0,1} \bigg\} \ .
\end{aligned}\end{equation*}	

The second expression for $W_2(\rho_0, \rho_1)^2$ is the Brenier-Benamou formula. The first expression is used in their work to make a connection with the Wasserstein distance. The equality of the two formulas is a consequence of a minimizing property of gradients:  Among all time dependent vector fields ${\bf p}_t(x)$ such that
\begin{equation}\label{gradmin1}
 \partial_t \rho_t + \dive(\rho_t {\bf p}_t) = 0\ ,
\end{equation}
the one that minimizes ${\displaystyle \int  |\nabla \psi_t(x)|^2  \dd \rho_t(x)}$ is a gradient; i.e., it is of the form ${\bf p}_t = \nabla \psi_t$.   To prove this, one simply needs to show that the set of  ${\bf p}_t$ satisfying \eqref{gradmin1} is non-empty, and then by convexity there is a unique minimizer. Since the difference between any two such ${\bf p}_t$  has zero divergence, a simple integration by parts in the Euler-Lagrange equation provides the proof.
Therefore, in the second minimization formula, in searching for minimizers, ${\bf p}_t$, we may restrict our attention to those of the from ${\bf p}_t = \rho \nabla \psi$. 

It is useful to compare this with our metrics on $\Dens$ that are determined by a QMS generator $\cL$ satisfying the detailed balance condition. 
In our case, for a smooth curve $\rho(t)$  in $\Dens$, we have from Theorem~\ref{ergX2} that
$$g_\rho(\dot\rho,\dot\rho) = \langle \dot \rho, \cK_\rho^{-1}\dot\rho\rangle_\fH   = 
\langle (\cK_\rho^{-1}\dot \rho), \cK_\rho (\cK_\rho^{-1}\dot\rho)\rangle_\fH \ .$$
That is, writing
$\dot \rho(t) =: \cK_{\rho(t)}B(t)$,  $g_\rho(\dot\rho,\dot\rho) = \langle B, \cK_\rho B\rangle$.

Therefore
\begin{align*}
&d_g^2(\rho_0,\rho_1)  =\\ &\inf_{} \bigg\{ \int_0^1 \ip{B_t, \cK_{\rho_t}  B_t} \dd t \ : \
 		  \dot\rho_t = \cK_{\rho_t}B_t\ , 
		  \rho_t|_{t=0} = \rho_0\ ,
		  \rho_t|_{t=1} = \rho_1 \bigg\}\ .
\end{align*}
Recall that $\cK_\rho(B) = -\dive({\mathbb M}_\rho \nabla B)$ where ${\mathbb M}_\rho$ is defined in \eqref{Mudef}.
Making the change of variables,
\begin{equation}\label{gradform}
\widehat{{\bf A}}_t := \mathbb{M}_{\rho(t)} \nabla B_t\ ,
\end{equation}
we have from $\dot\rho_t = \cK_{\rho_t}B_t$ and the definition \eqref{grhodefag} of the metric $g_\rho$,
$$
g_\rho(\dot\rho,\dot\rho) = \langle B_t, \cK_\rho(B_t)\rangle  = \langle  \widehat{{\bf A}}_t ,\mathbb{M}_{\rho(t)}^{-1}\widehat{{\bf A}}_t\rangle \ .
$$

In complete analogy with the  classical case, if we minimize $\langle {\bf A}_t, \mathbb{M}_{\rho(t)}{\bf A}_t\rangle$ over all ${\bf A}_t$ satisfying  $ \dot\rho_t +\dive {\bf A}_t = 0$,  the minimizer ${\bf A}_t$ has the form specified in \eqref{gradform}; that is, it is the gradient of some $B$ as shown in \cite{CM17}. The proof is essentially the same as in the classical case, if anything, it is even simpler in our finite dimensional setting:

\begin{lemma}\label{minpropgr} 
$$g_\rho(\dot\rho,\dot\rho)  =  \min\left\{\ \langle  {\bf A}_t ,\mathbb{M}_{\rho(t)}^{-1}{\bf A}_t\rangle  \ :\  \dot\rho(t) +\dive({\bf A}_t) =0 \ \right\} .$$
\end{lemma}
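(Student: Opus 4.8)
The plan is to read this as a constrained quadratic minimization over the affine subspace of momentum fields ${\bf A}\in\fH_{\cJ}$ satisfying the continuity constraint $\dot\rho(t)+\dive({\bf A}_t)=0$, and to show that the unique minimizer is precisely the gradient field $\widehat{{\bf A}}_t=\mathbb{M}_{\rho(t)}\nabla B_t$ already exhibited in \eqref{gradform}. First I would fix $t$ and suppress it, writing $\dot\rho:=\cK_\rho B$ with $B=\cK_\rho^{-1}\dot\rho$ (well defined on ${\mathcal V}$ by Theorem~\ref{erg} under the spanning hypothesis) and $\widehat{{\bf A}}:=\mathbb{M}_\rho\nabla B$. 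The opening observation is that $\widehat{{\bf A}}$ is \emph{admissible}: by \eqref{Mudef2}, $\cK_\rho(B)=-\dive(\mathbb{M}_\rho\nabla B)$, so $\dive\widehat{{\bf A}}=-\cK_\rho B=-\dot\rho$, whence $\dot\rho+\dive\widehat{{\bf A}}=0$ and the feasible set is nonempty. Since ${\bf A}\mapsto\langle{\bf A},\mathbb{M}_\rho^{-1}{\bf A}\rangle$ is a strictly convex quadratic form (each block ${\mathbb J}_{f_0}(e^{\omega_j/2}\rho,e^{-\omega_j/2}\rho)$ is positive definite for $\rho\in\Dens_+$, hence $\mathbb{M}_\rho$ and $\mathbb{M}_\rho^{-1}$ are positive definite) and the constraint set is a nonempty affine subspace, a unique minimizer exists.

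The core of the argument is then a Pythagorean computation. Any admissible field can be written ${\bf A}=\widehat{{\bf A}}+{\bf Z}$ with $\dive{\bf Z}=0$, and expanding gives
\begin{equation*}
\langle{\bf A},\mathbb{M}_\rho^{-1}{\bf A}\rangle=\langle\widehat{{\bf A}},\mathbb{M}_\rho^{-1}\widehat{{\bf A}}\rangle+2\Re\langle{\bf Z},\mathbb{M}_\rho^{-1}\widehat{{\bf A}}\rangle+\langle{\bf Z},\mathbb{M}_\rho^{-1}{\bf Z}\rangle.
\end{equation*}
The cross term vanishes because $\mathbb{M}_\rho^{-1}\widehat{{\bf A}}=\nabla B$, and, using that $\dive$ is minus the adjoint of $\nabla$ (so $\langle{\bf Z},\nabla B\rangle_{\fH_{\cJ}}=\langle\nabla^\dagger{\bf Z},B\rangle_{\fH}=-\langle\dive{\bf Z},B\rangle_{\fH}$),
\begin{equation*}
\langle{\bf Z},\mathbb{M}_\rho^{-1}\widehat{{\bf A}}\rangle=\langle{\bf Z},\nabla B\rangle_{\fH_{\cJ}}=-\langle\dive{\bf Z},B\rangle_{\fH}=0.
\end{equation*}
Since $\mathbb{M}_\rho^{-1}$ is positive definite, the remaining term satisfies $\langle{\bf Z},\mathbb{M}_\rho^{-1}{\bf Z}\rangle\ge0$ with equality iff ${\bf Z}=0$. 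Hence $\widehat{{\bf A}}$ is the unique minimizer and the minimum value is $\langle\widehat{{\bf A}},\mathbb{M}_\rho^{-1}\widehat{{\bf A}}\rangle$, which equals $g_\rho(\dot\rho,\dot\rho)$ by the identity established just above the lemma statement.

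I do not expect a serious obstacle: this is the standard Hilbert-space orthogonal-projection proof, identical in spirit to the classical Benamou--Brenier computation quoted earlier. The only places requiring care are bookkeeping ones: one must invoke $\dive=-\nabla^\dagger$ exactly as defined, confirm that $\mathbb{M}_\rho$ is genuinely invertible on $\fH_{\cJ}$, and note that on the relevant space ${\mathcal V}$ of traceless self-adjoint matrices the quadratic forms are real, so taking real parts is harmless. A secondary point is checking that $\dot\rho\in{\mathcal V}$ so that $B=\cK_\rho^{-1}\dot\rho$ makes sense; this holds because $\dot\rho$ is the velocity of a curve in $\Dens$ and is therefore traceless and self-adjoint, with $\cK_\rho$ invertible on ${\mathcal V}$ by ergodicity.
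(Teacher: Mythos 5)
Your proof is correct and is essentially the same as the paper's: both decompose an admissible field as $\widehat{{\bf A}}_t$ plus a divergence-free remainder, kill the cross term via $\mathbb{M}_{\rho}^{-1}\widehat{{\bf A}}_t=\nabla B$ and the adjointness of $\dive$ and $-\nabla$, and conclude by positivity of $\mathbb{M}_{\rho}^{-1}$. The only difference is that you spell out the bookkeeping (admissibility of $\widehat{{\bf A}}_t$, invertibility of $\mathbb{M}_\rho$ and $\cK_\rho$) that the paper leaves implicit, and you track the sign in $\langle {\bf Z},\nabla B\rangle = -\langle \dive {\bf Z}, B\rangle$ more carefully than the paper does.
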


\begin{proof} 
Let ${\bf A}_t$ be such that $\dot\rho = -\dive({\bf A}_t)$, and let ${\bf X}_t := {\bf A}_t -\widehat{{\bf A}}_t$. Since  $\dot\rho(t) = -\dive(\widehat{{\bf A}}_t)$, $\dive({\bf X}_t )= 0$.
Then
\begin{eqnarray*}
\langle  {\bf A}_t ,\mathbb{M}_{\rho(t)}^{-1}{\bf A}_t\rangle  &=&  
\langle  \widehat{{\bf A}}_t ,\mathbb{M}_{\rho(t)}^{-1}\widehat{{\bf A}}_t\rangle +  \langle  {\bf X}_t ,\mathbb{M}_{\rho(t)}^{-1}{\bf X}_t\rangle\\
&+&2\Re \langle  {\bf X}_t ,\mathbb{M}_{\rho(t)}^{-1}{\widehat{\bf A}}_t\rangle\ .
\end{eqnarray*}
However,  by \eqref{gradform},
$$
\langle  {\bf X}_t ,\mathbb{M}_{\rho(t)}^{-1}{\widehat{\bf A}}_t\rangle = \langle {\bf X}_t, \nabla B\rangle = \langle \dive({\bf X}_t),B\rangle = 0\ ,
$$
and $\langle  {\bf X}_t ,\mathbb{M}_{\rho(t)}^{-1}{\bf X}_t\rangle \geq 0$ with equality if and only if ${\bf A}_t = \widehat{\bf A}_t$.
\end{proof}

Therefore we obtain the following variational formula for our metric:

\begin{align*}
&d_g^2(\rho_0,\rho_1)  =\nonumber\\ &\inf_{} \bigg\{ \int_0^1 \ip{ {\bf A} _t, \mathbb{M}_{\rho(t)}^{-1}{\bf A}_t} \dd t \ : \
 		  \dot \rho_t +\dive {\bf A}_t = 0\ , 
		  \rho_t|_{t=0} = \rho_0\ ,
		  \rho_t|_{t=1} = \rho_1 \bigg\}
\end{align*}
This is the direct analog of the Brenier-Benamou formula,
it gives us the justification for referring to the metrics on $\Dens$ that we are using as {non-commutative mass transport metrics}. More important, just as in the classical setting, the 
``relaxed'' form of the constraint $\dot \rho_t +\dive {\bf A}_t = 0$ is very helpful in applications.

\subsection{Geodesic convexity using intertwining relations}

In  this section, let $\sigma\in \Dens$, and let  $\{\cP_t\}_{t\geq 0} = \{e^{t\cL}\}_{t\geq 0}$ be an ergodic QMS that satisfies the 
detailed balance condition with respect to $\sigma$. 
Let $\cL$ be given in the standard form specified in Theorem~\ref{thm:structure} so that the data specifying 
 $\cL$ are the sets $\{V_j\}_{j\in \cJ}$ and  $\{\omega_j\}_{j\in \cJ}$.  
 Let $\nabla: \fH \to \fH_{\cJ}$ and $\dive: \fH_{\cJ} \to \fH$ be the associated non-commutative gradient and divergence (as opposed to the associated Riemannian gradient and divergence).

Let $\rho:[0,1]\to \Dens_+$ be a smooth path in $\Dens_+$, and define the one-parameter family of paths, $\rho^t(s)$, $(s,t) \in [0,1]\times[0,\infty)$ by 
\begin{equation*}
\rho^t(s)  = \cP_t^\dagger \rho(s)\ .
\end{equation*}
By what has been explained at the beginning of this lecture, if we can prove that
\begin{equation}\label{act23}
\frac{{\rm d}}{{\rm d}t} \bigg|_{0+} 
\left( \left\| \frac{{\rm d}}{{\rm d}s}\rho^t(s)\right\|^2_{g(\rho^t(s))}\right)\leq -2\lambda 
 \left\| \frac{{\rm d}}{{\rm d}s}\rho^0(s)\right\|^2_{g(\rho^0(s))}
\end{equation}
for all smooth $\rho:[0,1]\to \cM$ and all $s\in (0,1)$, we will have 
proved the geodesic convexity of the relative entropy functional, and 
consequently, we shall have proved 
\begin{equation*}
D(\cP_t^\dagger \rho || \sigma) \leq e^{-2\lambda t}D(\rho || \sigma)\ .
\end{equation*}

Note that \eqref{act23} is equivalent to proving
\begin{equation}\label{act23B} 
 \left\| \frac{{\rm d}}{{\rm d}s}\rho^t(s)\right\|^2_{g(\rho^t(s))} \leq e^{-2\lambda t}
 \left\| \frac{{\rm d}}{{\rm d}s}\rho^0(s)\right\|^2_{g(\rho^0(s))}
\end{equation}

By Lemma~\ref{minpropgr}, for any ${\bf C}_s$ such that 
$$
\frac{{\rm d}}{{\rm d}s}\rho^t(s) + {\bf C}_s = 0\ ,
$$
$$
 \left\| \frac{{\rm d}}{{\rm d}s}\rho^t(s)\right\|^2_{g(\rho^t(s))} \leq \langle {\bf C}_s, \mathbb{M}_{\rho^t(s)}^{-1} {\bf C}_s\rangle\ .
$$
On the other hand, if we define $\widehat{{\bf A}}_s := \mathbb{M}_{\rho(s)} \nabla B_s$ where $B_s := \cK_\rho(s)^{-1}\dot\rho(s)$, then again by Lemma~\ref{minpropgr},
$$
\left\| \frac{{\rm d}}{{\rm d}s}\rho^0(s)\right\|^2_{g(\rho^0(s))} = \langle  \widehat{{\bf A}}_s,\mathbb{M}_{\rho(s)}^{-1}  \widehat{{\bf A}}_s\rangle \ .
$$
There is a natural choice for ${\bf C}_s$:   Since $\frac{{\rm d}}{{\rm d}s}\rho^0(s) = -\dive ( \widehat{{\bf A}}_s)$, applying $\cP_t^\dagger$ to both sides yields
$$
\frac{{\rm d}}{{\rm d}s}\rho^t(s) =  \cP_t^\dagger \left( \dive ( \widehat{{\bf A}}_s)\right)\ .
$$
We now construct an ``intertwining operator'' $\cQ_t$ on ${\mathfrak G}_{\cJ}$  such that
$$
\cP_t^\dagger \circ \dive = \dive \circ \cQ_t^\dagger\ .
$$
With this in hand, we may take ${\bf C}_s := \dive\left(\cQ_t^\dagger \widehat{{\bf A}}_s\right)$.  Then, whenever we can prove that
\begin{equation}\label{act23BX} 
\langle  \cQ_t^\dagger\widehat{\bf A}_s, \mathbb{M}_{\cP_t^\dagger \rho(s)}^{-1} \cQ_t^\dagger \widehat{\bf A}_s\rangle \leq e^{-2\lambda t}
\langle  \widehat{{\bf A}}_s,\mathbb{M}_{\rho(s)}^{-1}  \widehat{{\bf A}}_s\rangle\ ,
\end{equation}
we shall have proved \eqref{act23B}.  We shall be able to do this in a number of interesting cases. First, we construct the map $\cQ_t$. 

\begin{lemma}\label{Qconslm} For $t>0$, define the linear transformation $\cQ_t:{\mathfrak G}_{\cJ} \to  {\mathfrak G}_{\cJ}$ by
\begin{equation}\label{Qtdef}
\cQ_t := \nabla \cP_t \circ \cL_0^+ \circ \dive
\end{equation}
where $\cL_0^+$ is the generalized inverse of the map $\cL_0$ defined in \eqref{cL0def}.
Then for all traceless $A$,
\begin{equation}\label{intertwine1}
\nabla \cP_t  A= \cQ_t\nabla A \ ,
\end{equation}
and consequently
\begin{equation*}
 \cP_t^\dagger  \circ \dive = \dive \circ \cQ_t^\dagger \ 
\end{equation*}
as operators on ${\mathfrak H}_{\cJ}$. Furthermore, for all $s,t>0$,
\begin{equation}\label{intertwine3}
\cQ_s \cQ_t = \cQ_{s+t}\ .
\end{equation}
\end{lemma}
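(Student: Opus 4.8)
The plan is to prove the semigroup identity \eqref{intertwine3} by reducing it to the intertwining relation \eqref{intertwine1}, which I would establish first, but in a form valid on all of $\fH = M_n(\C)$ rather than only on traceless matrices. The point is that once one knows $\cQ_t \nabla = \nabla \cP_t$ as operators on $\fH$, the semigroup property of the $\cQ_t$ becomes an immediate consequence of the semigroup law $\cP_s\cP_t = \cP_{s+t}$ for the underlying QMS, together with the fact (Definition~\ref{gradsubdef}) that every element of ${\mathfrak G}_{\cJ}$ is a gradient $\nabla A$.

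First I would verify the intertwining relation. Unwinding the definition \eqref{Qtdef} and using $\cL_0 = \dive\circ\nabla$ from \eqref{cL0def}, for any $A\in M_n(\C)$ one has
\begin{equation*}
\cQ_t\nabla A = \nabla\cP_t\,\cL_0^+\dive\nabla A = \nabla\cP_t\,\cL_0^+\cL_0 A\ .
\end{equation*}
Since $\cL_0 = -\nabla^\dagger\nabla$ is self-adjoint and negative semidefinite on $\fH$, its kernel coincides with $\ker\nabla$, which by ergodicity and Theorem~\ref{erg} is exactly $\C\one$. Hence $\cL_0^+\cL_0$ is the orthogonal projection $P$ onto the traceless matrices $(\C\one)^\perp$, so $\cL_0^+\cL_0 A = A - \tfrac{1}{n}\tr[A]\one$. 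Because $\cP_t$ is unital, $\cP_t\one=\one$, and $\nabla\one = 0$, the scalar correction is annihilated:
\begin{equation*}
\nabla\cP_t\big(A - \tfrac{1}{n}\tr[A]\one\big) = \nabla\cP_t A\ .
\end{equation*}
This yields $\cQ_t\nabla = \nabla\cP_t$ on all of $\fH$, which is \eqref{intertwine1}; taking Hilbert-space adjoints of this identity and recalling $\dive = -\nabla^\dagger$ then gives the companion relation $\cP_t^\dagger\circ\dive = \dive\circ\cQ_t^\dagger$.

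With the strengthened intertwining relation in hand, the semigroup property is purely algebraic. For any $A\in M_n(\C)$,
\begin{equation*}
\cQ_s\cQ_t\,\nabla A = \cQ_s\,\nabla\cP_t A = \nabla\cP_s\cP_t A = \nabla\cP_{s+t}A = \cQ_{s+t}\,\nabla A\ ,
\end{equation*}
where the first and last equalities invoke $\cQ_r\nabla = \nabla\cP_r$ and the middle one uses $\cP_s\cP_t = \cP_{s+t}$. Since ${\mathfrak G}_{\cJ} = \{\nabla A : A\in M_n(\C)\}$, the operators $\cQ_s\cQ_t$ and $\cQ_{s+t}$ agree on every element of ${\mathfrak G}_{\cJ}$, which proves \eqref{intertwine3}.

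The only delicate point — and the main obstacle — is the bookkeeping with the scalar (trace) component. The subtlety is that $\cP_t$, being unital but not trace preserving in the Heisenberg picture, does not map traceless matrices to traceless matrices, so the intertwining relation as stated for traceless arguments cannot be iterated naively. What rescues the argument is precisely the complementary pair of facts that $\nabla$ annihilates $\C\one$ while $\cP_t$ fixes $\one$; both are needed to show that the projection $\cL_0^+\cL_0$ may be inserted or deleted freely underneath $\nabla\cP_t$. Once the intertwining relation has been promoted to all of $\fH$ in this way, no estimates are required and \eqref{intertwine3} follows by a one-line composition.
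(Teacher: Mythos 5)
Your proof is correct, and while it uses the same raw ingredients as the paper's proof, it assembles them along a different route. The paper verifies \eqref{intertwine1} only for traceless $A$ (calling it ``a simple computation''), then obtains $\cP_t^\dagger\circ\dive = \dive\circ\cQ_t^\dagger$ by a separate duality-pairing argument, removing the traceless restriction on the test matrix by noting that divergences are traceless and $\cP_t^\dagger$ is trace preserving; finally it proves \eqref{intertwine3} at the operator level, writing $\cQ_s\cQ_t = \nabla\cP_s\,\cL_0^+\cL_0\,\cP_t\,\cL_0^+\,\dive = \nabla\cP_{s+t}\,\cL_0^+\,\dive$ and invoking the fact that $\cP_t$ preserves $\ker(\cL_0)$. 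You instead promote \eqref{intertwine1} to an identity on all of $\fH$ first, after which the divergence relation is a one-line adjoint-taking and the semigroup law follows by iterating the intertwining; your diagnosis of why the traceless-only version cannot be iterated naively (since $\cP_t$ is unital, not trace preserving, it does not preserve tracelessness) is exactly the right point, and it is the same issue the paper's trace-bookkeeping handles in its duality step. The one trade-off is generality: your identification $\cL_0^+\cL_0 A = A - \tfrac1n\tr[A]\one$ uses the section's standing ergodicity hypothesis, via Theorem~\ref{erg}, to get $\ker\cL_0 = \ker\nabla = \C\one$, whereas the paper's kernel-preservation step needs only that $\cP_t = e^{t\cL}$ maps $\ker\cL_0$ (the commutant of $\{V_j\}$, i.e.\ the null space of $\cL$, again by Theorem~\ref{erg}) into itself, and so works verbatim in the non-ergodic case. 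Your argument adapts to that setting too: replace ``$\cP_t\one=\one$ and $\nabla\one=0$'' by ``$\cP_t$ preserves $\ker\nabla$, which $\nabla$ annihilates,'' so that $\nabla\cP_t(\one - \cL_0^+\cL_0) = 0$ still holds.
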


\begin{proof} A simple computation verifies \eqref{intertwine1}.  Then for any ${\bf C}\in {\mathfrak H}_{\cJ}$,
$$
\langle {\bf C}, \nabla \cP_t  A\rangle_{\fH_{\cJ}} =  \langle {\bf C}, \cQ_t\nabla A\rangle_{\fH_{\cJ}} = \langle \cQ_t^\dagger {\bf C}, \nabla A\rangle_{\fH_{\cJ}}\ .
$$
Integration by parts yields
$$
\langle \cP_t^\dagger(\dive {\bf C}),A\rangle_{\fH} = \langle \dive(\cQ_t({\bf C}), A\rangle 
$$
for all traceless $A$, and then since divergences are traceless and $\cP_t^\dagger$ is trace preserving, the restriction on the trace of $A$ is superfluous. 
Finally, by the definition
$$
\cQ_s \cQ_t  = \nabla \cP_s \circ \cL_0^+\cL_0  \cP_t \circ \cL_0^+ \circ \dive =  \nabla \cP_{s+t} \circ \cL_0^+ \circ \dive = \cQ_{s+t}\ 
$$
since $\cP_t$ preserves the kernel of $\cL_0$, and this proves \eqref{intertwine3}.
\end{proof}

\begin{definition} The inequality 
\begin{equation}\label{act23BX2} 
\langle  \cQ_t^\dagger{\bf X} , \mathbb{M}_{\cP_t^\dagger \rho(s)}^{-1} \cQ_t^\dagger{\bf X}\rangle \leq e^{-2\lambda t}
\langle  {{\bf X}},\mathbb{M}_{\rho(s)}^{-1}  {{\bf X}}\rangle\ ,
\end{equation} when it holds for some $\lambda \geq 0$, and all ${\bf X} \in \fH$,  is called the {\em action dissipation inequality}.
\end{definition}

Evidently when \eqref{act23BX2} is valid for general ${\bf X}$, the specific case \eqref{act23BX} is valid.
We now present a simple sufficient condition for \eqref{act23BX2}  that can 
 be able  verified in a number of interesting examples, as was done in \cite{CM17}. 

In some case, $\cQ_t$ takes a particularly simple form such that \eqref{act23BX} is evidently satisfied. 

\begin{theorem}\label{odelm} Suppose that for some numbers $a_j$, $j\in \cJ$, 
\begin{align}\label{intertw2A}
[\partial_j,\cL]  = -a_j\partial_j \ 
\end{align}
for each $j\in \cJ$.  Then with $\cQ_t:{\mathfrak G}_{\cJ} \to  {\mathfrak G}_{\cJ}$   defined by \eqref{Qtdef},
\begin{equation}\label{intertw2B}
 \cQ_t(A_1,\dots,A_{|\cJ|}) = (e^{-ta_1} \cP_t A_1,\dots,e^{-ta_{|\cJ|}}\cP_t A_{|\cJ|})\ .
\end{equation}
Then with $\lambda := \min\{\ a_j\ :\ j\in \cJ\ \}$, the action dissipation inequality  \eqref{act23BX2}  is satisfied.
\end{theorem}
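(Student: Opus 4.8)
The goal is to prove Theorem~\ref{odelm}: assuming the intertwining relation $[\partial_j,\cL] = -a_j\partial_j$ for each $j$, the operator $\cQ_t$ defined by \eqref{Qtdef} acts componentwise as $\cQ_t(A_1,\dots,A_{|\cJ|}) = (e^{-ta_1}\cP_tA_1,\dots,e^{-ta_{|\cJ|}}\cP_tA_{|\cJ|})$, and then the action dissipation inequality \eqref{act23BX2} holds with $\lambda := \min_j a_j$. The first task is to extract the componentwise formula \eqref{intertw2B} from the commutator hypothesis.

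First I would rewrite the hypothesis \eqref{intertw2A} at the level of the full gradient map. The relation $[\partial_j,\cL] = -a_j\partial_j$ says $\partial_j\cL A = \cL\partial_j A - a_j\partial_j A = (\cL - a_j)\partial_j A$. Iterating and exponentiating — since $\partial_j\cL^k = (\cL-a_j)^k\partial_j$ by an easy induction — gives $\partial_j e^{t\cL} = e^{t(\cL-a_j)}\partial_j = e^{-ta_j}\cP_t\partial_j$. Reading this across all $j$, the intertwining identity \eqref{intertwine1} of Lemma~\ref{Qconslm}, namely $\nabla\cP_t A = \cQ_t\nabla A$, forces
\begin{equation*}
\cQ_t\nabla A = (\partial_1\cP_t A,\dots,\partial_{|\cJ|}\cP_t A) = (e^{-ta_1}\cP_t\partial_1 A,\dots,e^{-ta_{|\cJ|}}\cP_t\partial_{|\cJ|}A)\ .
\end{equation*}
Since every element of $\mathfrak{G}_{\cJ}$ is of the form $\nabla A$ with $A$ traceless (and by Theorem~\ref{erg} $\nabla$ is invertible from $\mathcal{V}$ onto $\mathfrak{G}_{\cJ}$), this determines $\cQ_t$ on all of $\mathfrak{G}_{\cJ}$ and yields exactly \eqref{intertw2B}, with $A_j = \partial_j A$ the components.

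Next I would compute the adjoint $\cQ_t^\dagger$ on $\fH_{\cJ}$. Because $\cQ_t$ acts diagonally with scalar factors $e^{-ta_j}$ and the single operator $\cP_t$ in each slot, its Hilbert-Schmidt adjoint acts as $\cQ_t^\dagger(X_1,\dots,X_{|\cJ|}) = (e^{-ta_1}\cP_t^\dagger X_1,\dots,e^{-ta_{|\cJ|}}\cP_t^\dagger X_{|\cJ|})$ (using that the scalars $a_j$ are real and the componentwise structure is self-dual). The action dissipation inequality then reduces to a slot-by-slot estimate: I must show
\begin{equation*}
\sum_{j}e^{-2ta_j}\langle \cP_t^\dagger X_j, \mathbb{J}_{f_0}^{+}(e^{\omega_j/2}\cP_t^\dagger\rho,e^{-\omega_j/2}\cP_t^\dagger\rho)\,\cP_t^\dagger X_j\rangle \le e^{-2t\lambda}\sum_j\langle X_j,\mathbb{J}_{f_0}^{+}(e^{\omega_j/2}\rho,e^{-\omega_j/2}\rho)X_j\rangle\ .
\end{equation*}
Pulling the scalar $e^{-2ta_j}\le e^{-2t\lambda}$ out of each summand (valid since $\lambda=\min_j a_j$), it suffices to prove the summand bound with all scalar factors stripped, i.e.\ that for each $j$, applying $\cP_t^\dagger$ to the state $\rho$ and to the ``vector-field component'' $X_j$ does not increase the quadratic form $\langle\,\cdot\,,\mathbb{J}_{f_0}^{+}(e^{\omega_j/2}(\cdot),e^{-\omega_j/2}(\cdot))\,\cdot\,\rangle$. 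This is precisely part \emph{(b)} of the Hiai--Petz theorem, Theorem~\ref{HPext}, applied with $\Phi = \cP_t$ (unital and completely positive, hence a Schwarz map), $f=f_0$, and the arguments $X = e^{\omega_j/2}\rho$, $Y = e^{-\omega_j/2}\rho$, using $\cP_t^\dagger(e^{\pm\omega_j/2}\rho) = e^{\pm\omega_j/2}\cP_t^\dagger\rho$.

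The main obstacle I anticipate is purely bookkeeping rather than conceptual: one must check that the monotonicity form of part \emph{(b)} of Theorem~\ref{HPext} lines up exactly with the componentwise quadratic forms $\mathbb{J}_{f_0}^{+}(e^{\omega_j/2}\rho,e^{-\omega_j/2}\rho)$ appearing in $\mathbb{M}_\rho^{-1}$, paying attention to which argument is $X$ and which is $Y$ and to the placement of the conjugate $\cP_t^\dagger X_j$ versus $X_j$. Once the dictionary between the $\mathbb{M}_\rho$ notation of \eqref{Mudef} and the $\mathbb{J}_f$ notation of Theorem~\ref{HPext} is fixed, the inequality is immediate, and the factor $e^{-2t\lambda}$ emerges from uniformly bounding the diagonal scalars. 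The semigroup structure \eqref{intertwine3} guarantees this is consistent for composing the flow, though it is not needed for the single-time estimate itself.
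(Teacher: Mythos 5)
Your proposal is correct and follows essentially the same route as the paper: you derive the componentwise formula \eqref{intertw2B} by exponentiating the commutation relation (the paper does this via an ODE-uniqueness argument rather than power-series iteration, which is the same content), and you then prove the action dissipation inequality by pulling the scalar factors out of $\cQ_t^\dagger$ and invoking monotonicity of the quadratic form $\langle\,\cdot\,,\mathbb{M}_\rho^{-1}\,\cdot\,\rangle$ under $\cP_t^\dagger$. Your appeal to part \emph{(b)} of Theorem~\ref{HPext} with $f=f_0$ is literally the same inequality the paper cites as Theorem~\ref{L3M}, and your bound $e^{-2ta_j}\le e^{-2t\lambda}$ for $\lambda=\min_j a_j$ handles unequal $a_j$ a bit more carefully than the paper's corresponding display, which is written as an equality.
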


\begin{proof} Let $A\in \fH$, $\tr[A] =0$,  and define $A_j(t) = \partial_j \cP_t A$. Then $A_j(0) =\partial_j A_j$ and 
$$\frac{{\rm d}}{{\rm d}t}A_j(t) = \partial_j \cL \cP_t A = 
\cL \partial_j \cP_t A  - a_j \partial_j \cP_t A = [\cL - a_j \one]A_j(t)\ .$$
It follows that $t \mapsto e^{ta_j}A_j(t)$ is the unique solution of
${\displaystyle \frac{{\rm d}}{{\rm d}t}X(t) = \cL X(t)}$   with 
$X(0) =\partial_j A$, which is of course $\cP_t \partial_j A$. Therefore,
$\partial_j\cP_t A  =   e^{-ta_j} \cP_t \partial_j A $, and this proves that 
\begin{equation}\label{intertwine6}
\nabla \cP_tA = (e^{-ta_1} \cP_t \partial_1 A,\dots,e^{-ta_{|\cJ|}}\cP_t \partial_{|\cJ|} A)\ .
\end{equation}
Then since $\tr[A] =0$, $\cL_0^+ \circ\dive (\nabla A) = A$, and then \eqref{intertwine6} yields
$$
\cQ_t\nabla A = \nabla \cP_t  \cL_0^+ \circ\dive (\nabla A) = (e^{-ta_1} \cP_t \partial_1 A,\dots,e^{-ta_{|\cJ|}}\cP_t \partial_{|\cJ|} A)\ ,
$$
and this proves \eqref{intertw2B}.  This in turn yields
$$
\langle  \cQ_t^\dagger\widehat{\bf A}_s, \mathbb{M}_{\cP_t^\dagger \rho(s)}^{-1} \cQ_t^\dagger \widehat{\bf A}_s\rangle =e^{-2\lambda t} 
\langle  \cP_t^\dagger\widehat{\bf A}_s, \mathbb{M}_{\cP_t^\dagger \rho(s)}^{-1} \cP_t^\dagger\widehat{\bf A}_s\rangle \ ,
$$
and then by Theorem~\ref{L3M}
$$
\langle  \cP_t^\dagger\widehat{\bf A}_s, \mathbb{M}_{\cP_t^\dagger \rho(s)}^{-1} \cP_t^\dagger \widehat{\bf A}_s\rangle \leq \langle  \widehat{\bf A}_s, \mathbb{M}_{ \rho(s)}^{-1}\widehat{\bf A}_s \rangle
$$
Combining the last two inequalities yields \eqref{act23BX2} .
\end{proof}

 It was shown in  \cite{CM17} that for both the Fermi and Bose Quantum Mehler Semgroups one has that 
\begin{equation}\label{goodcase}
\cQ_t {\bf A} =  
(e^{-\lambda t} \cP_t A_1,\dots, e^{-\lambda t}\cP_t A_{|\cJ|})\ .
\end{equation}
for some $\lambda>0$ depending on the ``temperature'' parameter in the semigroup, and on whether the semigroup is the Fermi or Bose version. See \cite{CM17} for details.  Here we simply note that whenever \eqref{goodcase} is satisfied, 
$$
\langle  \cQ_t^\dagger\widehat{\bf A}_s, \mathbb{M}_{\cP_t^\dagger \rho(s)}^{-1} \cQ_t^\dagger \widehat{\bf A}_s\rangle =e^{-2\lambda t} 
\langle  \cP_t^\dagger\widehat{\bf A}_s, \mathbb{M}_{\cP_t^\dagger \rho(s)}^{-1} \cP_t^\dagger\widehat{\bf A}_s\rangle \ ,
$$
and then by Theorem~\ref{L3M}
$$
\langle  \cP_t^\dagger\widehat{\bf A}_s, \mathbb{M}_{\cP_t^\dagger \rho(s)}^{-1} \cP_t^\dagger \widehat{\bf A}_s\rangle \leq \langle  \widehat{\bf A}_s, \mathbb{M}_{ \rho(s)}^{-1}\widehat{\bf A}_s \rangle
$$
Therefore,
whenever \eqref{goodcase} is valid,  so is \eqref{act23BX} and hence \eqref{act23B}. 
Altogether we have proved:

\begin{theorem}\label{sumthm} 
Let $\sigma\in \Dens_+$, and let $\cP_t = e^{t\cL}$ be an ergodic QMS that satisfies the detailed balance condition with respect to $\sigma$.
Let $\nabla$ and $\dive$ denote the associated non-commutative gradient and divergence. 
Suppose that for some $\lambda> 0$, the action dissipation inequality \eqref{act23BX2} is satisfied.

Then the relative entropy with respect to $\sigma$ is geodesicaly $\lambda$-convex  for the Riemannian metric $g_{\rho}$ determined by $\cL$.
Moreover, the exponential convergence estimate
\begin{align*}
	D(\cP_t^\dagger \rho || \sigma) \leq e^{-2\lambda t}D(\rho || \sigma)
\end{align*}
holds, as well as the generalized logarithmic Sobolev inequality
\begin{align}\label{eq:gen-LSI}
	D(\rho || \sigma) \leq  -\frac{1}{2\lambda} \tr\big[\cL^\dagger(\rho)\big(\log \rho - \log \sigma\big)\big] \ .
\end{align}
\end{theorem}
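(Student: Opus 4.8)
The plan is to reduce all three conclusions to the first-derivative estimate \eqref{act23}, which by the Otto--Westdickenberg/Daneri--Savar\'e criterion recalled at the start of this lecture yields geodesic $\lambda$-convexity of $D(\cdot\,||\sigma)$, and then to harvest the exponential decay and the logarithmic Sobolev inequality from the generic consequences \eqref{act12} and \eqref{act13} of $\lambda$-convexity. Since \eqref{act23} follows by differentiating \eqref{act23B} at $t=0$, the real content is to establish \eqref{act23B} for an arbitrary smooth path $\rho:[0,1]\to\Dens_+$, and this is exactly where the action dissipation hypothesis enters.

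First I would fix such a path and set $\rho^t(s):=\cP_t^\dagger\rho(s)$. Writing $B_s:=\cK_{\rho(s)}^{-1}\dot\rho(s)$ and $\widehat{\bf A}_s:=\mathbb{M}_{\rho(s)}\nabla B_s$ as in \eqref{gradform}, Lemma~\ref{minpropgr} gives the exact identity $\|\dot\rho^0(s)\|^2_{g(\rho^0(s))}=\langle\widehat{\bf A}_s,\mathbb{M}_{\rho(s)}^{-1}\widehat{\bf A}_s\rangle$ together with $\dot\rho^0(s)=-\dive\widehat{\bf A}_s$. Applying $\cP_t^\dagger$ and using the intertwining relation $\cP_t^\dagger\circ\dive=\dive\circ\cQ_t^\dagger$ from Lemma~\ref{Qconslm}, the velocity of the transported path is realized as a divergence, $\dot\rho^t(s)=-\dive(\cQ_t^\dagger\widehat{\bf A}_s)$; in other words $\cQ_t^\dagger\widehat{\bf A}_s$ is an \emph{admissible} vector field for $\rho^t(s)$. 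Hence the minimizing property in Lemma~\ref{minpropgr} furnishes the one-sided bound $\|\dot\rho^t(s)\|^2_{g(\rho^t(s))}\le\langle\cQ_t^\dagger\widehat{\bf A}_s,\mathbb{M}_{\rho^t(s)}^{-1}\cQ_t^\dagger\widehat{\bf A}_s\rangle$, with $\mathbb{M}_{\rho^t(s)}=\mathbb{M}_{\cP_t^\dagger\rho(s)}$.

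Now I would invoke the hypothesis: the action dissipation inequality \eqref{act23BX2}, in the instance \eqref{act23BX} with ${\bf X}=\widehat{\bf A}_s$, bounds the right-hand side by $e^{-2\lambda t}\langle\widehat{\bf A}_s,\mathbb{M}_{\rho(s)}^{-1}\widehat{\bf A}_s\rangle=e^{-2\lambda t}\|\dot\rho^0(s)\|^2_{g(\rho^0(s))}$, which is precisely \eqref{act23B}; differentiating at $t=0$ gives \eqref{act23}. The Daneri--Savar\'e local argument then certifies geodesic $\lambda$-convexity of $F(\rho):=D(\rho||\sigma)$ on $\Dens_+$, and the general contraction estimate \eqref{act12} yields $D(\cP_t^\dagger\rho||\sigma)\le e^{-2\lambda t}D(\rho||\sigma)$.

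For the logarithmic Sobolev inequality I would apply the generic bound \eqref{act13}, $F(\rho)\le\frac{1}{2\lambda}E(\rho)$, and identify the energy $E(\rho)=\|\grad_g F(\rho)\|^2_{g(\rho)}$ explicitly. Since $\rmD D(\rho||\sigma)=\log\rho-\log\sigma$, the gradient is $\grad_g F(\rho)=\cK_\rho(\log\rho-\log\sigma)$, which by Proposition~\ref{prop:Lindblad-grad-flow} equals $-\cL^\dagger\rho$; feeding this into the metric \eqref{grhodefag} gives $E(\rho)=\langle\log\rho-\log\sigma,\cK_\rho(\log\rho-\log\sigma)\rangle=-\tr[\cL^\dagger(\rho)(\log\rho-\log\sigma)]$, which turns \eqref{act13} into \eqref{eq:gen-LSI}. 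The step that most deserves care is the second one: checking that $\cQ_t^\dagger\widehat{\bf A}_s$ genuinely satisfies the continuity constraint of Lemma~\ref{minpropgr} (so that only the desired one-sided inequality is used) and that the operator $\mathbb{M}_{\rho^t(s)}$ matches the $\mathbb{M}_{\cP_t^\dagger\rho(s)}^{-1}$ appearing in \eqref{act23BX2}; once these identifications are in place, the remaining assembly is routine.
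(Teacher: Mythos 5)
Your proposal is correct and follows essentially the same route as the paper: realize the velocity of the original path by the gradient field $\widehat{\bf A}_s$ via Lemma~\ref{minpropgr}, use the intertwining relation of Lemma~\ref{Qconslm} to exhibit $\cQ_t^\dagger\widehat{\bf A}_s$ as an admissible field for the transported path, invoke the action dissipation inequality to get \eqref{act23B} and hence \eqref{act23}, and then harvest geodesic convexity, the decay estimate \eqref{act12}, and the log-Sobolev inequality \eqref{act13} with the energy identified through Proposition~\ref{prop:Lindblad-grad-flow}. The points you flag for care (admissibility of $\cQ_t^\dagger\widehat{\bf A}_s$ and the identification $\mathbb{M}_{\rho^t(s)}=\mathbb{M}_{\cP_t^\dagger\rho(s)}$) are exactly the identifications the paper relies on.
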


In \cite{CM17}, Theorem~\ref{sumthm} was shown to apply
for both the Fermi and Bose Ornstein-Uhlenbeck semigroups $\cP_t$, with the action dissipation inequality being proved using Theorem~\ref{odelm}.
Actually, in these cases,  there is a 
Mehler type formula for 
$\cP_t$ from which the intertwining can be readily checked, as shown in \cite{CM17}. In the Fermi case, 
this can be found in formulas (4.1) and (4.2) of \cite{CL93}, and the formula in the Bose case is a simple adaptation of this.

\subsection{Dual from of the action dissipation inequality}

There is a dual form of the action dissipation inequality \eqref{act23BX2},
$$
\langle  \cQ_t^\dagger{\bf X} , \mathbb{M}_{\cP_t^\dagger \rho(s)}^{-1} \cQ_t^\dagger{\bf X}\rangle \leq e^{-2\lambda t}
\langle  {{\bf X}},\mathbb{M}_{\rho(s)}^{-1}  {{\bf X}}\rangle\ ,
$$
that is particularly useful in applications. 

Suppose \eqref{act23BX2} is valid. Then for arbitrary ${\bf Y}\in M_n(\C)$, 
\begin{eqnarray*}
\langle \nabla Y, \cQ_t^\dagger{\bf X}\rangle - \frac12  \langle  \cQ_t^\dagger{\bf X} , \mathbb{M}_{\cP_t^\dagger \rho(s)}^{-1} \cQ_t^\dagger{\bf X}\rangle 
&\geq& \langle \nabla Y, \cQ_t^\dagger{\bf X}\rangle - \frac12  e^{-2\lambda t} \langle  {{\bf X}},\mathbb{M}_{\rho(s)}^{-1}  {{\bf X}}\rangle\\
&=& \langle \cQ_t\nabla Y, {\bf X}\rangle - \frac12  e^{-2\lambda t} \langle  {{\bf X}},\mathbb{M}_{\rho(s)}^{-1}  {{\bf X}}\rangle\\
&=& \langle \nabla  \cP_tY, {\bf X}\rangle - \frac12  e^{-2\lambda t} \langle  {{\bf X}},\mathbb{M}_{\rho(s)}^{-1}  {{\bf X}}\rangle\ .
\end{eqnarray*}
Taking the supremum over ${\bf X}$, so that we are taking the Legendre transforms of two quadratic functions we obtain
\begin{equation}\label{act23BX3}
\langle \nabla Y, \mathbb{M}_{\cP_t^\dagger \rho(s)} \nabla Y\rangle \geq e^{2\lambda t} \langle \nabla \cP_t Y, \mathbb{M}_{\rho} \nabla \cP_t Y\rangle\ .
\end{equation}

Since the Legendre transform is involutive, a standard argument shows that \eqref{act23BX3} is actually equivalent to the  action dissipation inequality \eqref{act23BX2}.

The inequality \eqref{act23BX3} is called the {\em gradient estimate} by Wirth and Zhang \cite{WZ23}. Actually it is the ``dimension free''  case $GE(\lambda,\infty)$ of a family of inequalities $GE(\lambda,N)$ depending on ``curvature'' and ``dimension''.     They show, using a calculation of the Hessian of the relative entropy from \cite{CM20} that  \eqref{act23BX3} holds if and only if the relative entropy is $\lambda$ geodesically convex.  While the line of argument presented here may suggest that 
 that the entropy dissipation inequality is  only a sufficient condition for  $\lambda$-geodesic convexity of the relative entropy,  it is actually necessary as well.  This was proved in \cite{CM20,DR20}.   The  equivalence by duality of  \eqref{act23BX}   and  \eqref{act23BX3} somewhat simplifies the proof that either condition is equivalent to geodesic convexity of the relative entropy. However, the main point to make as the end of these lectures approaches is that there is much additional information in the papers \cite{CM20,DR20,WZ23}.

Wirth and Zhang also show using concavity of operator means that whenever the semigroup $\cP_t$ satisfies the intertwining relation \eqref{goodcase}, then  the gradient estimate  \eqref{act23BX3} is satisfied. We gave a simple proof using the inequalities considered here: Assuming \eqref{goodcase},
$$
e^{2\lambda t} \langle \nabla \cP_t Y, \mathbb{M}_{\rho} \nabla \cP_t Y\rangle = \langle \cP_t \nabla Y, \mathbb{M}_{\rho}  \cP_t  \nabla Y\rangle
$$
and then by Theorem~\ref{L1M}, Uhlmann's  monotonicity version of the Lieb Concavity Theorem,
$$\langle \cP_t \nabla Y, \mathbb{M}_{\rho}  \cP_t  \nabla Y\rangle \leq  \langle \nabla Y, \mathbb{M}_{\cP_t^\dagger \rho(s)} \nabla Y\rangle \ .
$$

\subsection{Another route to the action dissipation inequality}

Theorem~\ref{odelm} is not the only route to proving the action dissipation inequality. We close with a simple example due to Wirth and Zhang \cite{WZ23}.  The {\em completely depolarizing channel} is the QMS on $M_n(\C)$ given by
\begin{equation}\label{cdep1}
\cL(X) = \frac1n\tr[X]\one -X\ .
\end{equation}
Since $\Phi(X) := \frac1n\tr[X]\one$ defined a completely positive unital and trace preserving map, $\cP_t = e^{t\cL}$ is a QMS satisfying the detailed balance condition with respect to $\tau$, the normalized trace. 

Using \eqref{cdep1} in the right side of \eqref{eq:gen-LSI} with $\sigma= \tau$, we obtain
$$
-\tr\big[\cL^\dagger(\rho)\big(\log \rho - \log \tau\big)\big] = -\tr[ (\tau -\rho) \big(\log \rho - \log \tau\big)\big] = D(\rho||\tau) + D(\tau||\rho)\ ,
$$
as noted in \cite{MHFW16}. Since $D(\tau||\rho) \geq 0$, we have the generalized logarithmic Sobolev inequality
\begin{equation}\label{CDCex1}
D(\rho||\tau)  \leq -\tr\big[\cL^\dagger(\rho)\big(\log \rho - \log \tau\big)\big] \ .
\end{equation}
This inequality turns out to be the  best possible inequality that holds {\em independent of the dimension}: 
In high dimension there exist $\rho \neq \tau$ such that $D(\tau||\rho)$ is an arbitrarily small multiple of $D(\rho||\tau)$;  \cite{MHFW16}. 

It is easy to write the generator explicitly in terms of a set $\{V_1,\dots,V_{2n-1}\}$ of operators defining that various partial derivatives, but we shall not need this. Also now that since for $\tau$, the modular operator is the identity, all of the Bohr frequencies $\omega$ are zero. Hence $\mathbb{M}_{\rho} $ simplifies to 
$$
\mathbb{M}_{\rho}{\bf  A} = \int_0^1 \rho^{1-s}{\bf A} \rho^{s}{\rm d}s\ .
$$
It is also easy to write down a formula for $\cP_t$ in this simple case:
$$
\cP_t A = e^{-t}A + (1-e^{-t})\frac1n \tr[A]\one\ .
$$

Since $\nabla \one =0$, and since each component of $\nabla A$ is traceless,
\begin{equation}\label{interwz}
\nabla \cP_tA = e^{-t}\nabla A = \cP_t \nabla A\ .
\end{equation}
Equivalntly, for each $j$, 
$$
[\partial_j,\cL] = 0
$$
so that \eqref{intertw2B} is satisfied, but for $a_j= 0$. Hence  Theorem~\ref{odelm}  only tells us that the action dissipation inequality  \eqref{act23BX3} is satisfied
 for $\lambda = 0$, which is useless.     

However, 
as Wirth and Zhang showed, 
one can still establish a non-trivial form of  \eqref{act23BX3}: First observe that since $\Phi$, and hence each $\cP_t$ is self adjoint, so that $\cP_t^\dagger = \cP_t$,
$$
\langle \nabla Y, \mathbb{M}_{\cP_t^\dagger \rho(s)} \nabla Y\rangle = \int_0^1 \tr[(\cP_t)^s\rho (\nabla Y)^*  (\cP_t)^{1-s}\rho(\nabla Y)]{\rm d}s
$$
Then by the Lieb Concavity Theorem, for each $s$
$$
\tr[(\cP_t\rho )^s (\nabla Y)^*  (\cP_t\rho)^{1-s} (\nabla Y)] \geq e^{-t} \tr[\rho^s (\nabla Y)^*\rho^{1-s}(\nabla Y)] \ , 
$$
where we have simply discarded the positive term multiplying $(1-e^{-t})$. Therefore,
$$\langle \nabla Y, \mathbb{M}_{\cP_t^\dagger \rho(s)} \nabla Y\rangle \geq e^{-t} \langle \nabla Y, \mathbb{M}_\rho \nabla Y\rangle\ .
$$
However, by \eqref{interwz},
$$
e^{-t} \langle \nabla Y, \mathbb{M}_\rho \nabla Y\rangle =  e^{t} \langle e^{-t}\nabla Y, \mathbb{M}_\rho e^{-t}\nabla Y\rangle = e^t \langle \nabla \cP_t Y, \mathbb{M}_\rho  \nabla \cP_t Y\rangle\ .
$$
Therefore, \eqref{act23BX3} is satisfied with $\lambda =1/2$.  Then by Theorem~\ref{sumthm}, we recover \eqref{CDCex1}. For the dimension dependent best constant in the entropy production inequality for the depolarizing channel, see \cite{MHFW16}.  The simple treatment 
here does not yield the best constant in low dimensions, but using the curvature-dimension conditions investigated in \cite{WZ20,WZ23}, one can do better. It remain to be seen if the method presented here can be adapted to treat other problems such as those considered in \cite{BDP23}.

\bigskip
\noindent{\bf Acknowledgement:} Work partially supported by U.S.
National Science Foundation grant  DMS 2055282. I thank Haonan Zhang for a careful reading of the first draft of this paper, and for many helpful comments.

\end{document}